\definecolor{shadecolor}{rgb}{.67,0.85,0.90}
\renewcommand*{\backref}[1]{}
\renewcommand*{\backrefalt}[4]{%
    \ifcase #1 %
    \relax 
    \or
    (page #2).%
    \else
    (pages #2).%
    \fi%
}
\newcommand{\myTitle}{A Systematic Approach to Constructing Incremental Topology Control Algorithms Using Graph Transformation}
\newenvironment{relaxedbox}{\begin{shaded*}}{\end{shaded*}}
\newcommand{\enquote}[1]{``#1''\xspace}
\newcommand{\wrt}{w.r.t.\xspace}
\newcommand{\eg}{e.g.\xspace}
\newcommand{\idest}{i.e.\xspace}
\newcommand{\vs}{vs.\xspace}
\newcommand{\refToFig}[1]{\textsf{#1}}
\newcommand{\approximately}{approx.\xspace}
\let\noarg\relax
\DeclareMathOperator{\state}{s}
\DeclareMathOperator{\weight}{w}
\newcommand{\weightOf}[1]{\ensuremath{\weight(#1)}\xspace}
\newcommand{\ktc}[0]{kTC\xspace}
\newcommand{\iktc}[0]{i{-}kTC\xspace}
\newcommand{\bktc}[0]{b{-}kTC\xspace}
\newcommand{\TC}[0]{TC\xspace}
\newcommand{\WSNs}[0]{WSNs\xspace}
\newcommand{\WSN}[0]{WSN\xspace}
\newcommand{\CE}[0]{context event\xspace}
\newcommand{\CEs}[0]{context events\xspace}
\newcommand{\lvE}{\ensuremath{e}}
\newcommand{\nodeName}[1]{\ensuremath{n_{\text{#1}}\xspace}}
\newcommand{\nodeNameLong}[1]{node~\nodeName{#1}\xspace}
\newcommand{\linkName}[1]{\ensuremath{e_{#1}\xspace}}
\newcommand{\linkNameLong}[1]{link~\linkName{#1}}
\newcommand{\ACT}[0]{\texttt{Active}\xspace}
\newcommand{\INACT}[0]{\texttt{Inactive}\xspace}
\newcommand{\UNCL}[0]{\texttt{Unclassified}\xspace}
\newcommand{\constraint}[1]{\ensuremath{C_{#1}}\xspace}
\newcommand{\constraintLong}[1]{constraint~\constraint{y}\xspace}
\newcommand{\inactiveLinkConstraintKTC}[0]{\ensuremath{C_{\textrm{i}}}\xspace}
\newcommand{\inactiveLinkConstraintKTCLong}[0]{in\-ac\-tive-link constraint \inactiveLinkConstraintKTC}
\newcommand{\activeLinkConstraintKTC}[0]{\ensuremath{C_{\textrm{a}}}\xspace}
\newcommand{\activeLinkConstraintKTCLong}[0]{ac\-tive-link constraint \activeLinkConstraintKTC}
\newcommand{\unclassifiedLinkConstraint}[0]{\ensuremath{C_{\textrm{u}}}\xspace}
\newcommand{\unclassifiedLinkConstraintLong}[0]{un\-clas\-si\-fied-link constraint \unclassifiedLinkConstraint}
\newcommand{\noParallelLinksConstraint}[0]{\ensuremath{C_{\text{no-par-links}}}\xspace}
\newcommand{\noParallelLinksConstraintLong}[0]{no-par\-al\-lel-links constraint~\noParallelLinksConstraint}
\newcommand{\noLoopsConstraint}[0]{\ensuremath{C_{\text{no-loops}}}\xspace}
\newcommand{\noLoopsConstraintLong}[0]{no-loops constraint~\noLoopsConstraint}
\newcommand{\premiseNoParallelLinks}[0]{\ensuremath{p_{\text{no-par-links}}}\xspace}
\newcommand{\premiseNoLoops}[0]{\ensuremath{p_{\text{no-loops}}}\xspace}
\newcommand{\premiseUnclassifiedLink}[0]{\ensuremath{p_{\text{u}}}\xspace}
\newcommand{\conclusionDI}[0]{\ensuremath{c_{\textrm{i}}}\xspace}
\newcommand{\physicalConnConstraint}{\ensuremath{\hat{C}_\text{p-conn}\xspace}}
\newcommand{\weakConnConstraint}{\ensuremath{\hat{C}_\text{w-conn}\xspace}}
\newcommand{\strongConnConstraint}{\ensuremath{\hat{C}_\text{s-conn}\xspace}}
\newcommand{\GT}[0]{GT\xspace}
\newcommand{\ACs}[0]{ACs\xspace}
\newcommand{\AC}[1]{\ensuremath{\text{AC}_{#1}}\xspace}
\newcommand{\ACLong}[1]{application condition~\AC{#1}}
\newcommand{\NAC}[1]{\ensuremath{\text{NAC}_{#1}}\xspace}
\newcommand{\NACs}[0]{NACs\xspace}
\newcommand{\NACia}[1]{\ensuremath{\text{NAC}_{\textrm{i},\textrm{a},#1}}\xspace}
\newcommand{\NACaa}[1]{\ensuremath{\text{NAC}_{\textrm{a},\textrm{a},#1}}\xspace}
\newcommand{\NACui}[1]{\ensuremath{\text{NAC}_{\textrm{u},\textrm{i},#1}}\xspace}
\newcommand{\PACii}[1]{\ensuremath{\text{PAC}_{\textrm{i},\textrm{i},#1}}\xspace}
\newcommand{\PACai}[1]{\ensuremath{\text{PAC}_{\textrm{a},\textrm{i},#1}}\xspace}
\newcommand{\PACui}[1]{\ensuremath{\text{PAC}_{\textrm{u},\textrm{i},#1}}\xspace}
\newcommand{\PACmoddi}[1]{\ensuremath{\text{PAC}_{\textrm{mod-w},\textrm{i},#1}}}
\newcommand{\PAC}[1]{\ensuremath{\text{PAC}_{#1}}\xspace}
\newcommand{\PACs}[0]{PACs\xspace}
\newcommand{\LHS}[0]{LHS\xspace}
\newcommand{\RHS}[0]{RHS\xspace}
\newcommand{\guardSuccess}[0]{\refToFig{[Success]}\xspace}
\newcommand{\guardFailure}[0]{\refToFig{[Failure]}\xspace}
\newcommand{\GTrule}[1]{\ensuremath{R_{#1}}\xspace}
\newcommand{\GTruleLong}[1]{\GT rule~\GTrule{#1}}
\newcommand{\GTruleNN}[0]{\GT rule\xspace}
\newcommand{\GTrulesNN}[0]{\GT rules\xspace}
\newcommand{\GToperationNN}[0]{\GT operation\xspace}
\newcommand{\GToperationsNN}[0]{\GT operations\xspace}
\newcommand{\GToperation}[1]{\texttt{#1}\xspace}
\newcommand{\activationRule}[0]{\ensuremath{R_{\textrm{a}}}\xspace}
\newcommand{\activationRuleLong}[0]{activation rule~\activationRule}
\newcommand{\inactivationRule}[0]{\ensuremath{R_{\textrm{i}}}\xspace}
\newcommand{\inactivationRuleLong}[0]{inactivation rule~\inactivationRule}
\newcommand{\unclassificationRule}[0]{\ensuremath{R_{\textrm{u}}}\xspace}
\newcommand{\unclassificationRuleLong}[0]{un\-clas\-si\-fi\-ca\-tion rule~\unclassificationRule}
\newcommand{\nodeAdditionRule}[0]{\ensuremath{R_{\textrm{+n}}}\xspace}
\newcommand{\nodeAdditionRuleLong}[0]{node addition rule~\nodeAdditionRule}
\newcommand{\nodeRemovalRule}[0]{\ensuremath{R_{\textrm{-n}}}\xspace}
\newcommand{\nodeRemovalRuleLong}[0]{node removal rule~\nodeRemovalRule}
\newcommand{\linkAdditionRule}[0]{\ensuremath{R_{\textrm{+e}}}\xspace}
\newcommand{\linkAdditionRuleLong}[0]{link addition rule~\linkAdditionRule}
\newcommand{\linkRemovalRule}[0]{\ensuremath{R_{\textrm{-e}}}\xspace}
\newcommand{\linkRemovalRuleLong}[0]{link removal rule~\linkRemovalRule}
\newcommand{\weightModificationRule}[0]{\ensuremath{R_{\textrm{mod-w}}}\xspace}
\newcommand{\weightModificationRuleLong}[0]{weight modification rule~\weightModificationRule}
\newcommand{\varWnew}[0]{\ensuremath{w_{\text{new}}}\xspace}
\newcommand{\findUnclassifiedLinkRule}[0]{\ensuremath{R_{\textrm{find-u}}}\xspace}
\newcommand{\findUnclassifiedLinkRuleLong}[0]{un\-classi\-fied-link iden\-ti\-fi\-ca\-tion rule~\findUnclassifiedLinkRule}
\newcommand{\findClassifiedLinkRule}[0]{\ensuremath{R_{\textrm{find-ai}}}\xspace}
\newcommand{\findClassifiedLinkRuleLong}[0]{classi\-fied-link iden\-ti\-fi\-ca\-tion rule~\findClassifiedLinkRule}
\newcommand{\LSM}{link state modification\xspace}
\newcommand{\LSMs}{link state modifications\xspace}
\newcommand{\basicTCOperation}[0]{\GToperation{basic-tc}}
\newcommand{\tcOperation}[0]{\GToperation{ktc}}
\newcommand{\repairUOperation}[0]{\GToperation{repair-u}}
\newcommand{\handleUnclassification}[0]{\handlerOperation{\unclassificationRule}}
\newcommand{\leftPattern}[0]{\ensuremath{p_\ell}\xspace}
\newcommand{\rightPattern}[0]{\ensuremath{p_r}\xspace}
\newcommand{\leftMapping}[0]{\ensuremath{m_\ell}\xspace}
\newcommand{\rightMapping}[0]{\ensuremath{m_r}\xspace}
\newcommand{\gluing}[1]{\ensuremath{g_{#1}}\xspace}
\newcommand{\gluingLong}[1]{gluing~\gluing{#1}}
\newcommand{\gluingIA}[1]{\ensuremath{g_{\textrm{i},\textrm{a},#1}}\xspace}
\newcommand{\gluingII}[1]{\ensuremath{g_{\textrm{i},\textrm{i},#1}}\xspace}
\newcommand{\gluingAI}[1]{\ensuremath{g_{\textrm{a},\textrm{i},#1}}\xspace}
\newcommand{\gluingAA}[1]{\ensuremath{g_{\textrm{a},\textrm{a},#1}}\xspace}
\newcommand{\gluingUA}[1]{\ensuremath{g_{\textrm{u},\textrm{a},#1}}\xspace}
\newcommand{\gluingUILong}[1]{gluing~\gluingUI{#1}}
\newcommand{\gluingsUILong}[1]{gluings~\gluingUI{#1}}
\newcommand{\GluingUILong}[1]{Gluing~\gluingUI{#1}}
\newcommand{\gluingUI}[1]{\ensuremath{g_{\textrm{u},\textrm{i},#1}}\xspace}
\newcommand{\postcondition}[1]{\ensuremath{\text{PC}_{#1}}\xspace}
\newcommand{\postconditionAA}[1]{\ensuremath{\text{PC}_{\textrm{a},\textrm{a},#1}}\xspace}
\newcommand{\postconditionII}[1]{\ensuremath{\text{PC}_{\textrm{i},\textrm{i},#1}}\xspace}
\newcommand{\postconditionUI}[1]{\ensuremath{\text{PC}_{\textrm{u},\textrm{i},#1}}\xspace}
\newcommand{\premise}[2]{\ensuremath{p^{#2}_{#1}}\xspace}
\newcommand{\conclusion}[2]{\ensuremath{c{#2}_{#1}}\xspace}
\newcommand{\conclusionPC}[1]{\ensuremath{c^{\prime}_{#1}}\xspace}
\newcommand{\conclusionPCLong}[1]{conclusion~\conclusionPC{#1}}
\newcommand{\conclusionUI}[2]{\ensuremath{c{#2}_{\textrm{u},\textrm{i},#1}}\xspace}
\newcommand{\refinementPairBox}{%
    \collectbox{%
        \setlength{\fboxsep}{1pt}%
        \fbox{\BOXCONTENT}%
    }%
}
\newcommand{\refinementPair}[2]{\refinementPairBox{#1 + #2}}
\newcommand{\refinementStep}[1]{step~(#1)}
\newcommand{\nodeVariableA}[0]{\ensuremath{\text{A}}\xspace}
\newcommand{\nodeVariableB}[0]{\ensuremath{\text{B}}\xspace}
\newcommand{\nodeVariableC}[0]{\ensuremath{\text{C}}\xspace}
\newcommand{\nodeVariableD}[0]{\ensuremath{\text{D}}\xspace}
\newcommand{\nodeVariablea}[0]{\ensuremath{\text{a}}\xspace}
\newcommand{\nodeVariablec}[0]{\ensuremath{\text{c}}\xspace}
\newcommand{\nodeVariableOne}[0]{\ensuremath{1}\xspace}
\newcommand{\linkVariableAB}[0]{\linkName{\text{AB}}\xspace}
\newcommand{\linkVariableab}[0]{\linkName{\text{ab}}\xspace}
\newcommand{\linkVariableac}[0]{\linkName{\text{ac}}\xspace}
\newcommand{\linkVariablecb}[0]{\linkName{\text{cb}}\xspace}
\newcommand{\linkVariableOneTwo}[0]{\linkName{\text{12}}\xspace}
\newcommand{\linkVariableOneThree}[0]{\linkName{\text{13}}\xspace}
\newcommand{\handlerOperation}[1]{\GToperation{handle-#1}}
\newcommand{\handlerOperationLong}[1]{handler operation \handlerOperation{#1}}
\newcommand{\restorationOperation}[1]{\GToperation{restore-#1}}
\newcommand{\violationIdentificationOperation}[1]{\GToperation{find-vi\-o\-la\-tion-#1}}
\newcommand{\violationIdentificationRule}[1]{\GTrule{\text{find-violation-}#1}}
\newcommand{\violationResolutionOperation}[1]{\GToperation{resolve-violation-#1}}
\newcommand{\eMoflon}[0]{\textsc{eMoflon}\xspace}
\newcommand{\Simonstrator}[0]{\textsc{Simonstrator}\xspace}
\newcommand{\PFS}[0]{\PFSLong}
\newcommand{\PFSLong}[0]{\textsc{PeerfactSim.KOM}\xspace}
\newcommand{\RQCorrectness}[0]{RQ1\xspace}
\newcommand{\RQCorrectnessLong}[0]{\RQCorrectness--Correctness\xspace}
\newcommand{\RQIncrementality}[0]{RQ2\xspace}
\newcommand{\RQIncrementalityLong}[0]{\RQIncrementality--Incrementality\xspace}
\newcommand{\RQPerformance}[0]{RQ3\xspace}
\newcommand{\RQPerformanceLong}[0]{\RQPerformance--Performance\xspace}
\newcommand{\RQGeneralizability}[0]{RQ4\xspace}
\newcommand{\RQGeneralizabilityLong}[0]{\RQGeneralizability--Generalizability\xspace}
\newcommand{\OK}[0]{\checkmark} 
\newcommand{\notOK}{\XSolidBrush}
\newtheorem{theorem}{Theorem}
\crefname{theorem}{Theorem}{Theorems}
\crefname{lemma}{Lemma}{Lemmata}
\newtheorem{corollary}[theorem]{Corollary}
\crefname{corollary}{Corollary}{Corollaries}
\newenvironment{proof-sketch}{\noindent{\emph{Sketch of proof.}}}{\qed}
\newenvironment{proof}{\noindent{\emph{Proof.}}}{\qed}
\newenvironment{myinparaenum}{\begin{inparaenum}[(i)]}{\end{inparaenum}}
\newcommand{\rootOfEvaluationDataset}[0]{figures/evaluation2/batchrun_2016-05-07T192839}
\newcommand{\rootOfConfiguration}[2]{\rootOfEvaluationDataset/run_b130-pexp2.0__k1.41__n#1__ws#2__ID_KTC/output/}
\newcommand{\evalParSeedCount}{\numprint{15}\xspace}
\newcommand{\evalParK}{\numprint{1.41}\xspace}
\newcommand{\evalParTCInterval}{\SI{10}{\minute}\xspace}
\newcommand{\evalParSimulationDuration}{\SI{20}{\hour}\xspace}
\newcommand{\evalParBatterySource}{\SI{130}{\joule}\xspace}
\newcommand{\evalParBatteryTarget}{\SI{100}{\kilo\joule}\xspace}
\newcommand{\evalParMeasurementsPerRun}{\numprint{119}\xspace}
\newcommand{\evalParHesitation}{\SI{99}{\percent}\xspace}
\def\cfgSmallDense{\texttt{n100w250}\xspace}
\def\cfgSmallMedium{\texttt{n100w500}\xspace}
\def\cfgSmallSparse{\texttt{n100w750}\xspace}
\def\cfgLargeDense{\texttt{n1000w1000}\xspace}
\def\cfgLargeMedium{\texttt{n1000w1500}\xspace}
\def\cfgLargeSparse{\texttt{n1000w2000}\xspace}
\newcommand{\metricIkTCVsBkTCLong}[0]{\iktc-to-\bktc \LSM ratio\xspace}
\newcommand{\metricScopeLong}[0]{scope\xspace}
\newcommand{\metricDegreeNormalizedScopeLong}{degree-normalized scope\xspace}
\newcommand{\metricAliveNodeCountLong}[0]{number of alive nodes\xspace}
\tikzset{main node/.style={circle, draw, fill=black, text=white,font=\sffamily\scriptsize, minimum size = 0.4cm, inner sep= 1pt}}
\tikzset{active/.style={->, line width=0.9pt, font=\scriptsize}}
\tikzset{inactive/.style={->, densely dotted, line width=0.9pt, font=\scriptsize}}
\tikzset{unclassified/.style={->, densely dashed,line width=0.9pt, font=\scriptsize}}
\tikzset{activeOrInactive/.style={->, dashdotted,line width=0.9pt, font=\scriptsize}}
\tikzset{any/.style={->, draw={rgb:black,5;white,6}, line width=0.9pt, font=\scriptsize}}
\tikzset{induction/.style={->, >=latex, line width=0.9pt, font=\scriptsize}}
\tikzset{rcBig/.style={rectangle,draw, fill=white, minimum height=5.25cm,text width=4.4cm, align = center}}
\tikzset{rcGraphSmall/.style={rectangle,draw, fill=white, minimum height=1.1cm,text width=4cm, align = center}}
\tikzset{rcGraph/.style={rectangle,draw, fill=white, minimum height=3.25cm,text width=4cm, align = center}}
\tikzset{rcGraphInner/.style={rectangle,draw, fill=white, minimum height=0.705cm,text width=3.55cm, align = left}}
\tikzset{rcMini/.style={rectangle,draw, fill=white, minimum height=0.1cm,text width=0.55cm, align = left}}
\tikzset{rcMiniMedium/.style={rectangle,draw, fill=white, minimum height=0.1cm,text width=0.95cm, align = left}}
\tikzset{rcMiniLong/.style={rectangle,draw, fill=white, minimum height=0.1cm,text width=1.25cm, align = left}}
\tikzset{rc/.style={rectangle,draw, fill=white, minimum height=2.65cm,text width=3.75cm, align = center}}
\tikzset{rcSmall/.style={rectangle,draw, fill=white, minimum height=0.705cm,text width=3.35cm, align = center}}
\tikzset{bigArrow/.style={fill={rgb:black,5;white,6},shape=single arrow,text width=0.75cm,text height=2.5ex}}
\tikzset{rcMedium/.style={rectangle,draw, fill=white, minimum height=3.5cm,text width=3cm, align = center}}
\tikzset{rcGraphMedium/.style={rectangle,draw, fill=white, minimum height=1.9cm,text width=2.6cm, align = center}}
\tikzset{rcShort/.style={rectangle,draw, fill=white, minimum height=1.7cm,text width=2.25cm, align = center}}
\tikzset{rcShortDouble/.style={rectangle,draw, fill=white, minimum height=1.7cm,text width=4.35cm, align = center}}
\tikzset{rcShortGraph/.style={rectangle,draw, fill=white, minimum height=1.1cm,text width=1.9cm, align = center}}
\tikzset{rcLongGraph/.style={rectangle,draw, fill=white, minimum height=2.3cm,text width=1.9cm, align = center}}
\newdefinition{my-definition}[theorem]{Definition}
\begin{document}

\title{\myTitle}

\author[es]{Roland Kluge\corref{cor1}}
\ead{roland.kluge@es.tu-darmstadt.de}
\author[tk]{Michael Stein}
\ead{michael.stein@tk.informatik.tu-darmstadt.de}
\author[es]{Gergely Varró}
\ead{gergely.varro@es.tu-darmstadt.de}
\author[es]{Andy Schürr\corref{cor2}}
\ead{andy.schuerr@es.tu-darmstadt.de}
\author[seemoo]{Matthias Hollick}
\ead{matthias.hollick@seemoo.tu-darmstadt.de}
\author[tk]{Max Mühlhäuser}
\ead{max@informatik.tu-darmstadt.de}

\address[es]{Real-Time Systems Lab, Merckstr. 25, 64283 Darmstadt, Germany}
\address[tk]{Telecooperation Group, Hochschulstr. 10, 64289 Darmstadt, Germany}
\address[seemoo]{Secure Mobile Networking Lab, Mornewegstr. 32, 64293 Darmstadt, Germany}

\cortext[cor1]{Corresponding author}
\cortext[cor2]{Principal corresponding author}

\begin{abstract}
\begin{relaxedbox}
\noindent This document corresponds to the accepted manuscript of the article
\emph{Kluge, R., Stein, M., Varró, G., Schürr, A., Hollick, M., Mühlhäuser, M.: "A Systematic Approach to Constructing Incremental Topology Control Algorithms using Graph Transformation," in: JVLC 2016.}
The URL of the formal version is \url{https://dx.doi.org/10.1016/j.jvlc.2016.10.003}.
This document is made available under the CC-BY-NC-ND 4.0 license \url{http://creativecommons.org/licenses/by-nc-nd/4.0/}.
\end{relaxedbox}    
Communication networks form the backbone of our society.
Topology control algorithms optimize the topology of such communication networks.
Due to the importance of communication networks, a topology control algorithm should guarantee certain required consistency properties (\eg, connectivity of the topology), while achieving desired optimization properties (\eg, a bounded number of neighbors).
Real-world topologies are dynamic (\eg, because nodes join, leave, or move within the network), which requires topology control algorithms to operate in an incremental way, \idest, based on the recently introduced modifications of a topology.
Visual programming and specification languages are a proven means for specifying the structure as well as consistency and optimization properties of topologies.
In this paper, we present a novel methodology, based on a visual graph transformation and graph constraint language, for developing incremental topology control algorithms that are guaranteed to fulfill a set of specified consistency and optimization constraints.
More specifically, we model the possible modifications of a topology control algorithm and the environment using graph transformation rules, and we describe consistency and optimization properties using graph constraints.
On this basis, we apply and extend a well-known constructive approach to derive refined graph transformation rules that preserve these graph constraints.
We apply our methodology to re-engineer an established topology control algorithm, \ktc, and evaluate it in a network simulation study to show the practical applicability of our approach.

\end{abstract}

\begin{keyword}
model-driven software engineering
\sep graph transformation
\sep graph constraint 
\sep topology control 
\sep static analysis 
\sep correct by construction
\end{keyword}

\maketitle 

\pagebreak

\section{Introduction}
\label{sec:introduction}

Topology control (\TC) is an important research area in the wireless network communication domain.
TC aims at adapting the topology of wireless networks to optimize, for instance, the total power consumption, while maintaining crucial constraints of the topology (\eg, connectivity)~\cite{Santi2005,YMG08,Wang08,Sun2010,Li2013}.
A \TC algorithm typically works by 
\begin{inparaenum}
\item first selecting a subset of the links of the original topology so that all required constraints are fulfilled, and 
\item then adjusting the transmission power of each node to reach its farthest neighbor across one of these selected links.
\end{inparaenum}
In realistic settings, context events such as movement of sensor nodes continuously modify the structure of a topology.
Therefore, a \TC algorithm should operate in an incremental way by efficiently updating only the affected subset of links based on the occurred context events.

The development of a \TC algorithm is performed by highly skilled and experienced professionals.
The development process usually starts with an informal specification of the basic properties of a \TC algorithm.
This informal description is then supplemented by a formal specification using a theoretically well-founded framework such as graph theory~\cite{SWBM12,WZ04,LHL05} or game theory~\cite{MS08,SWKC12}, which allows to prove that the algorithm preserves all required constraints.
The first evaluation of a \TC algorithm is typically carried out using a network simulator, which may be succeeded by a second evaluation in a testbed environment, \idest, on real wireless devices.
Both types of evaluation require an implementation of the \TC algorithm in one or---most often---two programming languages, such as Java or MATLAB for the simulation and C or C++ for the testbed evaluation~\cite{Papadopoulos2016}.
This means that, in the end, the (hopefully) same \TC algorithm is represented in two or three more or less completely different representations.
In research publications, often only the formalization (along with the proofs of correctness and optimality) and a pseudo-code representation of the \TC algorithm is given, while the implementations are typically omitted.
Still, even for skilled researchers, it may be difficult to verify that the pseudo code is a valid implementation of the formal specification.

In the following, we illustrate this by means of the Cooperative Topology Control Algorithm (CTCA), proposed by Chu and Sethu in an IEEE INFOCOM paper in 2012~\cite{XH12}.
The authors first give a graph-based intuition of the proposed \TC algorithm, which shall lead to an improved distribution of the nodes' lifetimes in a wireless sensor network~\cite[Sec.~III]{XH12}.
The resulting goals are then formalized as so-called ordinary potential game~\cite{Monderer1996}, which allows to prove that the proposed algorithm eventually leads to a stable and optimal state of the network.
The authors present an implementation of their algorithm in pseudo code, which is 83 lines long, distributed across four listings, and enriched with network-specific aspects such as communication message exchange;
all these aspects make it highly non-trivial to understand the correspondences to the game-theoretic formalization~\cite[Sec.~V]{XH12}.
In a simulation-based evaluation, the authors compare CTCA with other state-of-the-art \TC algorithms.
Unfortunately, no details about the simulation platform are given~\cite[Sec.~VI]{XH12}.
To the best of our knowledge, no testbed evaluation of CTCA has been performed yet.

While the previous example considers only one of the many existing \TC algorithms, experience shows that it is (at least partly) representative.
The example illustrates an obvious and prevalent gap between the formal specification, which serves for proving important properties of the algorithm, and the implementation, which serves for assessing the \TC algorithm.
Due to this gap, it inherently remains unclear whether the evaluated implementation of a \TC algorithm fulfills the properties that have been proved based on the specification. 

This is especially true for the case where an incrementally working TC algorithm is required. 
The transition from a batch \TC algorithm, which takes a complete topology as input and produces a modified (optimized) complete topology as output, to an incremental version, which takes an arbitrary set of topology (context) modifications as input and produces a (minimal) set of topology adaptations as output, is an error-prone process. 
Experience shows that it is extremely challenging to cover all possible combinations of topology modifications in such a way that the computed topology adaptations never violate the given set of topology constraints and optimization goals.
Contributions such as those by Zave show that even formalizations of well-known network algorithms often reveal special cases where these algorithms do not work properly~\cite{Zave2008,Zave2012}.
For a more comprehensive survey of the application of formal methods to networking algorithms, we refer the reader to~\cite{QH15}. 

\paragraph{Towards a seamless construction process for \TC algorithms}
It is the vision of our research activities as part of the collaborative research center MAKI (Multi-Mechanism Adaptation for the Future Internet%
\footnote{In German: \emph{\textbf{M}ulti-Mechanismen-\textbf{A}daption für das \textbf{k}ünftige \textbf{I}nternet}, \url{http://www.maki.tu-darmstadt.de}}) to close the gap between a carefully crafted formal specification and its derived implementation as follows.
We propose a methodology for constructing \TC algorithms starting with a concise formal specification and refining this specification step-by-step to an efficiently working implementation.
The resulting implementation is correct by construction if we can show that all refinement steps preserve the properties of the initial specification.
For this purpose, we adhere to a model-driven engineering (MDE)~\cite{VSBHH13} approach, which works as follows: 
\begin{itemize}
\item  Topologies are formalized as models that are proper instances of a common meta-model that represents all relevant properties of the considered class of topologies, \eg, link-weighted topologies.
\item The meta-model of a studied topology class is extended with a set of consistency constraints and optimization goals.
\item Model transformation rules describe all relevant (context) modifications of a topology class and the expected constraint-preserving topology adaptations of the constructed TC algorithm.
\item Code generators translate the rule-based description of a TC algorithm into an efficiently working implementation that can be used in a software simulator or a hardware testbed for evaluation purposes.
\end{itemize}

Directed or undirected graphs are commonly used to formalize the structure of communication system topologies (\eg,~\cite{SWBM12,WZ04,XH12,BRVH15}) and  TC algorithms are often sketched visually as sequences of topology graph modifications (\eg, \cite{JRSST09,KKS14}).
Therefore, graph transformation (\GT) constitutes a natural basis for developing our MDE methodology.
\GT offers a set of rule-based and declarative techniques for the high-level specification of model- or graph-manipulating algorithms with a well-defined semantics~\cite{GT:HandbookI,EEPT06}.
A variety of GT-based tools are available for formal specification and rapid prototyping purposes of specified algorithms~\cite{LAS14,Ren04,ABJKT10,GBGHS06,Taentzer00,NNZ00}. 

GT languages and tools are established representatives of the whole class of visual languages (VL).
As a consequence, our selected approach adheres to the tradition of both the VL and the MDE community to adopt visual modeling and programming languages for the high-level description of the structure and behavior of communication systems and, more generally, of distributed information systems.
Today, UML~\cite{OMG06}---an assembly of a number of previously popular VLs (\eg, state charts, message sequence charts, ROOM structure diagrams)---is a well-established visual modeling language used for MDE activities in the area of communication and distributed systems (\eg, \cite{Gomaa2001,DeWet2005, WBSO07}).
Languages like eMoflon~\cite{LAS14} or MechatronicUML~\cite{SW10} even integrate GT concepts for dynamic communication topology manipulation purposes with UML-like activity, class, and composite structure diagrams as well as state charts. 
Apart from these languages, the VL community has already been developing visual programming languages with well-defined syntax and semantics for distributed communication and information system construction purposes for several decades (\eg, G-Net~\cite{CDC93,NKMD96}).
A comprehensive survey of related research activities can be found in~\cite{Zhang07, MM98, CGBGB90}.

To summarize, the TC algorithm approach presented in this paper relies on the subclass of GT-based visual languages and follows the tradition of the formal program-construction-by-transformation approaches (see, \eg, \cite{BMPP89}), which have their roots in research activities of the 1970s like the Munich project CIP (computer-aided intuition-guided programming) and are today part of the vision of model-driven engineering activities~\cite{FR07}.

Model constraints may be used for specifying required or forbidden properties of models.
Visual graph constraints have been introduced by the \GT{} community as a means to characterize classes of graphs in a formal and declarative way~\cite{EEPT06,HP05,HW95}.
For particular classes of graph constraints, formal refinement algorithms have been proposed that take a set of \GT{} rules and graph constraints as input and produce a refined set of \GT{} rules that  \emph{preserve} the given constraints~\cite{HW95,HP05,DV14}.
More precisely, this means that applying the refined \GT{} rules will not cause violations of the specified graph constraints.

A number of model-driven approaches for developing \TC algorithms have been proposed (\eg,~\cite{FRL09,RDBPP15,WBSO07,TSFH15,BKLLXZ04}) that target two major objectives:
The first major objective is to reduce the complexity of developing \TC algorithms from the point of view of domain experts by providing suitable (visual) abstractions, \eg, by using activity diagram-like syntax to specify the control flow of a \TC algorithm.
The second major objective is to simplify the testing and debugging by implementing the \TC algorithm against a middleware layer, which enables that the exact same algorithm may be exercised inside a software simulation and a hardware testbed environment.
However, to the best of our knowledge, none of these approaches focuses on integrating consistency properties \emph{constructively} into the development process of \TC algorithms.
Instead, the proposed approaches focus on facilitating formal analysis, debugging, automated code generation, or the deployment of \TC algorithms based on models.

\paragraph{Contribution}
In this paper, we present a model-based methodology for constructing incremental \TC algorithms that are guaranteed to preserve specified formal properties.
More specifically, we eliminate the previously described gap between specification and implementation of \TC algorithms as follows:
\begin{enumerate}
\item
We characterize the required and desired properties of output topologies of a \TC algorithm using graph constraints (as described in \cite{HW95}).
\item
We use \GT rules to describe \TC operations, which specify the possible modifications during an execution of a \TC algorithm, and context events, which specify the possible modifications by the environment.
\item 
We apply and enhance the constructive refinement approach for \GT{} rules presented in~\cite{HW95,DV14} to \TC.
Our approach is able to cope with temporarily constraint-violating rules, which contrary to \cite{HW95,DV14} may not be restricted.
\item
We exemplify our approach by constructing a representative \TC algorithm, \ktc~\cite{SWBM12}, and assess it quantitatively in a network simulation environment.
\end{enumerate}
This work is a considerable extension of~\cite{KVS15}:
First, we give a detailed explanation of our constructive approach;
second, we show how the approach may be extended to cope with context events as well;
and third, we evaluate our approach \wrt correctness, incrementality, performance, and general applicability.

\paragraph{Structure}
The structure of this paper is illustrated in \Cref{fig:architecture}:
\Cref{sec:metamodeling} introduces the basic concepts of network topologies and topology control.
\Cref{sec:constraints} introduces graph constraints as a means to specify consistency properties of topologies as well as optimization goals of \TC algorithms.
\Cref{sec:gratra} presents \GT rules as a means to specify \TC operations and context events.
\Cref{sec:rule-refinement} describes the rule refinement procedure, which combines the \GT rules of \TC and context events with the graph constraints to produce refined \GT rules that preserve the specified consistency properties of topologies and achieve the specified optimization goals.
\Cref{sec:evaluation} presents the results of the evaluation, and \Cref{sec:related-work} surveys related work.
\Cref{sec:conclusion} concludes this paper with a summary and an outlook.

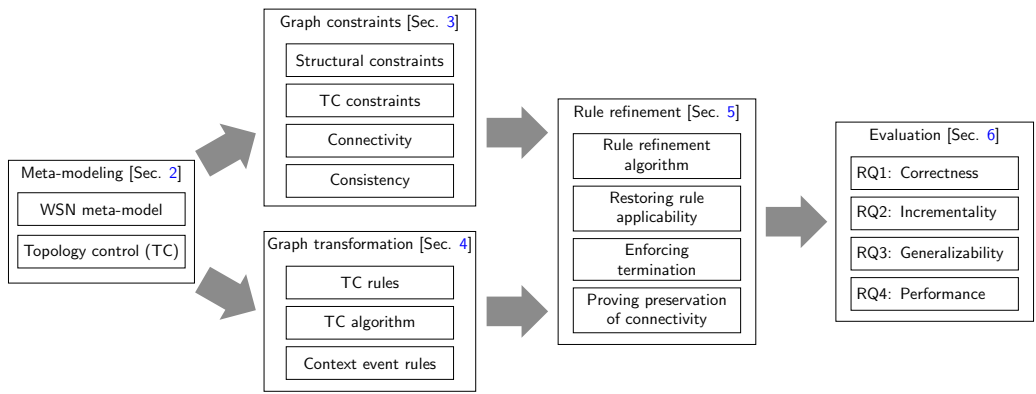
\begin{figure}
    \begin{center}
    	\resizebox{1\textwidth}{!}{%
        \begin{tikzpicture}[node distance = 3cm]
        
        \sffamily
       
        \node [rc] (MetaModelingFrameNode) at (0,0) {};
        \node [below] (MetaModelingTitleNode) at (MetaModelingFrameNode.north) {Meta-modeling \lbrack Sec. \ref{sec:metamodeling}\rbrack};
        \node [rcSmall, below=0.1cm of MetaModelingTitleNode] (MetaModelingSubNode) {WSN meta-model};
        \node [rcSmall, below=0.2cm of MetaModelingSubNode] {Topology control (\TC)};
        
        \node[bigArrow, rotate=30,right of=MetaModelingFrameNode] {};
        \node[bigArrow, rotate=-30,right of=MetaModelingFrameNode] {};
        
        \node[rc, right of=MetaModelingFrameNode, above=-1.cm of MetaModelingFrameNode, minimum height=4.25cm, text width=4.3cm, node distance =5.75cm](GraphConstraintsFrameNode) {};
        \node [below] (GraphConstraintsTitleNode) at (GraphConstraintsFrameNode.north) {Graph constraints \lbrack Sec. \ref{sec:constraints}\rbrack};
        \node [rcSmall, below=0.15cm of GraphConstraintsTitleNode] (GraphConstraintsSubNode1) {Structural constraints};
        \node [rcSmall, below=0.15cm of GraphConstraintsSubNode1] (GraphConstraintsSubNode2) {\TC constraints};
        \node [rcSmall, below=0.15cm of GraphConstraintsSubNode2] (GraphConstraintsSubNode3) {Connectivity};
        \node [rcSmall, below=0.15cm of GraphConstraintsSubNode3] {Consistency};
        
        \node[rc, right of=1, above=-5cm of MetaModelingFrameNode, minimum height=3.45cm, text width=4.3cm, node distance =5.75cm](GraphTransformationFrameNode) {};
        \node [below] (GraphTransformationTitleNode) at (GraphTransformationFrameNode.north) {Graph transformation \lbrack Sec. \ref{sec:gratra}\rbrack};
        \node [rcSmall, below=0.15cm of GraphTransformationTitleNode] (GraphTransformationSubNode1) {\TC rules};
        \node [rcSmall, below=0.15cm of GraphTransformationSubNode1] (GraphTransformationSubNode2) {\TC algorithm};
        \node [rcSmall, below=0.15cm of GraphTransformationSubNode2]  {Context event rules};
        
        \node[bigArrow, right of=GraphConstraintsFrameNode, above=-3cm of GraphConstraintsFrameNode] {};
        \node[bigArrow, right of=GraphTransformationFrameNode] {};
        
        \node[rc, right of=MetaModelingFrameNode, minimum height=5.3cm, text width=4cm, node distance =11.9cm](RuleRefinementFrameNode) {};
        \node [below] (RuleRefinementTitleNode) at (RuleRefinementFrameNode.north) {Rule refinement \lbrack Sec. \ref{sec:rule-refinement}\rbrack};
        \node [rcSmall, below=0.15cm of RuleRefinementTitleNode] (RuleRefinementSubNode1) {Rule refinement\\algorithm};
        \node [rcSmall, below=0.15cm of RuleRefinementSubNode1] (RuleRefinementSubNode2) {Restoring rule applicability};
        \node [rcSmall, below=0.15cm of RuleRefinementSubNode2] (RuleRefinementSubNode3) {Enforcing\\termination};
        \node [rcSmall, below=0.15cm of RuleRefinementSubNode3]  {Proving preservation\\of connectivity};
        
        \node[bigArrow, right=0.2cm of RuleRefinementFrameNode] {};
        
        \node[rc, right of=MetaModelingFrameNode, minimum height=4.3cm, text width=4cm, node distance =17.85cm](EvaluationFrameNode) {};
        \node [below] (EvaluationTitleNode) at (EvaluationFrameNode.north) {Evaluation \lbrack Sec. \ref{sec:evaluation}\rbrack};
        \node [rcSmall, below=0.15cm of EvaluationTitleNode,align=left] (EvaluationSubNode1) {RQ1: Correctness};
        \node [rcSmall, below=0.15cm of EvaluationSubNode1,align=left] (EvaluationSubNode2) {RQ2: Incrementality};
        \node [rcSmall, below=0.15cm of EvaluationSubNode2,align=left] (EvaluationSubNode3) {RQ3: Generalizability};
        \node [rcSmall, below=0.15cm of EvaluationSubNode3,align=left]  {RQ4: Performance};
        
        \end{tikzpicture}}%
    \end{center}
    \caption{Overview of the proposed constructive methodology and structure of this paper}
    \label{fig:architecture}
\end{figure}

\section{Network Topologies and Topology Control}
\label{sec:metamodeling}

In this section, we introduce basic concepts of meta-modeling, wireless sensor networks and topology control, including the algorithm \ktc~\cite{SWBM12}, which serves as our running example.

\subsection{Meta-Modeling}

A \emph{meta-model} is a graph that describes the basic elements of a domain. 
A meta-model consists of \emph{classes}, which describe the entities in the domain, and \emph{associations} between classes, which describe possible connections between entities.
A class has zero or more \emph{attributes} (shown in the lower part of the class), which may have primitive or enumeration types.
Each end of an association has a descriptive \emph{role} and a \emph{multiplicity}, which restricts the number of associations in an instance of the meta-model.
A \emph{model} is a graph that represents a concrete instance of a meta-model.

\subsection{Topology Control and Context Events}
\label{sec:model-tc}

\emph{Topology control} (\TC{}) is the discipline of manipulating the topology of a network to achieve optimize goals while preserving a set of consistency properties.
In this paper, we focus on \TC{} for wireless sensor networks (\WSNs{}).
A \WSN{} network \emph{topology} consists of a large number of battery-powered sensor \emph{node}s that collectively perform a dedicated task, \eg, data collection, environmental monitoring, or movement tracking~\cite{YMG08}.
A node communicates with other nodes that are within its maximum transmission range via communication \emph{link}s.
Taking into account that a \WSN{} may consist of thousands of nodes, it is often infeasible to recharge the batteries of all sensor nodes.
For this reason, reducing the total energy consumption of a \WSN{} is one of the most important optimization goals of \TC{}.

A \emph{\TC{} algorithm} typically works in two steps:
In the \emph{marking step}, the \TC{} algorithm selects all links that are crucial for fulfilling the consistency and optimization requirements.
In the \emph{adaptation step}, each node may reduce its transmission power in a way that it is still able to reach its farthest neighbor within the set of selected links.
In practice, the adaptation step may also enforce additional properties if they are not ensured during the marking step, \eg, that for each selected link, its reverse link is also treated as selected.
The focus of constructing a \TC{} algorithm lies on the marking step because the adaptation step is typically performed by the concrete application that uses the output topology of the \TC algorithm.

To represent the selection state of the edges $E$ in a topology, we introduce the state function $s: E \to \{\ACT, \INACT, \UNCL\}$:
A link is in state \emph{active} (\idest, $\state(e)= \ACT$) if it is selected by the \TC{} algorithm and \emph{inactive} (\idest $\state(e) = \INACT$) if not.
A link is \emph{unclassified} (\idest $\state(e) = \UNCL$) if the \TC{} algorithm has not made a decision about it, yet.

\paragraph{Context Events}

The topology of realistic networks is \emph{dynamic}, \idest, it is continuously modified by the environment, which we model using the following five types of  \emph{context events} (abbreviated as CEs).
\begin{itemize}
    \item \emph{Node addition}: A new node may appear, \eg, because it replaces a deceased node or because it has been recharged.
    
    \item \emph{Node removal}: A node may disappear, \eg, because it runs out of energy.
        
    \item \emph{Link removal}: A link may disappear completely, \eg, if the weight of its incident nodes exceeds the maximum transmission range.
    
    \item \emph{Link addition}: A link may (re-)appear, \eg, if two nodes converge so that the weight between them drops below the maximum transmission range.
    
    \item \emph{Link-weight modification}: The weight of a link (representing, \eg, the distance of its incident nodes) may change, \eg, if its incident nodes move.
\end{itemize}

\paragraph{Incrementality}
In a typical application scenario, a \TC{} algorithm optimizes the entire topology initially.
Afterwards, a number of \CEs{} modify the topology.
The modified topology may then be repaired by executing the \TC{} algorithm in one of two ways:
A \emph{batch \TC{} algorithm} neglects the concrete \CEs{} that led to the current situation and optimizes the topology from scratch.
This is equivalent to unclassifying all links whenever a \CE{} occurs.
An \emph{incremental \TC{} algorithm} takes the occurred \CEs{} into account and repairs the topology based on this knowledge.
This is equivalent to storing the link state for each link and only unclassifying those links that are affected by a \CE{}.
For this reason, an incremental \TC algorithm consists of two main parts: the actual \emph{\TC algorithm} that implements the optimization logic of the \TC algorithm and \emph{\CE handler}s that react appropriately to \CEs.
Typically, each type of \CE (\eg, link removal) is handled by a dedicated \CE handler.
Note that it is desirable to develop incremental \TC{} algorithms because the extent of \CEs{} is typically small compared to the network size.

\subsection{Network Topology Meta-Model}

Like the common practice in the \WSN community, we model \WSN{} topologies as attributed, directed graphs.
To simplify the presentation, we apply the meta-model of link-weighted graphs as shown in \Cref{fig:metamodel-topology}.
A \emph{topology} consists of \emph{node}s, which are connected via directed \emph{link}s.
A \emph{link} \lvE{} has a real-valued \emph{weight} \weightOf{\lvE}, which represents its cost (\eg, the physical distance of its incident nodes or its latency).
To simplify the presentation in this paper, the meta-model elements contain only essential attributes that are required to model link-weighted topologies.
In fact, several state-of-the-art \TC algorithms in the \WSN community (implicitly or explicitly) apply this model (\eg, XTC~\cite{WZ04}, RNG~\cite{Karp2000}, GG~\cite{Wang08}, LMST~\cite{LHL05}).
The \emph{state} $\state(e)$ of a link \lvE{} may be either \ACT{}, \INACT{}, or \UNCL.
In concrete syntax, the state $\state(e)$ of a link $e$ is represented by %
\begin{inparaenum}
    \item a solid line if $\state(e) = \ACT$,
    \item a dashed line if $\state(e) = \UNCL$,
    \item a dotted line if $\state(e) = \INACT$,
    \item a mixed solid-dotted line if $e$ is \emph{classified}, \idest, $\state(e) \in \{\ACT,\INACT\}$, and
    \item a gray solid line if $e$ is in an arbitrary state, \idest, $\state(e) \in \{\ACT,\INACT,\UNCL\}$.
\end{inparaenum}
In concrete topology models, only the first three syntax elements (\idest, solid, dotted, and dashed lines) may be used, while the latter two syntax elements (\idest, mixed solid-dotted and gray lines) may be applied, for instance, to characterize or represent sets of topology models.
A \emph{link state modification} (abbreviated as \LSM) is the action of changing the state of a particular link.
The \emph{size of a topology} is the sum of its node count and link count.
The \emph{active-link subtopology} of a topology consists of all active links in this topology.
Analogously, we define inactive-link, unclassified-link, and classified-link subtopologies.
For conciseness, we refrained from explicitly representing instances of the class \refToFig{Topology} in concrete syntax;
in the following, we assume that only one instance of \refToFig{Topology} exists and that each node and link is connected to this topology via a \refToFig{nodes-topology} and \refToFig{links-topology} association, respectively.

\Cref{fig:small-sample-topology} shows a sample topology of size \numprint{9} consisting of the three nodes \nodeName{1}, \nodeName{2}, and \nodeName{3}, which are connected by the six directed links \linkName{12}, \linkName{21}, \linkName{13}, \linkName{31}, \linkName{23}, and \linkName{32}.
The name of a link encodes its source and target node, \idest, link \linkName{12} starts at source \nodeNameLong{1} and ends at target \nodeNameLong{2}.
If a topology is undirected, we omit arrow heads, and where possible, we omit link names for improved readability.
The label of a link refers to its weight.

We say that a link is \emph{weight-maximal} (\emph{weight-minimal}) within a set of links, if its weight value is greater (smaller) than or equal to the weight values of all other links in this set;
more formally:
\begin{align*}
    e \text{ is } \left\{\begin{array}{c}\text{\emph{weight-maximal}}\\\text{\emph{weight-minimal}}\end{array}\right\} \text{\wrt\ } E_1 \subseteq E 
    &\Leftrightarrow 
    \forall e' \in E_1: \weight(e) \left\{\begin{array}{c}\geq\\\leq\end{array}\right\} \weight(e')
\end{align*}
For instance, in \Cref{fig:small-sample-topology}, the \linkNameLong{12} and \linkNameLong{21} are weight-maximal, and \linkName{13}, \linkName{31}, \linkName{23}, and \linkName{32} are weight-minimal \wrt $E$.
\begin{figure}
    \begin{center}
        \includegraphics[width=.7\textwidth]{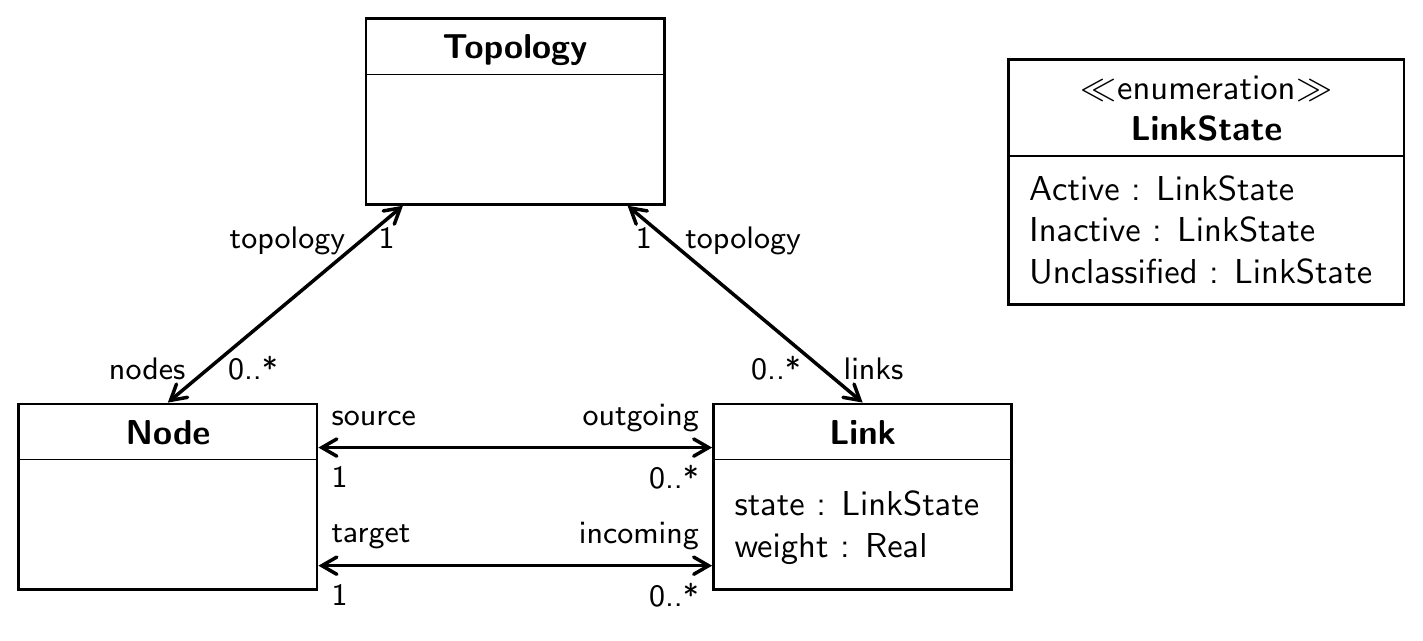}
    \end{center}
    \caption{Meta-model of link-weighted \WSN topologies}
    \label{fig:metamodel-topology}
\end{figure}
\begin{figure*}
    \begin{center}
        \subcaptionbox%
        {Abstract syntax}[\textwidth]%
        {\includegraphics[width=\textwidth]{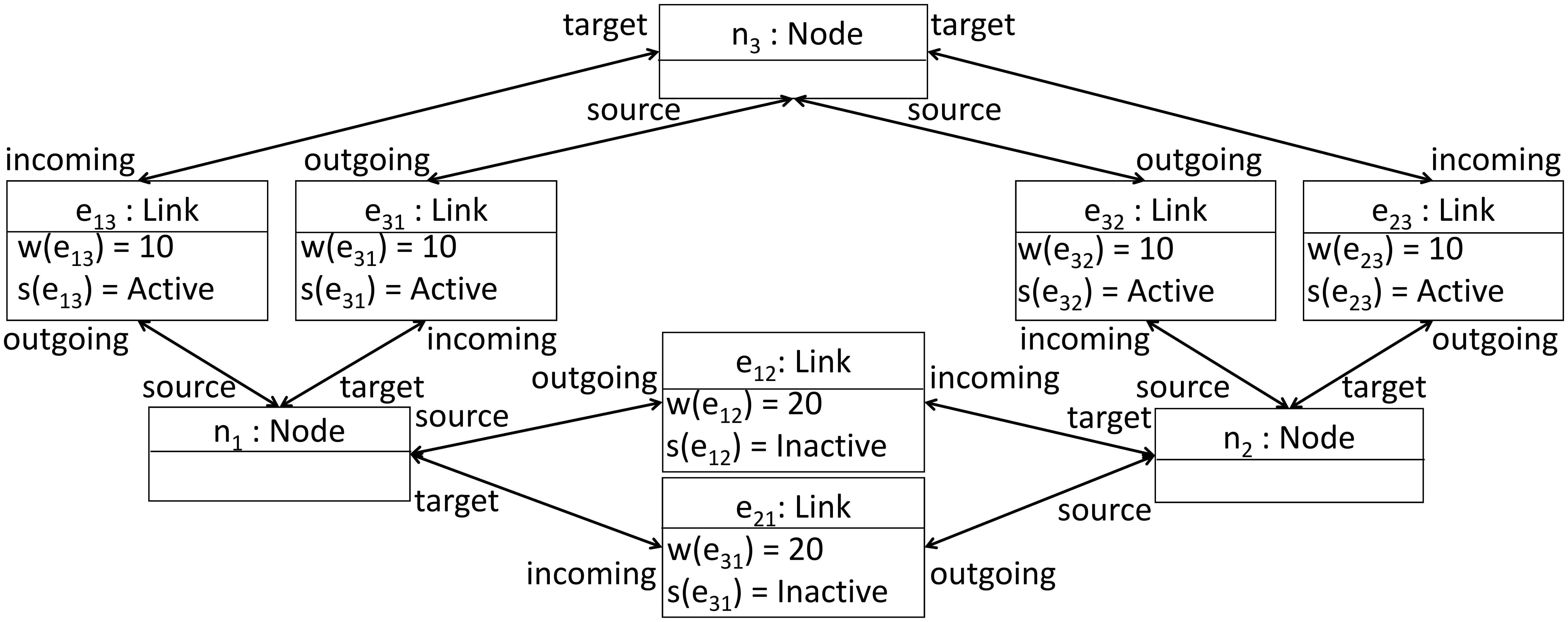}}
        
        \subcaptionbox%
        {Concrete syntax (directed)}[.3\textwidth]%
        {\includegraphics[width=.3\textwidth]{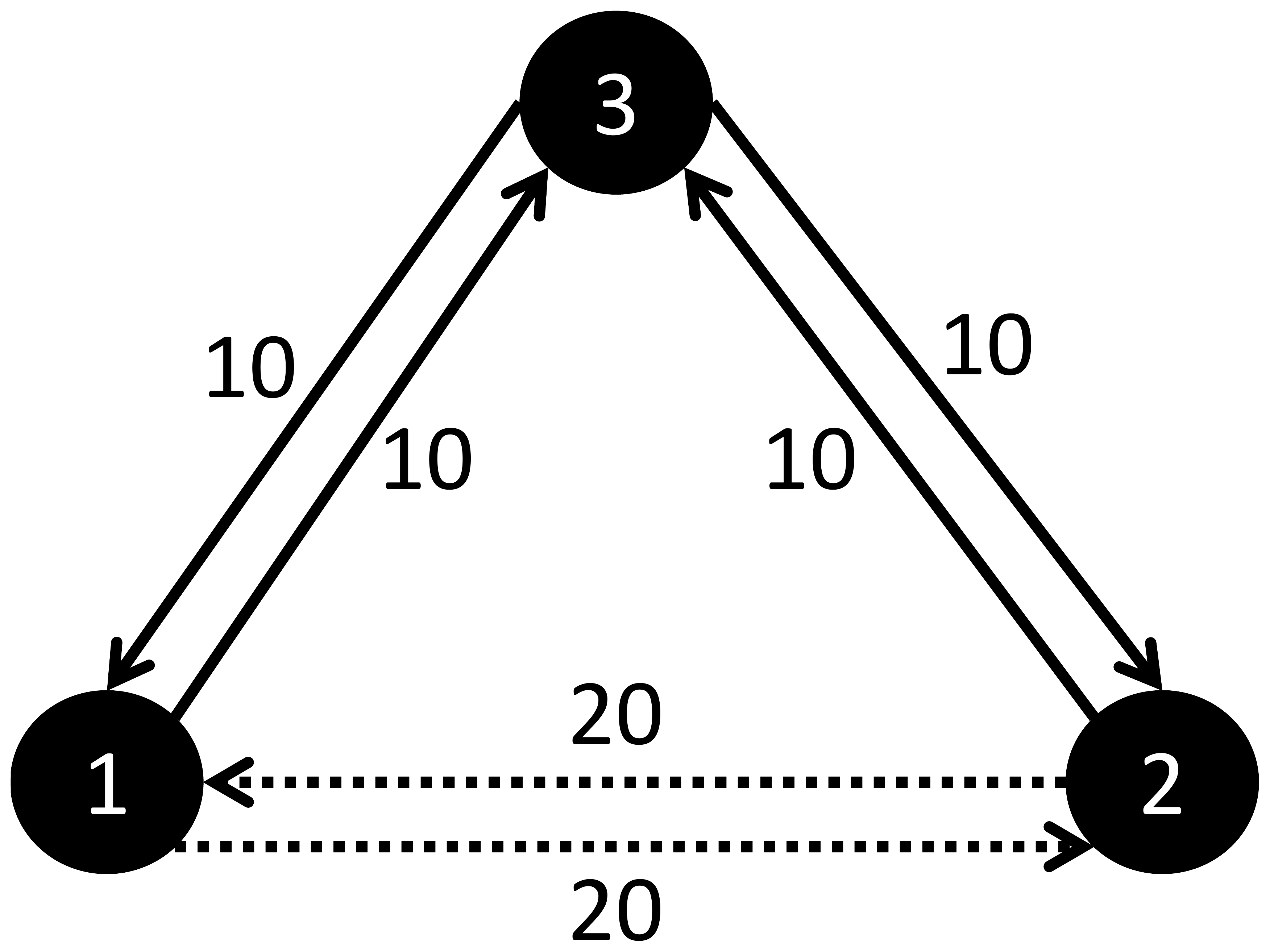}}
        \hspace{1em}
        \subcaptionbox%
        {Concrete syntax (undirected)}[.3\textwidth]%
        {\includegraphics[width=.3\textwidth]{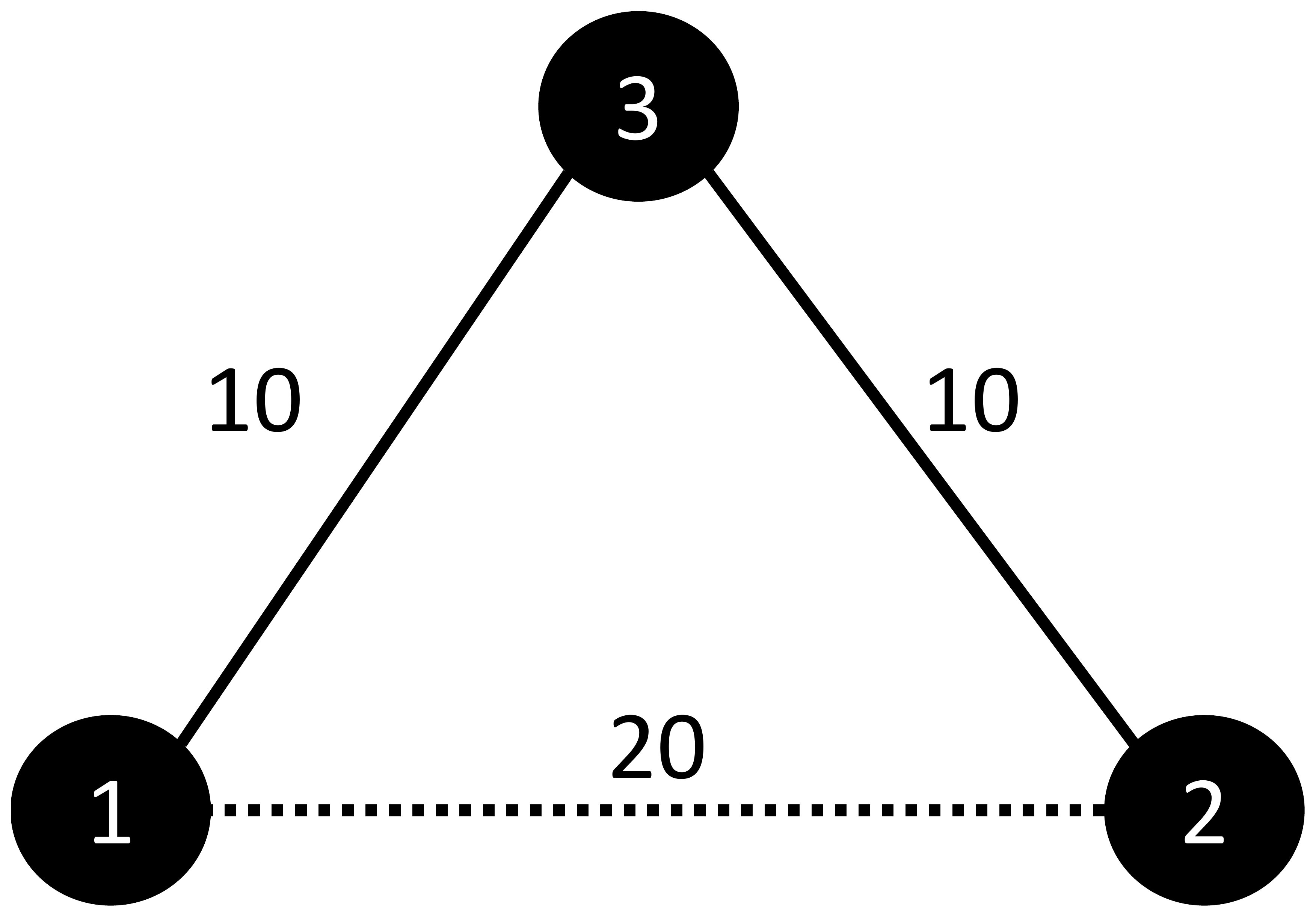}}
        \hspace{1em}
        \subcaptionbox*%
        {}[.3\textwidth]%
        {\includegraphics[width=.3\textwidth]{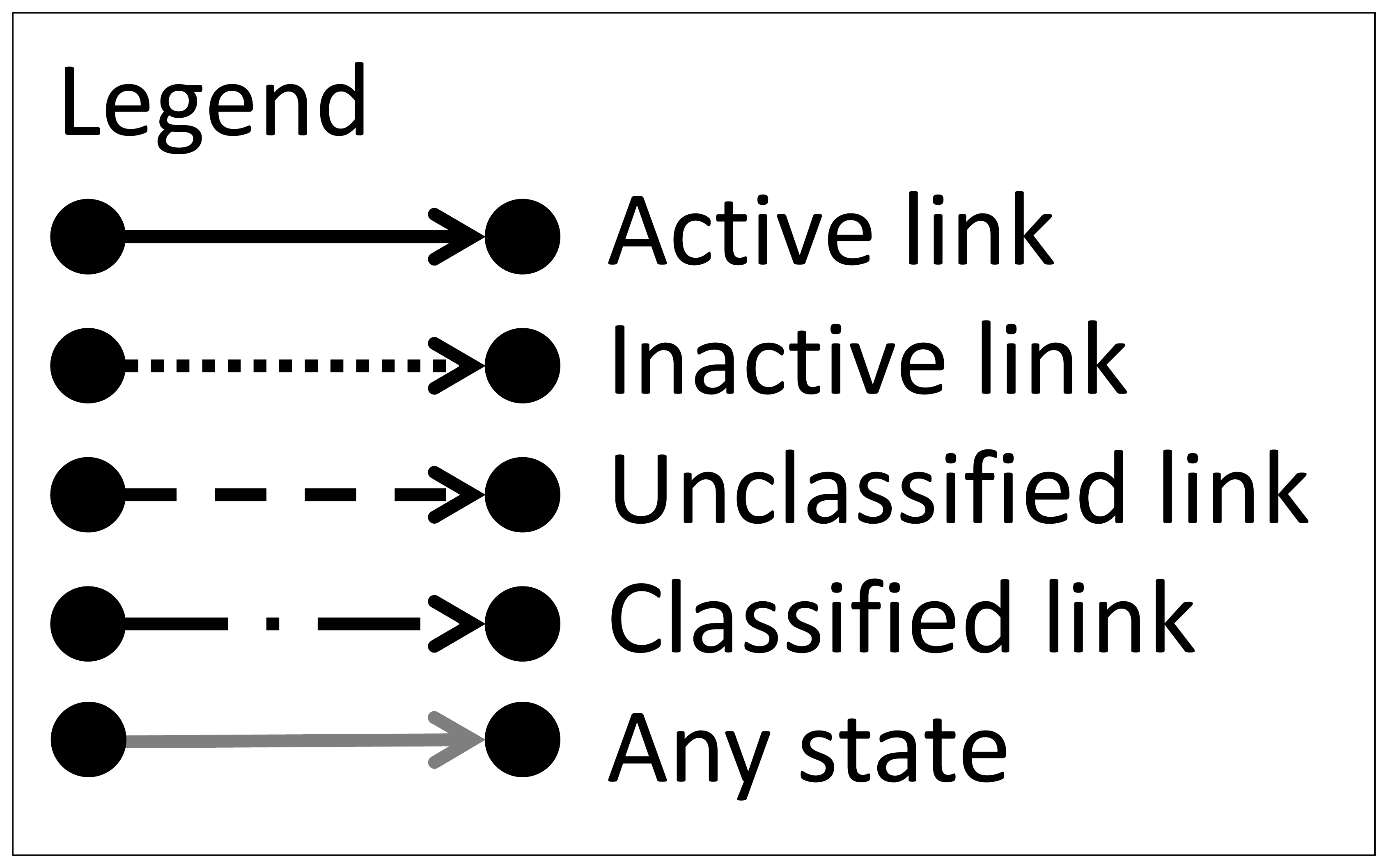}}
    \end{center}
    \caption{Triangle topology in abstract (top) and concrete (bottom) syntax}
    \label{fig:small-sample-topology}
\end{figure*}

\paragraph{Controllability and restrictability}
\label{sec:restrictability}
The \TC developer may restrict the applicability of \TC operations arbitrarily by introducing additional application restrictions.
In contrast, \CEs \emph{may not} be restricted because they represent the influence of the environment on the system.
Instead, the \TC developer has to ensure that applications of \CE rules are \emph{handled} appropriately.
For similar reasons, it makes sense not to restrict the unclassification of links because it should always be possible to \enquote{revert} the classification of a link, \eg, if it turns out to be suboptimal.
Therefore, we say that \CEs and link unclassification are \emph{unrestrictable}.

\subsection{Running Example: The Topology Control Algorithm \ktc}

As the running example of the following explanations, we chose the \TC algorithm \ktc.
\emph{\ktc}~\cite{SWBM12} is a \TC{} algorithm that, in batch mode, inactivates a link if this link is the unique weight-maximal link in some triangle and if the weight of this link is additionally at least $k$-times greater than the weight of the weight-minimal link in the same triangle.
All links that do not fulfill this constraint are activated by \ktc.

\ktc it is a typical representative of a larger class of \emph{local \TC algorithms}, which operate based on local knowledge only~\cite{Stein2016,Naor1995,Linial1990}.
This means that a node only knows about its immediate neighborhood, which is often characterized in terms of the maximum number of \emph{hops}, \idest, the path length, between the node and its known neighbors.
In the \WSN community, local knowledge is often restricted to at most 2~hops because acquiring a larger local knowledge is typically infeasible due to, \eg, memory limitations of the sensor nodes and the message overhead to collect topology information~\cite{SPSBM16}.
Examples of other WSN algorithms that operate with 2-hop local knowledge are XTC~\cite{Wattenhofer2004}, RNG~\cite{Toussaint1980}, Yao Graph~\cite{XiangYang2002}, and GG~\cite{Rodoplu1999}.

\begin{figure*}
    \begin{center}
        \subcaptionbox{Initial topology\label{fig:ktc-example-A}}[.4\textwidth]{\includegraphics[width=.4\textwidth]{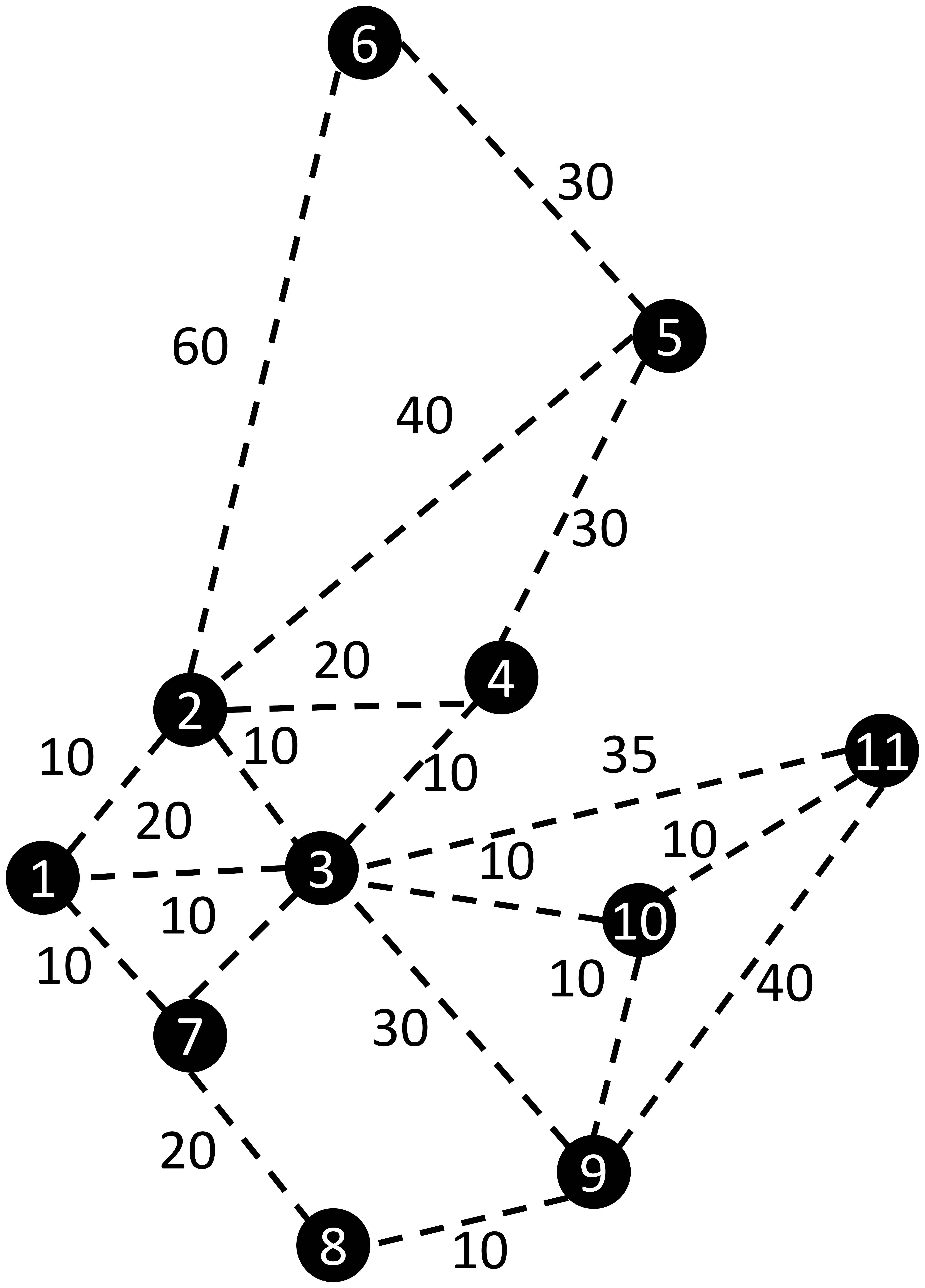}}
        \hspace{1em}
        \subcaptionbox{Topology after executing \ktc ($k{=}2$)\label{fig:ktc-example-B}}[.4\textwidth]{\includegraphics[width=.4\textwidth]{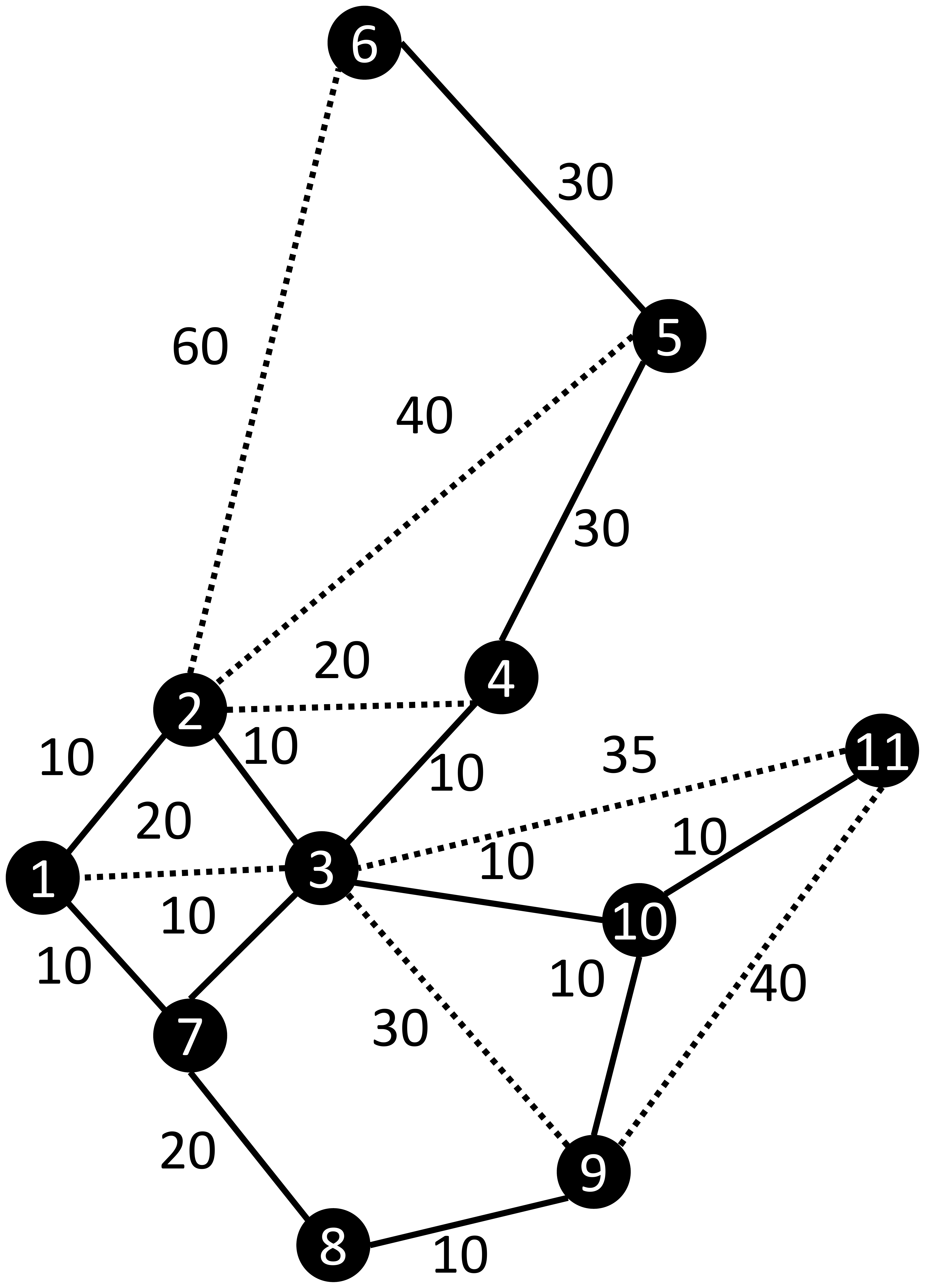}}
        
        \subcaptionbox{Topology after adding links \linkName{79} and \linkName{97} and removing node \nodeName{10}\label{fig:ktc-example-C}}[.4\textwidth]{\includegraphics[width=.4\textwidth]{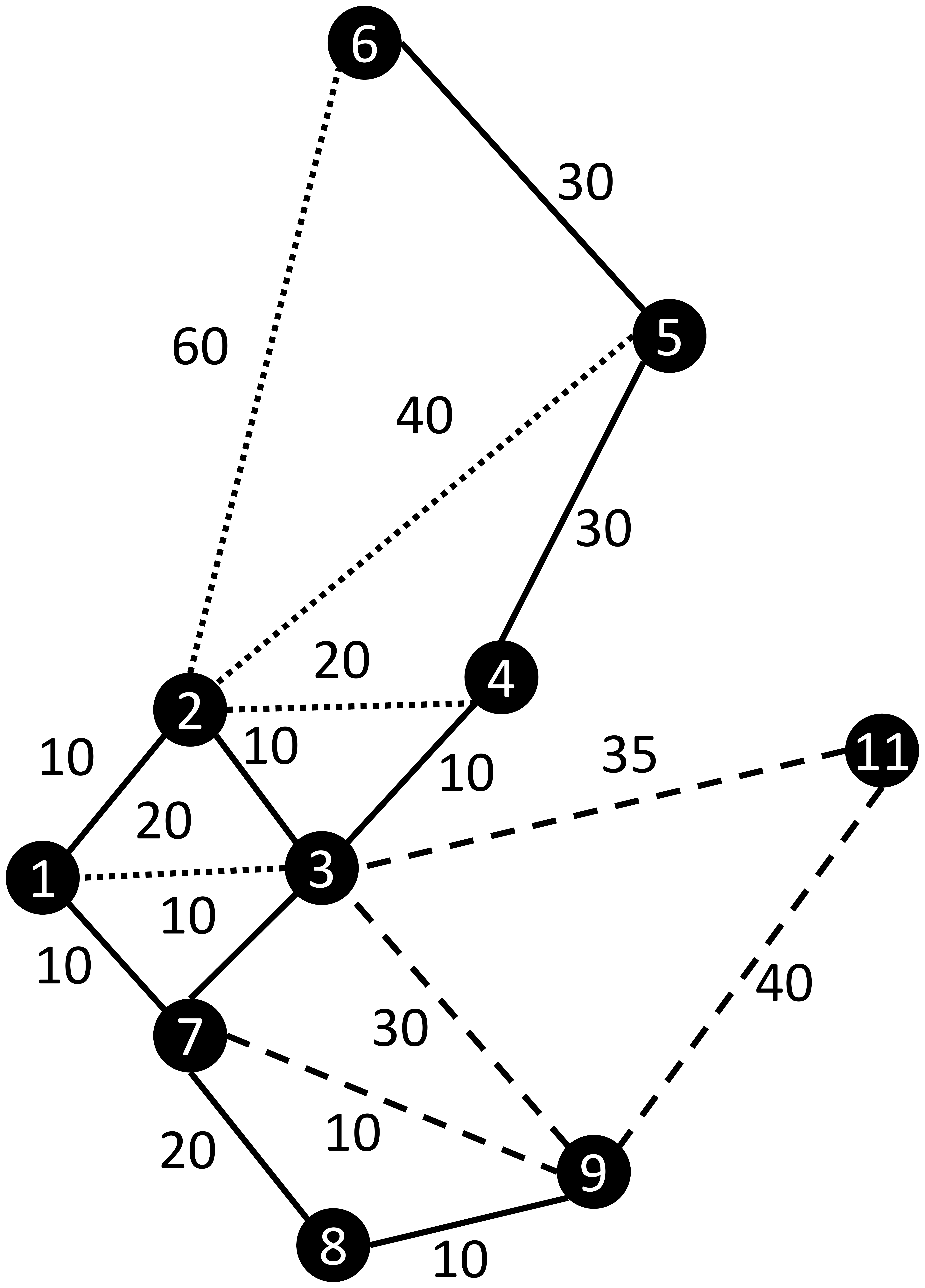}}
        \hspace{1em}
        \subcaptionbox{Topology after re-executing \ktc ($k{=}2$)\label{fig:ktc-example-D}}[.4\textwidth] {\includegraphics[width=.4\textwidth]{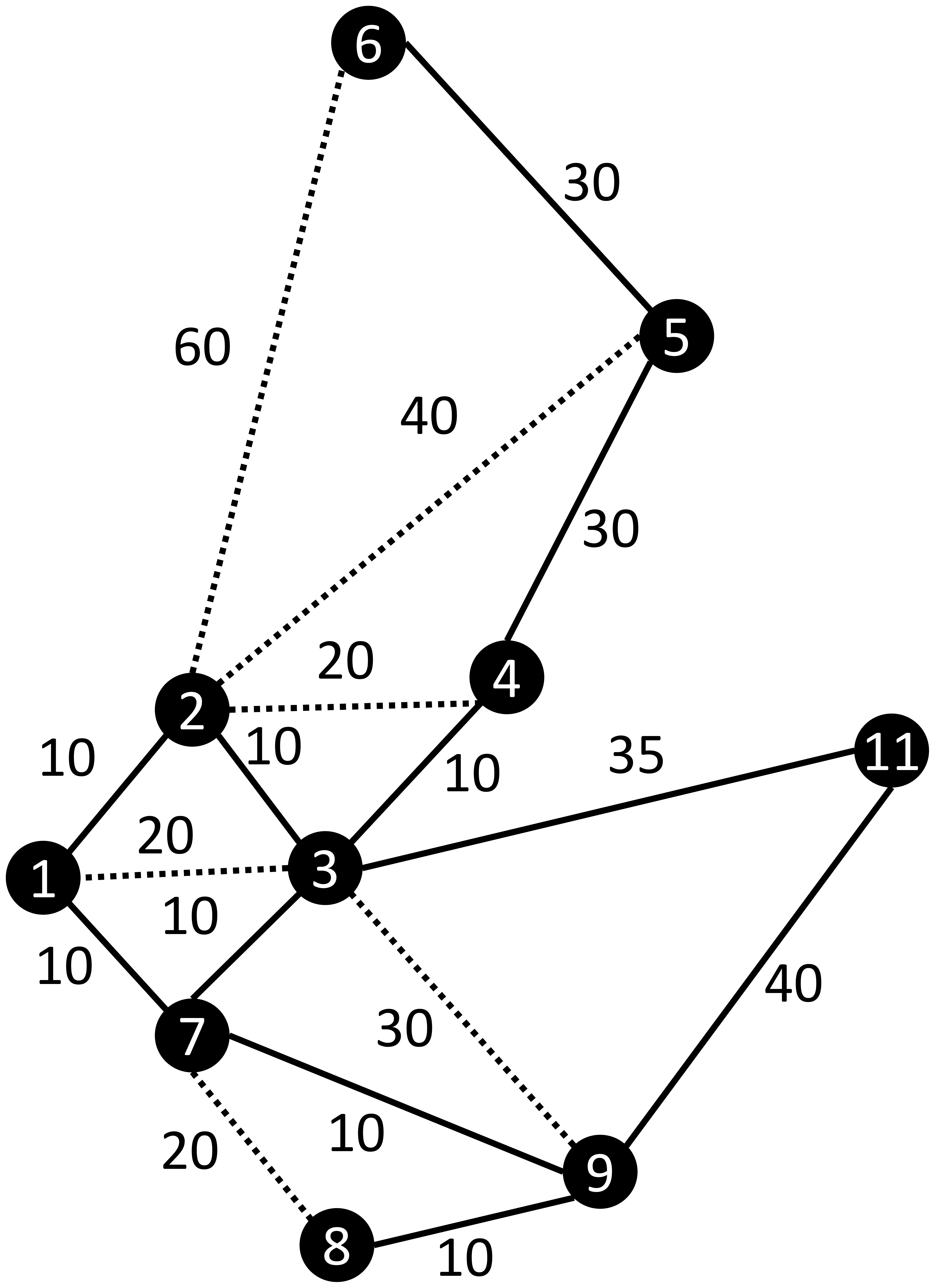}}
    \end{center}
    \caption{Incremental execution of \ktc on a sample topology (link directions omitted)}
    \label{fig:ktc-example}
\end{figure*}
\Cref{fig:ktc-example} illustrates the operation of \ktc on a sample topology consisting of 11~nodes and 38~(directed) links, which are shown as 19~undirected links for presentation purposes.
While $k$ should be in the interval $[1, \sqrt{2}]$ in realistic settings (see \cite{SWBM12} for details), we chose $k{=}2$ for presentation purposes.
Throughout this paper, we assume that \CEs{} do not interfere with the execution of \TC{}, which is realistic because \TC is typically carried out on a snapshot of the local neighborhood.
In the initial topology of our example, all links are unclassified (\Cref{fig:ktc-example-A}).
After executing \ktc, 14~links are inactive and 24~links are active (\Cref{fig:ktc-example-B}).
For instance, \linkNameLong{25} is inactivated because its weight is at least $k$-times greater than the weight of \linkNameLong{24}, which is the weight of the shortest link in the triangle consisting of the links \linkName{25}, \linkName{24}, and \linkName{45}.
Afterwards, the links \linkName{79} and \linkName{97} are added to the topology, \eg, because an obstacle between \nodeName{7} and \nodeName{9} disappeared, and node 10 is removed, \eg, because it was switched off.
These \CEs{} lead to the unclassification of the directly neighbored links \linkName{39}, \linkName{3\,11}, \linkName{79}, and \linkName{9\,11} (\Cref{fig:ktc-example-C}).
In the last step, \ktc repairs the topology incrementally by activating the links \linkName{3\,11}, \linkName{9\,11}, \linkName{79} and by inactivating the links \linkName{78} and \linkName{39} (\Cref{fig:ktc-example-D}).

\section{Specifying Consistent Topologies Using Graph Constraints}
\label{sec:constraints}

In this section, we characterize consistent and optimal output topologies of a \TC algorithm by graph constraints.
We begin with an introduction to the general concepts of graph patterns and graph constraints.
Afterwards, we describe the concrete graph constraints that specify general structural consistency properties of topologies, general requirements of output topologies of \TC algorithms in general, and the optimization goals of \ktc in particular.

\subsection{Graph Patterns and Graph Constraints}

A \emph{graph pattern} $p$ is a graph consisting of \emph{node variables} and \emph{link variables}, which are placeholders for nodes and links of a topology, plus a set of attribute constraints.
An \emph{attribute constraint} relates an attribute value of a node or link variable with attributes of the same or other node or link variables, \eg, using standard comparison operators such as \texttt{=} (equality), \texttt{!=} (inequality).

A \emph{match $m$ of a graph pattern $p$ in a topology $G$} maps node (link) variables of $p$ to nodes (links) in $G$ such that 
\begin{inparaenum}
\item all attribute constraints are fulfilled, and 
\item node variables that are incident to a link variable are mapped to the source and target nodes of the link that the link variable is mapped to.
\end{inparaenum}
Additionally, a match has to be \emph{injective}, \idest, no two node (link) variables may be mapped to the same node (link).
A pattern is \emph{satisfiable} (\emph{unsatisfiable}) if, among all possible topologies, at least one topology (no topology) contains a match of this pattern.

A \emph{graph constraint}~\cite{HW95} consists of a \emph{premise} pattern and a \emph{conclusion}, which is a (potentially empty) set of \emph{conclusion pattern}s.
The premise of a \emph{positive graph constraint} is isomorphic to a subgraph of each conclusion pattern;
the attribute constraints of the conclusion imply the attribute constraints of the premise.
A \emph{negative graph constraint} has no conclusion (denoted by $\emptyset$).
A topology \emph{fulfills} a graph constraint if any match of the premise of the constraint can be extended to a match of one of its conclusion patterns.
This definition implies that a topology fulfills a negative graph constraint if (and only if) the topology does not contain any match of the premise.
A topology \emph{violates} a constraint if it does not fulfill the constraint.

\subsection{Examples of Patterns and Graph Constraints}

In the following, we use graph constraints to characterize structurally consistent topologies in general and valid output topologies of \ktc in particular.

\paragraph{Structural constraints}

We model topologies as simple, directed graphs, \idest, a topology may neither contain parallel links nor loops.
Two edges are \emph{parallel} if their source nodes and their target nodes are identical, respectively.
A \emph{loop} is a link whose source and target node are identical.
The negative \noParallelLinksConstraintLong (shown in \Cref{fig:no-par-links}) and the negative \noLoopsConstraintLong (shown in \Cref{fig:no-loops}) describe these structural consistency properties.
The premise~\premiseNoParallelLinks of the \noParallelLinksConstraintLong matches two distinct links \linkName{ab,1} and \linkName{ab,2} that both connect the same source \nodeNameLong{a} to the same target \nodeNameLong{b}.
The premise~\premiseNoLoops of the no-loops constraint~\noLoopsConstraint matches a single link \linkName{aa} that connects a \nodeNameLong{a} to itself.
\begin{figure}
    \begin{center}
        \newcommand{\subfigwidth}{.32\textwidth}
        \subcaptionbox{\noParallelLinksConstraint\label{fig:no-par-links}}[\subfigwidth]{            \includegraphics[width=\subfigwidth]{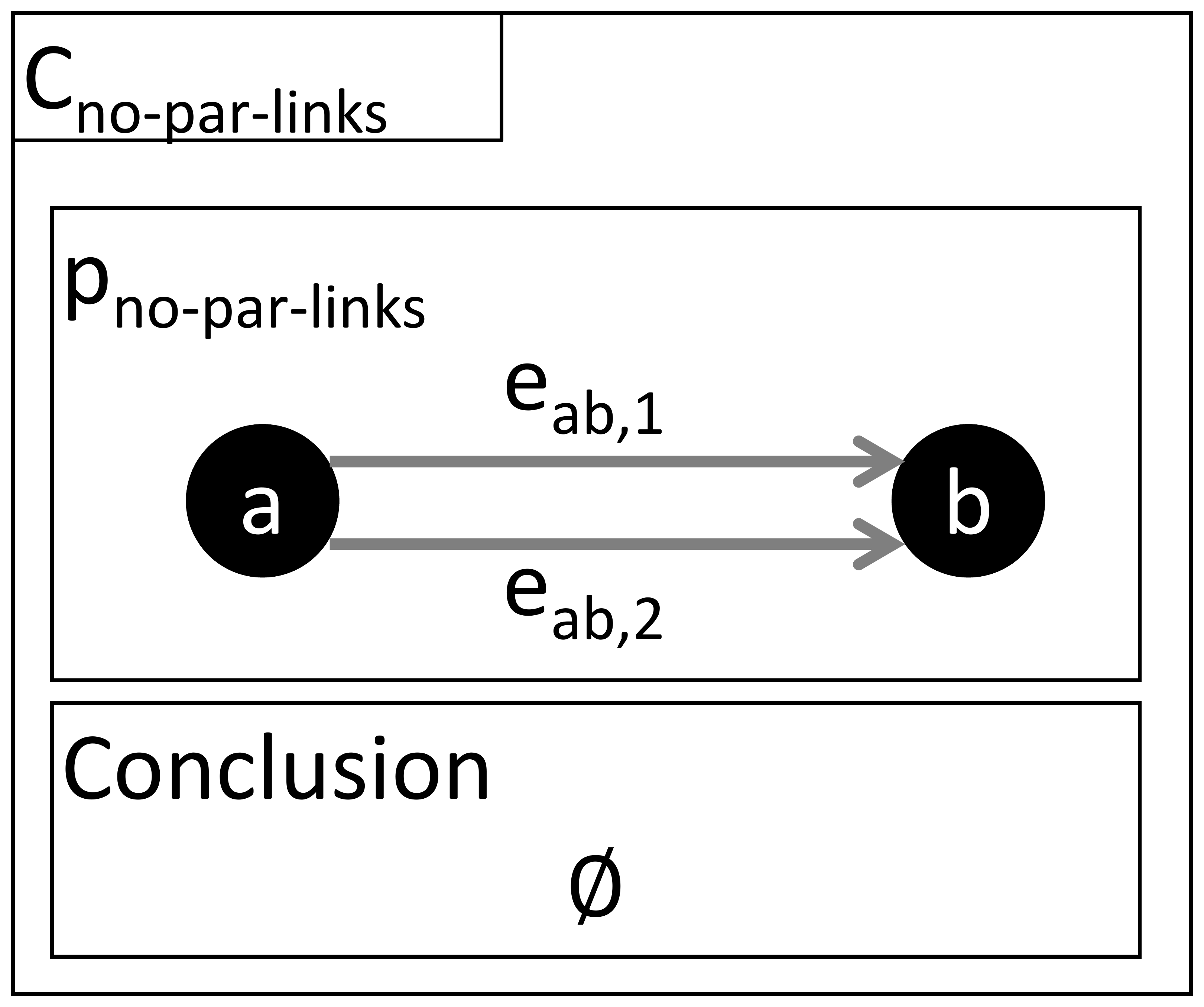}}
        \subcaptionbox{\noLoopsConstraint\label{fig:no-loops}}[\subfigwidth]{
             \includegraphics[width=\subfigwidth]{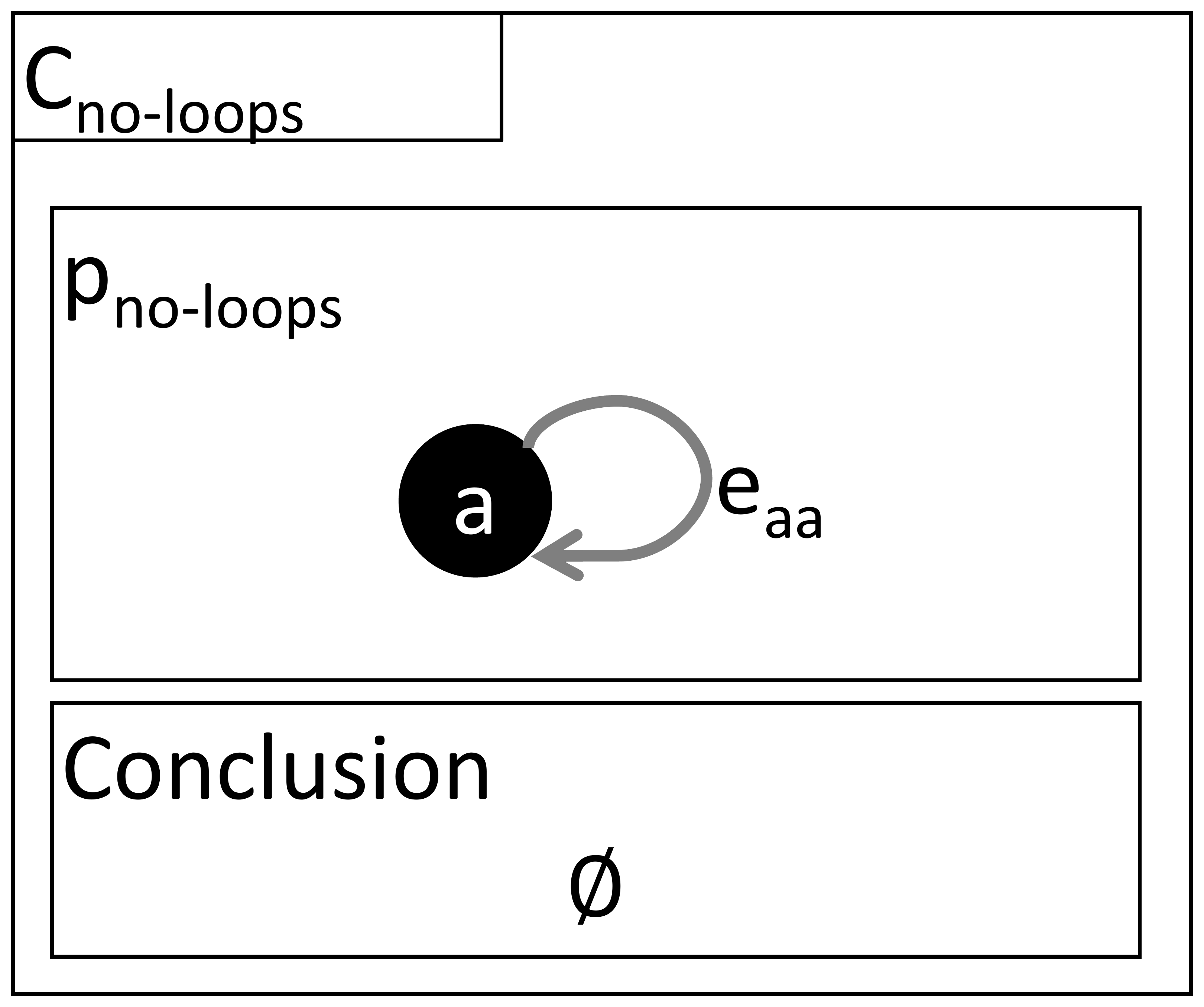}}
        \subcaptionbox{\unclassifiedLinkConstraint\label{fig:unclassified-link-constraint}}[\subfigwidth]{
            \includegraphics[width=\subfigwidth]{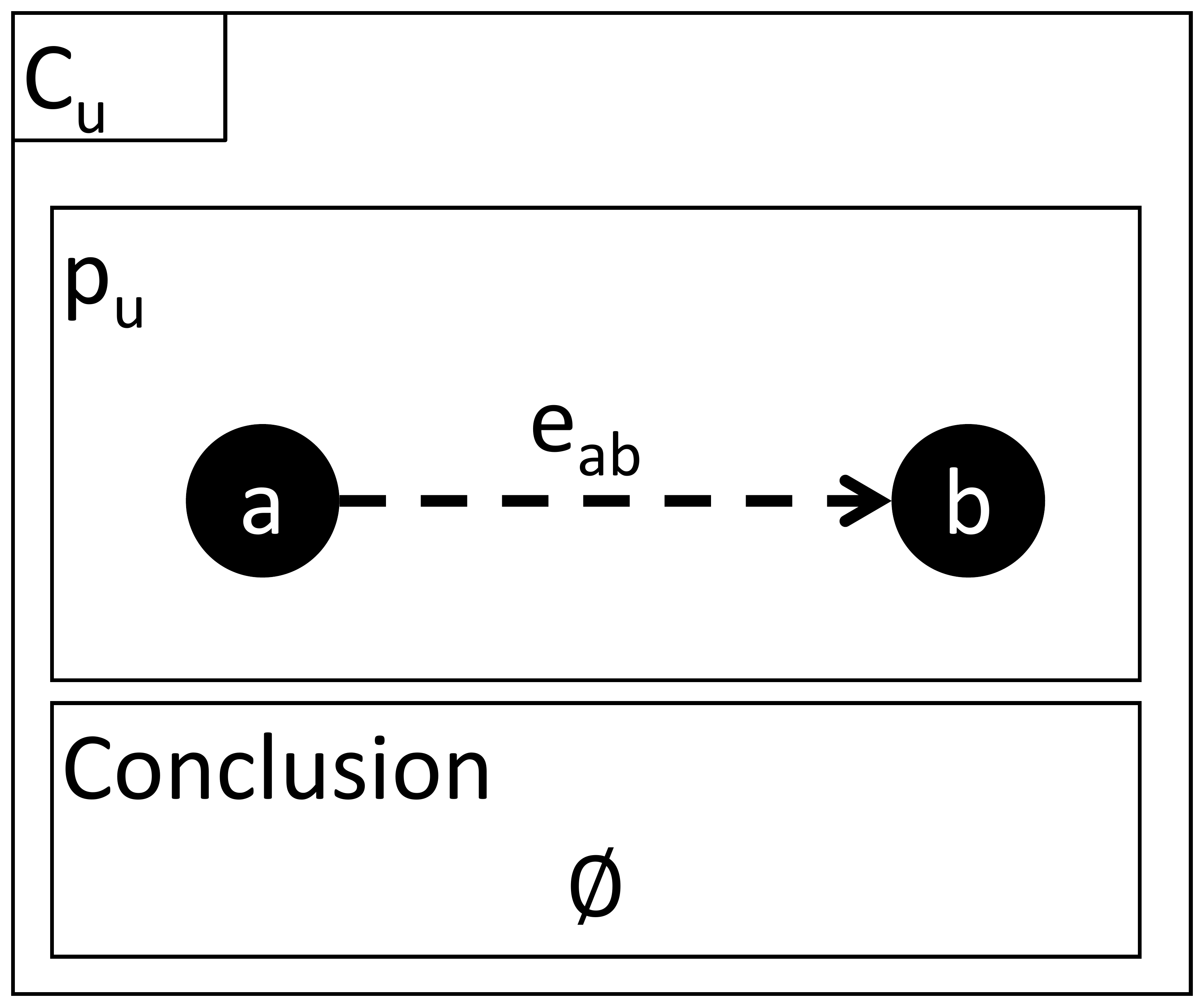}}
    \caption{Structural constraints \noParallelLinksConstraint and \noLoopsConstraint and \unclassifiedLinkConstraintLong}
\end{center}
\end{figure}
From now on, we may assume that all topologies fulfill the \noParallelLinksConstraintLong and the \noLoopsConstraintLong.
This implies that neither \premiseNoParallelLinks nor \premiseNoLoops are unsatisfiable.

\paragraph{Unclassified-link constraint}

After executing a \TC algorithm, each link must be either active or inactive, \idest, the output topology does \emph{not} contain unclassified links.
This requirement avoids situations where a \TC algorithm may immediately return without classifying all links.
The negative \unclassifiedLinkConstraintLong (shown in \Cref{fig:unclassified-link-constraint}) describes this optimization goal.
Its premise \premiseUnclassifiedLink matches each unclassified link \linkVariableab.

\paragraph{Algorithm-specific constraints}

The aforementioned graph constraints are generic in the sense that either \emph{each topology} fulfills them (structural constraints) or that \emph{each \TC algorithm} must ensure that its output topology fulfills them (unclassified-link constraint).
Additionally, each \TC algorithm has \emph{algorithm-specific graph constraints}.
In the following, we derive the algorithm-specific graph constraints of \ktc.

The specification of \ktc in~\cite{SWBM12} states that a link is inactive in the output topology of \ktc if and only if it is the unique weight-maximal link in a triangle and if its weight is at least $k$-times greater than the weight of the weight-minimal link in the triangle;
more formally:
\begin{align}\label{eqn:ktc-constraint}
\begin{split}
 \state(\linkVariableab) = \INACT 
 & \;\Leftrightarrow\; \linkVariableab \text{ is in a triangle with classified links } \linkVariableac, \linkVariablecb \\
 & \wedge\; \weight(\linkVariableab) > \max\left(\weight(\linkVariableac), \weight(\linkVariablecb)\right) \\
 & \wedge\; \weight(\linkVariableab) \geq k \cdot \min\left(\weight(\linkVariableac), \weight(\linkVariablecb)\right). 
\end{split}
\end{align}
The output topology consists of active and inactive links only because it fulfills the \unclassifiedLinkConstraintLong.
Therefore, \Cref{eqn:ktc-constraint} can be reformulated as follows:
A link is active in the output topology if and only if it is not part of such a triangle where it is the unique weight-maximal link in a triangle and if its weight is at least $k$-times greater than the weight of the weight-minimal link in the triangle;
more formally:
\begin{align}\label{eqn:ktc-constraint-active}
\begin{split}
\state(\linkVariableab) = \ACT 
& \;\Leftrightarrow\; \linkVariableab \text{ is in \emph{no} triangle with classified links } \linkVariableac,  \linkVariablecb \\
& \text{ with } \weight(\linkVariableab) > \max\left(\weight(\linkVariableac), \weight(\linkVariablecb)\right) \text{ and }\\
& \text{ with } \weight(\linkVariableab) \geq k \cdot \min\left(\weight(\linkVariableac), \weight(\linkVariablecb)\right).
\end{split}
\end{align}
The left-to-right implications of \Cref{eqn:ktc-constraint} and \Cref{eqn:ktc-constraint-active} correspond to the two \emph{\ktc-specific constraints}, shown in \Cref{fig:original-ktc-constraint} and described in the following.

The premise of the positive \emph{\inactiveLinkConstraintKTCLong} matches an inactive link \linkVariableab, and its single conclusion pattern \conclusionDI matches an inactive link \linkVariableab if it is part of a triangle (together with the classified links \linkVariableac and \linkVariablecb) where %
\begin{inparaenum}
    \item \linkVariableab has a weight greater than $\max(\weight(\linkVariableac), \weight(\linkVariablecb))$, and
    \item \linkVariableab has a weight greater than or equal to $k \cdot \min(\weight(\linkVariableac), \weight(\linkVariablecb))$.
\end{inparaenum}

The premise of the negative \emph{\activeLinkConstraintKTCLong} matches any triangle consisting of an active link \linkVariableab, and classified links \linkVariableac and \linkVariablecb where  
\begin{inparaenum}
    \item \linkVariableab has a weight greater than $\max(\weight(\linkVariableac), \linebreak \weight(\linkVariablecb))$, and
    \item \linkVariableab has a weight greater than or equal to $k \cdot \min(\weight(\linkVariableac), \weight(\linkVariablecb))$.
\end{inparaenum}
The \activeLinkConstraintKTCLong is a negative constraint because its conclusion is empty;
it is only fulfilled for topologies that contain no such triangles.
\begin{figure}
    \begin{center}
        \newcommand{\subfigwidth}{.32\textwidth}
         \subcaptionbox{\inactiveLinkConstraintKTC}[\subfigwidth]{
             \includegraphics[width=\subfigwidth]{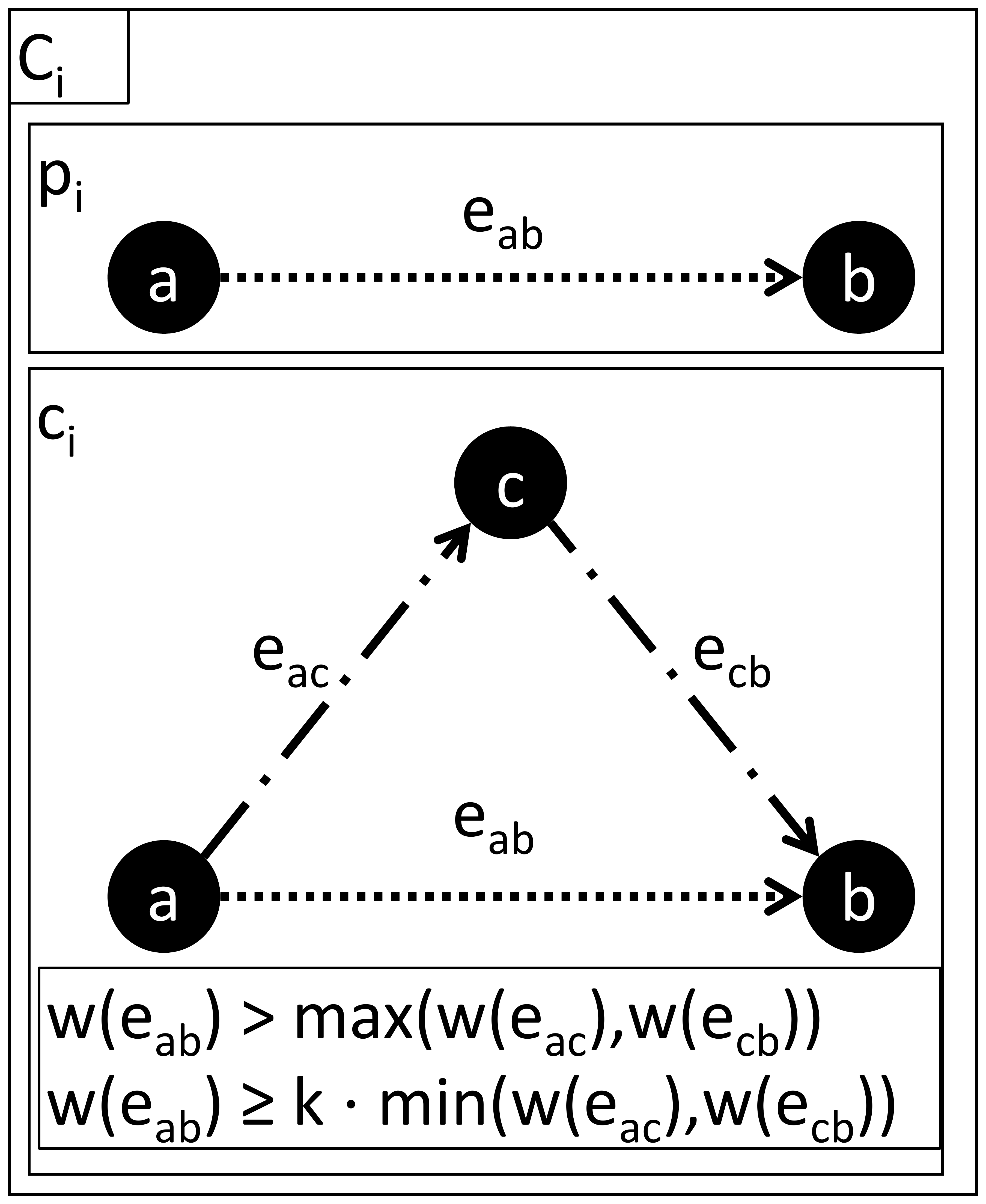}}
         \hspace{2em}
         \subcaptionbox{\activeLinkConstraintKTC}[\subfigwidth]{
             \includegraphics[width=\subfigwidth]{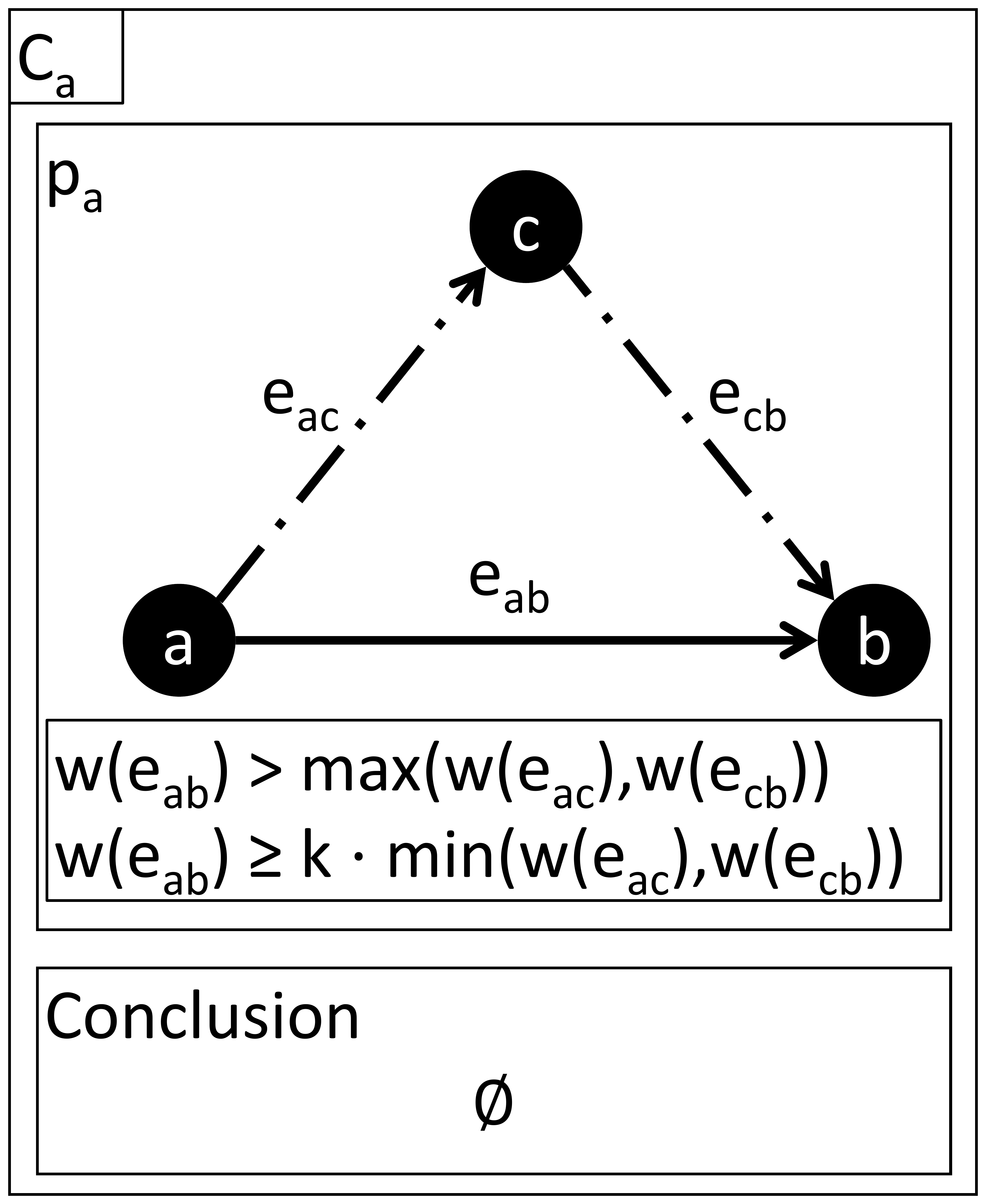}}
     \end{center}
    \caption{Positive \inactiveLinkConstraintKTCLong and negative \activeLinkConstraintKTCLong}
    \label{fig:original-ktc-constraint}
\end{figure}

To sum up, we have specified five graph constraints.
The no-parallel-links constraint \linebreak\noParallelLinksConstraint and the \noLoopsConstraintLong specify basic structural properties of topologies.
The \unclassifiedLinkConstraintLong specifies the general requirement that the output topology of a \TC algorithm should consist only of classified links,
Finally, the \ktc-specific \inactiveLinkConstraintKTCLong and  \activeLinkConstraintKTCLong jointly specify the optimization goal of the output topologies of the \TC algorithm \ktc.

\subsection{Connectivity}
\label{sec:connectivity-definition}
Connectivity is a crucial consistency constraint of \TC algorithms.
Still, connectivity cannot be described in the framework of graph constraints as defined in~\cite{HW95} because it is a non-local constraint, \idest, connectivity cannot be expressed by a finite graph constraint consisting of  premise and conclusion.
A number of theoretical frameworks for expressing non-local graph constraints exists (\eg, \cite{Fli15}), but the constructive approach will still need to be extended to these frameworks.
In this paper, we define different levels of topology connectivity, and we prove that the \TC algorithm establishes the strongest type of connectivity for weakly consistent input topologies.
\begin{my-definition}[Connectivity]
    A \emph{path} is an ordered sequence of links, where the target node of one link in the path is the source node of the next link.
    Based on this definition, we define the following three types of connectivity.
    
    A topology fulfills \emph{physical connectivity \physicalConnConstraint} if a directed path exists between any two nodes.
    The `hat' notation make connectivity constraints easily distinguishable from other graph constraints.
    
    A topology fulfills \emph{weak connectivity \weakConnConstraint} if a directed path consisting of active and unclassified links only exists between any two nodes.
        
    A topology fulfills \emph{strongly connectivity \strongConnConstraint} if a directed path consisting of active links only exists between any two nodes.
\end{my-definition}
The subsequent results immediately follow from the preceding definitions:
\begin{corollary}
    Each strongly connected topology is also weakly connected, and each weakly connected topology is also physically connected.
\end{corollary}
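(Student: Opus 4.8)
The plan is to exploit that the three connectivity notions differ only in which link states are admitted along the connecting paths, and that these admissible state sets are nested. First I would record the obvious inclusion of admissible link-state sets \{\ACT\} $\subseteq$ \{\ACT, \UNCL\} $\subseteq$ \{\ACT, \INACT, \UNCL\}: every active link trivially satisfies the weaker condition ``active or unclassified,'' and every active-or-unclassified link is in particular a link of the topology. Consequently, the class of paths admissible for \strongConnConstraint is contained in the class admissible for \weakConnConstraint, which in turn is contained in the class of arbitrary directed paths used for \physicalConnConstraint. The whole proof then reduces to reusing a witnessing path verbatim and merely relaxing the state restriction it satisfies.

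For the first implication, I would assume that a topology fulfills \strongConnConstraint and fix an arbitrary ordered pair of nodes. By definition there is a directed path between them consisting of active links only. Since each link on this path is active, it is in particular active or unclassified, so the very same path already witnesses weak connectivity for this pair. Because the pair of nodes was arbitrary, the topology fulfills \weakConnConstraint.

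For the second implication, I would assume that the topology fulfills \weakConnConstraint and again fix an arbitrary pair of nodes. The witnessing directed path consisting of active and unclassified links is, once the link-state restriction is dropped, simply a directed path between the two nodes; hence it witnesses physical connectivity for this pair, and the topology fulfills \physicalConnConstraint. Combining both implications establishes the corollary.

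I do not expect any genuine obstacle here: the statement is a pure monotonicity observation, and no path needs to be reconstructed or modified, since each witnessing path carries over unchanged to the weaker notion. The only point that warrants explicit mention is that the quantification ``between any two nodes'' must be handled uniformly for every pair, which is immediate because the argument treats an arbitrary fixed pair and never depends on the particular nodes chosen.
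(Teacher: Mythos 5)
Your proof is correct and matches the paper's reasoning: the paper states this corollary as following immediately from the definitions, and your argument—that the admissible link-state sets are nested, so any witnessing path for the stronger connectivity notion serves verbatim for the weaker one—is precisely that immediate observation spelled out. No gap, no divergence.
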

\begin{corollary}\label{thm:weak-and-strong-connectivity}
    A weakly connected topology that also fulfills the \unclassifiedLinkConstraintLong is strongly connected.
\end{corollary}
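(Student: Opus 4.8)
The plan is to prove the statement by directly unfolding the three relevant connectivity definitions and exploiting the semantics of the negative \unclassifiedLinkConstraintLong. First I would fix an arbitrary weakly connected topology $G$ that fulfills \unclassifiedLinkConstraint, and pick any two nodes of $G$. By the definition of weak connectivity \weakConnConstraint, there exists a directed path between these two nodes that consists of active and unclassified links only.

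The key step is to observe that this path in fact contains no unclassified link at all. Recall that \unclassifiedLinkConstraint is a negative graph constraint whose premise \premiseUnclassifiedLink matches every unclassified link; hence, by the semantics of negative constraints, a topology fulfills it exactly when it contains no match of the premise, \idest, when $G$ has no link $e$ with $\state(e) = \UNCL$. Consequently, every link occurring on the chosen path---being either active or unclassified by weak connectivity, but never unclassified by the constraint---must be active.

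This yields a directed path consisting of active links only between the two chosen nodes. Since the two nodes were arbitrary, such an active-only path exists between any pair of nodes, which is precisely the definition of strong connectivity \strongConnConstraint; this completes the argument. I do not expect any genuine obstacle here: the statement is a definitional corollary, and the only point that requires care is reading off the correct semantics of the negative constraint, namely that its fulfillment is equivalent to the global absence of unclassified links.
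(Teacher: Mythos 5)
Your proof is correct and matches the paper's intent exactly: the paper states this corollary as an immediate consequence of the definitions, and your argument is precisely that unfolding---the negative \unclassifiedLinkConstraintLong eliminates unclassified links globally, so every weakly connecting path of active and unclassified links is in fact a path of active links only.
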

Physical connectivity is a structural constraint:
If the underlying topology is not physically connected, a \TC algorithm has cannot produce a connected output topology.
Weak and strong connectivity are general constraints of \TC algorithms:
If the topology is at least weakly connected, messages can still be exchanged between all nodes by treating all unclassified links as if they were active.
From Corollary~\ref{thm:weak-and-strong-connectivity} follows that, after the execution of a \TC algorithm that enforces the \unclassifiedLinkConstraintLong, the output topology is strongly connected.

\subsection{Consistency of Topologies}

A topology is \emph{consistent} \wrt a set of graph constraints if it fulfills all of these graph constraints.
In our scenario, we distinguish three types of consistency:
A topology is \emph{structurally consistent} if it fulfills the topology constraints \noLoopsConstraint and \noParallelLinksConstraint.
We require that any topology is structurally consistent.
A topology is \emph{weakly consistent} if it is structurally consistent and additionally fulfills the algorithm-specific graph constraints.
A topology is \emph{strongly consistent} if it is weakly consistent and additionally fulfills the \unclassifiedLinkConstraintLong, which means that each link is known either to be active or inactive.
Strong consistency implies weak consistency, which in turn implies structural consistency.

In case of \ktc, a topology is weakly consistent if it fulfills the topology constraints and the \ktc-specific constraints \activeLinkConstraintKTC and \inactiveLinkConstraintKTC, and it is strongly consistent if it additionally fulfills the \unclassifiedLinkConstraintLong.
\Cref{tab:overview-consistency} gives an overview of the three different notions of consistency \wrt the \ktc case study.
A check mark (\OK) indicates that the fulfillment of a particular type of consistency implies the fulfillment of a particular constraint.
\begin{table} 
\begin{center}
    \caption{Overview of graph constraints constituting structural, weak, and strong consistency for the \ktc example}
    \label{tab:overview-consistency}
    \begin{tabular}{l|ccc|ccc|cc}
        \toprule
        \textbf{Consistency}&\multicolumn{3}{c|}{\textbf{Structural Constraints}}&
        \multicolumn{3}{c|}{\textbf{\TC Constraints}}&
        \multicolumn{2}{c}{\textbf{\ktc Constr.}}\\
        &\noParallelLinksConstraint&\noLoopsConstraint&\physicalConnConstraint&
            \unclassifiedLinkConstraint&\weakConnConstraint&\strongConnConstraint&
            \activeLinkConstraintKTC&\inactiveLinkConstraintKTC\\
        \midrule
        Structural &\OK&\OK&\OK&& & && \\
        Weak  &\OK&\OK&\OK&& \OK & &\OK&\OK\\
        Strong&\OK&\OK&\OK&\OK&\OK&\OK&\OK&\OK\\
        \bottomrule
    \end{tabular}
\end{center}
\end{table}

The following theorem states a useful observation:
If we assume that the initial topology is physically connected and consists exclusively of unclassified links, then this topology is weakly consistent.
If we manage to treat \CEs appropriately, we may keep the topology at least weakly consistent.
\begin{theorem}\label{thm:unclassified-topology-fulfills-weak-consistency}
    A physically connected topology consisting entirely of unclassified links is weakly consistent.
\end{theorem}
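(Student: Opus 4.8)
The plan is to verify, one at a time, each of the graph constraints that jointly constitute weak consistency for the \ktc case study, as summarized in \Cref{tab:overview-consistency}: the structural constraints \noParallelLinksConstraint and \noLoopsConstraint, physical connectivity \physicalConnConstraint, weak connectivity \weakConnConstraint, and the two \ktc-specific constraints \activeLinkConstraintKTC and \inactiveLinkConstraintKTC. Two facts are available throughout: by the standing assumption every topology fulfills \noParallelLinksConstraint and \noLoopsConstraint, and by hypothesis every link of the topology is in state \UNCL.

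First I would dispatch the structural constraints and physical connectivity with essentially no work: \noParallelLinksConstraint and \noLoopsConstraint hold by the standing assumption, and \physicalConnConstraint holds directly by hypothesis. Next I would address weak connectivity. By the definition of \weakConnConstraint it suffices to exhibit, between any two nodes, a directed path whose links are all active or unclassified. Since the topology is physically connected, a directed path exists between any two nodes; because every link is unclassified, this very path already consists of unclassified links only and is therefore admissible for weak connectivity, so \weakConnConstraint follows.

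The remaining step is to verify the two algorithm-specific constraints, and the key observation is that both hold vacuously. The premise of the positive \inactiveLinkConstraintKTCLong matches an inactive link \linkVariableab; since no link is in state \INACT, this premise has no match, and a positive constraint whose premise is never matched is fulfilled trivially. Analogously, the premise of the negative \activeLinkConstraintKTCLong matches a triangle containing an active link \linkVariableab together with classified links \linkVariableac and \linkVariablecb; since no link is in state \ACT (hence none is classified), this premise also has no match, and a negative constraint with an unmatched premise is fulfilled by definition. I expect no genuine obstacle here: the only point requiring care is the correct reading of what it means for a \emph{positive} versus a \emph{negative} constraint to be fulfilled, which in both algorithm-specific cases reduces the claim to the emptiness of the set of matches of the respective premise—an emptiness that is immediate from the total absence of classified links.
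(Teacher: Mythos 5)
Your proposal is correct and follows essentially the same argument as the paper: structural constraints hold by assumption, weak connectivity follows because the unclassified-link subtopology coincides with the (connected) physical topology, and the two \ktc-specific constraints hold vacuously since their premises require classified (inactive, respectively active) links that do not exist. The paper merely phrases this as a proof by contradiction, which is a cosmetic difference from your direct constraint-by-constraint verification.
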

\begin{proof-sketch}
    Assume to the contrary that the topology violates at least one of the constraints of weak consistency.
    The structural constraints are guaranteed to hold.
    Additionally, the topology is also weakly connected because its unclassified-link subtopology equals the physical topology, which is connected.
    If the topology were to violate the negative \activeLinkConstraintKTCLong, the topology would contain a match of the premise of this constraint, which implies that there would be three classified links.
    If the topology were to violate the positive \inactiveLinkConstraintKTCLong, the topology would have to contain a match of the premise of these constraints, which implies the presence of at least one inactive link.
\end{proof-sketch}

While structural consistency is guaranteed to hold for any topology, \eg, because a node will never use a link to communicate with itself, weak consistency may be violated by \CEs, which needs to be taken into account by the \TC developer.
Finally, invoking a \TC algorithm for a weakly consistent topology should always produce a strongly consistent topology.

\subsection{Proving Preservation of Connectivity for \ktc}
\label{sec:connectivity}

In the original paper that proposed \ktc, the authors proved that \ktc topologies are connected by showing that topologies produced by \ktc contain, \eg, a minimum spanning tree~\cite{SWBM12}.
In this paper, we directly prove that the output topology of \ktc is strongly connected if the input topology is physically connected.
We carry out the proof based on the \ktc-specific graph constraints \activeLinkConstraintKTC and \inactiveLinkConstraintKTC, only.
\begin{theorem}[Strong consistency and strong connectivity]\label{thm:strong-consistency-strong-connectivity}
    A strongly consistent and physically connected topology is strongly connected.
\end{theorem}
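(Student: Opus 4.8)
The plan is to reduce the statement to a per-link replacement argument and then stitch the replacements together along a physical path. Because the topology is strongly consistent, it satisfies the \unclassifiedLinkConstraintLong, so every link is classified, \idest, either active or inactive. Given any two nodes, physical connectivity \physicalConnConstraint supplies a directed path between them, and every link on this path is classified. Hence it suffices to prove the following replacement claim: for every classified link $e$ there exists a directed path consisting of active links only from the source of $e$ to the target of $e$. Concatenating the active-link paths obtained for the consecutive links of the physical path then yields an active-link connection between the two chosen nodes, which is exactly what strong connectivity \strongConnConstraint demands.

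I would prove the replacement claim by well-founded induction on the weight $\weightOf{e}$; this induction is well-founded because a finite topology exhibits only finitely many distinct weight values. If $e$ is active, the one-link path $e$ itself settles the claim. If $e = \linkVariableab$ is inactive, the premise of the positive \inactiveLinkConstraintKTCLong matches it, so its conclusion \conclusionDI guarantees a triangle with classified links $\linkVariableac$ and $\linkVariablecb$ satisfying $\weightOf{\linkVariableab} > \max(\weightOf{\linkVariableac}, \weightOf{\linkVariablecb})$. Thus both $\linkVariableac$ and $\linkVariablecb$ are classified and have strictly smaller weight than $\linkVariableab$, so the induction hypothesis applies and yields active-link paths from $\nodeVariablea$ to $\nodeVariablec$ and from $\nodeVariablec$ to $\nodeVariableb$. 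Concatenating them produces the required active-link path from $\nodeVariablea$ to $\nodeVariableb$, completing the induction.

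The main obstacle I anticipate lies in justifying that the triangle furnished by \inactiveLinkConstraintKTC yields a genuinely \emph{directed} detour $\nodeVariablea \to \nodeVariablec \to \nodeVariableb$, and not merely an undirected triangle on the same three nodes; this depends on the orientation of $\linkVariableac$ and $\linkVariablecb$ in the conclusion pattern \conclusionDI, which I would read off from \Cref{fig:original-ktc-constraint} (the naming convention fixes $\linkVariableac$ as $\nodeVariablea \to \nodeVariablec$ and $\linkVariablecb$ as $\nodeVariablec \to \nodeVariableb$). The remaining care point is the well-foundedness of the recursion, which is secured by the strict inequality $\weightOf{\linkVariableab} > \max(\weightOf{\linkVariableac}, \weightOf{\linkVariablecb})$: each replacement step strictly lowers the weight, and finiteness forbids infinite descent. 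Notably, neither the factor-$k$ inequality nor the negative \activeLinkConstraintKTCLong is needed; the argument rests solely on the positive \inactiveLinkConstraintKTCLong together with the \unclassifiedLinkConstraintLong. In particular, the claim directly establishes that the active-link subtopology is connected, which is consistent with \Cref{thm:weak-and-strong-connectivity}, since the absence of unclassified links makes weak connectivity \weakConnConstraint and strong connectivity coincide here.
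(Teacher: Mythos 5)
Your proposal is correct and follows essentially the same route as the paper's proof: both reduce strong connectivity to the claim that every (classified) link can be replaced by an active-link path between its endpoints, and both establish that claim by induction on link weight, using the \inactiveLinkConstraintKTCLong to obtain a strictly lighter triangle and the strict inequality $\weightOf{\linkVariableab} > \max(\weightOf{\linkVariableac}, \weightOf{\linkVariablecb})$ for well-foundedness. The only cosmetic difference is that the paper enumerates the inactive links sorted by weight and inducts on their index (with an explicit base case for the weight-minimal inactive link), whereas you phrase it as well-founded induction on the weight value itself, which absorbs that base case into the general step.
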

\begin{proof}
    Due to the assumption of physical connectivity, we know that a path of links (of arbitrary state) exists between any two nodes.
    Therefore, it suffices to show the following \emph{claim}: 
    The source and target node of each link are connected by a path of active links in the output topology.
    This trivially holds for the end nodes of active links.
    As the topology is strongly consistent, the \activeLinkConstraintKTCLong, the \inactiveLinkConstraintKTCLong, and the \unclassifiedLinkConstraintLong are fulfilled, which implies that the topology only contains active and inactive links and that each inactive link \lvE{} is part of a triangle of active or inactive links having a weight that is smaller than the weight of \lvE.
    
    By induction, we show that the claim also holds for all inactive links $e_{\textrm{i},z}, 1 \leq z \leq N_{\textrm{i}}$ with $N_{\textrm{i}} = |\{e \in E \mid \state(e) = \INACT\}|$:
    We consider the $N_{\textrm{i}}$ inactive links $e_{\textrm{i},1},\dots,e_{\textrm{i},N_{\textrm{i}}}$ of the topology ordered by weight, \idest, $\weightOf{e_{\textrm{i},x}} < \weightOf{e_{\textrm{i},y}} \Rightarrow x < y$.
            
    \emph{Induction start ($z = 1$):} The weight-minimal inactive link, \linkName{\textrm{i},1}, is part of a triangle with two links, \linkName{2} and \linkName{3}, which have a smaller weight and which are active because there is no inactive link with a less weight than \linkName{\textrm{i},1} (\Cref{fig:proof-connectivity-start}).
    Therefore, the path $[\linkName{2},\linkName{3}]$ connects the start node of \linkName{\textrm{i},1} with its target node, and the claim holds for link \linkName{\textrm{i},1}.
    
    \emph{Induction step ($z \to z + 1$):} We now consider an inactive link \linkName{\textrm{i},z+1} with $1 \leq z \leq N_\textrm{i}{-}1$, which is part of a triangle with the two links \linkName{2} and \linkName{3} because the \inactiveLinkConstraintKTCLong holds  (\Cref{fig:proof-connectivity-step}).
    Without loss of generality, we assume that \linkName{2} and \linkName{3} are inactive.
    There exist integers $s$ and $t$ smaller than $z{+}1$ such that $\linkName{2} = \linkName{\textrm{i},s}$ and $\linkName{3} = \linkName{\textrm{i},t}$.
    As the claim has been proved for all inactive links of less weight than \linkName{\textrm{i},z+1}, a path $p_2$ of active links connects the source node with the target node of \linkName{2}, and a path $p_3$ of active links connects the source node with the target node of \linkName{3}.
    The joined paths $p_2$ and $p_3$ connect the source node of \linkName{\textrm{i},z+1} to its target node, and the claim holds for link~\linkName{\textrm{i},z+1}
    \begin{figure}
        \begin{center}
            \newcommand{\subfigwidth}{.27\textwidth}
            \subcaptionbox{Induction start ($z=1$)
                \label{fig:proof-connectivity-start}}[.3\textwidth]{\includegraphics[width=\subfigwidth]{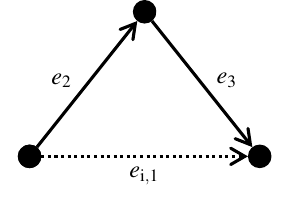}}
            \subcaptionbox{Induction step ($z \to z + 1$
                \label{fig:proof-connectivity-step})}[.3\textwidth]{\includegraphics[width=\subfigwidth]{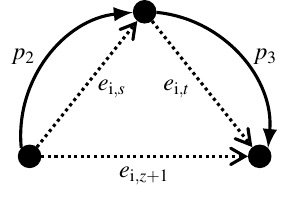}}
        \end{center}
        \caption{Sketches for the proof of \Cref{thm:strong-consistency-strong-connectivity}}
    \end{figure}
\end{proof}
\begin{theorem}[Weak consistency and weak connectivity]
    \label{thm:weak-consistency-weak-connectivity}
    A weakly consistent, physically connected topology is weakly connected.
\end{theorem}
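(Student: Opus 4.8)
The plan is to mirror the proof of \Cref{thm:strong-consistency-strong-connectivity} almost verbatim, exploiting the single structural difference between the two connectivity notions: a weak-connectivity path may use both active \emph{and} unclassified links, whereas a strong-connectivity path may use active links only. First I would invoke physical connectivity to obtain, between any two nodes, a path of links of arbitrary state. As in the proof of \Cref{thm:strong-consistency-strong-connectivity}, it then suffices to establish the following \emph{claim}: the source and target node of every link are joined by a path consisting solely of active and unclassified links. Concatenating these sub-paths along the physical path then yields an active-and-unclassified path between any two nodes, which is exactly weak connectivity.

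Next I would dispose of the easy cases. For an active link, the link itself forms an active path; for an unclassified link, the link itself forms an unclassified path; in both cases the claim holds trivially. The only remaining case is that of inactive links, and here I would reuse the weight-ordered induction of \Cref{thm:strong-consistency-strong-connectivity}: order the inactive links $e_{\textrm{i},1},\dots,e_{\textrm{i},N_\textrm{i}}$ by non-decreasing weight. Since the topology is weakly consistent, the \inactiveLinkConstraintKTCLong holds, so each inactive link lies in a triangle whose other two links are classified---hence active or inactive, never unclassified---and of strictly smaller weight. In the induction base, the two triangle partners of the weight-minimal inactive link cannot themselves be inactive and are therefore active, giving an active path between its endpoints; in the induction step, any active triangle partner is used directly while any inactive triangle partner is replaced by the active path supplied by the induction hypothesis, and joining the two partner paths connects the endpoints of $e_{\textrm{i},z+1}$. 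This produces a path of active links for every inactive link, which is a fortiori an active-and-unclassified path.

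The crucial observation---and the point where this proof departs from that of \Cref{thm:strong-consistency-strong-connectivity}---is that the \unclassifiedLinkConstraintLong is \emph{not} assumed here, so the topology may genuinely contain unclassified links. I expect the main (if modest) obstacle to be making clear that these unclassified links cause no difficulty. They never occur among the two triangle partners furnished by the \inactiveLinkConstraintKTCLong, because that constraint demands \emph{classified} partners, so the weight-ordered induction for inactive links carries over unchanged. Meanwhile the unclassified links that do exist are handled trivially by themselves, precisely because weak connectivity admits unclassified links on connecting paths. In other words, dropping the \unclassifiedLinkConstraintLong weakens the conclusion from strong to weak connectivity but leaves the inductive core of the argument intact, and the \activeLinkConstraintKTCLong plays no role in this particular proof.
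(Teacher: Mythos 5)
Your proposal is correct and takes essentially the same approach as the paper: the paper's own proof is only a two-line sketch stating that the argument is analogous to that of \Cref{thm:strong-consistency-strong-connectivity}, with the constructed paths now allowed to contain active and unclassified links. Your write-up fills in exactly what that sketch leaves implicit---in particular that unclassified links are trivially their own connecting paths and can never occur as triangle partners in the conclusion of the \inactiveLinkConstraintKTCLong, so the weight-ordered induction over inactive links carries over unchanged.
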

\begin{proof-sketch}
    The proof is analogous to the proof of \Cref{thm:strong-consistency-strong-connectivity}.
    In this case, the constructed paths may contain active and unclassified links.
\end{proof-sketch}
\begin{corollary}
    The output topology of \ktc is strongly connected if its input topology is physically connected.
\end{corollary}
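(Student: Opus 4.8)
The plan is to derive this corollary directly from \Cref{thm:strong-consistency-strong-connectivity}, which already guarantees that any strongly consistent and physically connected topology is strongly connected. Hence it suffices to establish two facts about the output topology of \ktc: first, that it is still physically connected, and second, that it is strongly consistent. Once both hold, applying \Cref{thm:strong-consistency-strong-connectivity} to the output topology yields the claim immediately.

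For physical connectivity, I would argue that \ktc performs only \LSMs, \idest, it reclassifies links to \ACT or \INACT but never adds or deletes nodes or links. Consequently, the underlying physical topology---the set of all links irrespective of their state---coincides for the input and the output topology. Since physical connectivity \physicalConnConstraint depends only on this underlying link set and not on link states, the output inherits physical connectivity from the input by assumption.

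For strong consistency, I would appeal to the construction of \ktc. Structural consistency (\noLoopsConstraint and \noParallelLinksConstraint) holds for every topology. By the specification of \ktc (\Cref{eqn:ktc-constraint} and \Cref{eqn:ktc-constraint-active}), the output fulfills both \ktc-specific constraints \activeLinkConstraintKTC and \inactiveLinkConstraintKTC, so the output is at least weakly consistent. Moreover, \ktc classifies every link, so the output contains no unclassified links and therefore fulfills the \unclassifiedLinkConstraintLong. Taken together, these three properties constitute strong consistency.

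Combining the two facts and applying \Cref{thm:strong-consistency-strong-connectivity} completes the argument. The main obstacle is the second fact: rigorously justifying that the output of \ktc is strongly consistent rests on the correctness of the refined \GT rules produced by the construction, \idest, that they indeed enforce every constraint of strong consistency on the output. The physical-connectivity argument, by contrast, is a routine observation that \LSMs leave the underlying topology untouched, so that physical connectivity is preserved throughout an execution of \ktc.
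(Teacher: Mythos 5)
Your proposal is correct and follows essentially the same route as the paper: the paper's own proof sketch also invokes \Cref{thm:strong-consistency-strong-connectivity} together with the strong consistency of the \ktc output and the assumed physical connectivity of the input. You merely make explicit a step the paper leaves implicit, namely that \ktc performs only \LSMs and hence preserves physical connectivity, which is a worthwhile clarification but not a different argument.
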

\begin{proof-sketch}
    The output topology of \ktc is strongly consistent.
    Strong connectivity follows from \Cref{thm:strong-consistency-strong-connectivity} and from the assumption that the input topology of the \TC algorithm is physically connected.
\end{proof-sketch}

\section{Specifying Topology Control and Context Event Rules Using Graph Transformation}
\label{sec:gratra}

In this section, we first present the basic concepts of graph transformation (\GT) rules and programmed \GT operations.
Afterwards, we illustrate these concepts by specifying concrete \TC and \CE operations as \GT rules and by specifying complete \TC algorithms using programmed \GT operations.

\subsection{Graph Transformation}

The following definitions are in concordance with standard literature in the GT community~\cite{EEPT06}.
A \emph{\GT rule} consists of a \emph{left-hand side} (\LHS) pattern, a \emph{right-hand side} (\RHS) pattern and a number of application conditions.
An application condition~(\AC{\noarg}) is a graph constraint whose premise contains the \LHS.
If the application condition is a positive constraint, it is called a \emph{positive application condition} (\PAC{\noarg});
otherwise it is called a \emph{negative application condition} (\NAC{\noarg}).
The attribute constraints of the \RHS pattern only consist of equality constraints.
A \GT rule has zero or more \emph{rule parameters}, which may be node and link variables of the \LHS or any variable that appears on the right side of an attribute constraint of the \LHS or the \RHS.

An \emph{application condition \AC{x} is fulfilled for a match $m$} of the \LHS of a \GT rule if any possible extension of $m$ to a match of the premise of \AC{x} may be extended to a match of at least one conclusion pattern of \AC{i}.
This implies that a negative application condition is fulfilled for a match $m$ if the match cannot be extended to a match of the premise of the application condition.

A \GT rule is \emph{applicable on a topology $G$} if a match $m$ of the \LHS in $G$ exists that fulfills all application conditions of the rule.
An \emph{application of a \GT rule at a match $m$ in a topology $G$} is performed as follows:
\begin{inparaenum}
\item
All nodes (links) of $G$ that have a corresponding node (link) variable in the \LHS but not in the \RHS are removed.
\item 
For each node (link) variable in the \RHS that is not in the \LHS, a new node (link) is added to $G$.
\item
Finally, the attribute constraints of the \RHS are applied:
In each attribute constraint, the expression to the right of the assignment operator is evaluated and the result is assigned to the variable to the left of the assignment operator.
The co-match $m'$ of the resulting graph $G'$ maps each node (link) variable of \RHS to a node (link) in $G'$.
\end{inparaenum}
After the successful application of a \GT rule, the node and link variables of the \RHS are \emph{bound}, \idest, the resulting co-match maps each node (link) variable in the \RHS to a fixed node (link) in $G$.
In our context, rule parameters are also bound, which means that it has a fixed value that may not be re-assigned during a rule application.
This permits us to pass the bound node (link) variables of a successful rule application as parameters to a second rule application.

In this paper, we use Story Diagrams~\cite{FNTZ98}, a programmed \GT language~\cite{GT:HandbookI}, to structure \GT rules into a control flow.
A \emph{(programmed) \GT operation} is a directed graph that consists of activity nodes and activity edges.
Additionally, it has a signature consisting of an \emph{operation name} and a set of \emph{operation parameter}s.
An \emph{activity node} may either be a start node, stop node, story node, statement node, or foreach node.
An \emph{activity edge} interconnects two activity nodes and it may be labeled either with \guardSuccess or \guardFailure.
If an activity edge is unlabeled, \guardSuccess is assumed.
\begin{itemize}
\item 
A \emph{start node} (depicted as solid circle) specifies the entry point of the control flow.
Each operation has exactly one start node, which has no incoming activity edges and one unlabeled outgoing activity edge.
\item 
A \emph{stop node} (depicted as circle with a solid center) specifies an exit point of the control flow.
Each operation has one or more stop nodes, and each stop node has at least one incoming activity edge and no outgoing activity edges.
\item 
A \emph{story node} (depicted as rounded rectangle) contains a single \GT rule.
It has at least one incoming activity edge and either one unlabeled outgoing activity edge or two outgoing activity edges labeled with \guardSuccess and \guardFailure.
\item 
A \emph{foreach node} (denoted as a stacked rounded rectangle) is a special type of story node and contains a single \GT rule.
\item 
A \emph{statement node} (depicted as rounded rectangle) contains an invocation of a \GT operation.
It has at least one incoming activity edge and one unlabeled outgoing activity edge.
\end{itemize}

When \emph{invoking} a programmed \GToperationNN, the execution begins at the start node and continues along the activity edges.
What happens if the execution arrives at a particular story node, depends on the type of activity node:
\begin{itemize}
\item 
By definition, the execution may never return to the start node of a \GToperationNN.
\item 
When the execution arrives at a stop node, the operation returns.
\item 
When the execution arrives at a story node, its \GTruleNN is applied (if possible), and if the rule application was successful (unsuccessful), the execution continues along the \guardSuccess (\guardFailure) activity edge.
If the rule application was successful, the link and node variables of the \RHS of the rule are bound and may be re-used during subsequent rule applications or operation invocations.
\item 
When the execution arrives at a foreach node, the following happens:%
\begin{myinparaenum}
    \item \emph{All} matches of the \LHS of the contained \GTruleNN are determined;
    \item The contained \GTruleNN is applied to each match, and for each successful rule application, the control flow continues along the activity edge labeled with \guardSuccess;
    \item If all matches have been processed, the execution continues along the activity edge labeled with \guardFailure and the set of collected matches is cleared.
\end{myinparaenum}
\item 
When the execution arrives at a statement node, the contained \GToperationNN is invoked.
\end{itemize}

\subsection{Examples of GT Rules and GT Operations}
In the following, we present 
\begin{inparaenum}
\item the \GTrulesNN that describe basic building blocks of a \TC algorihm,
\item the \GToperationNN that describes a basic \TC algorithm, and
\item the \GTrulesNN that describe \CEs
\end{inparaenum}

\subsubsection{\TC Rules}
\TC rules make up the basic building blocks of eacg \TC algorithm.
More specifically, according to our notion of \TC algorithms, the marking step of \emph{any} \TC algorithm can be implemented as one or more \GToperationsNN that consist of the three \TC operations link activation, link inactivation, and link unclassification.
The latter is required, \eg, to revert unfavorable decisions or to propagate the unclassified flag to links that are affected by a \CE in a non-local way.
The corresponding \GT rules \activationRule, \inactivationRule, and \unclassificationRule are shown in \Cref{fig:topology-control-rules}, along with the two auxiliary rules \findUnclassifiedLinkRule and \findClassifiedLinkRule.
\begin{figure}
    \begin{center}
        \includegraphics[width=\textwidth]{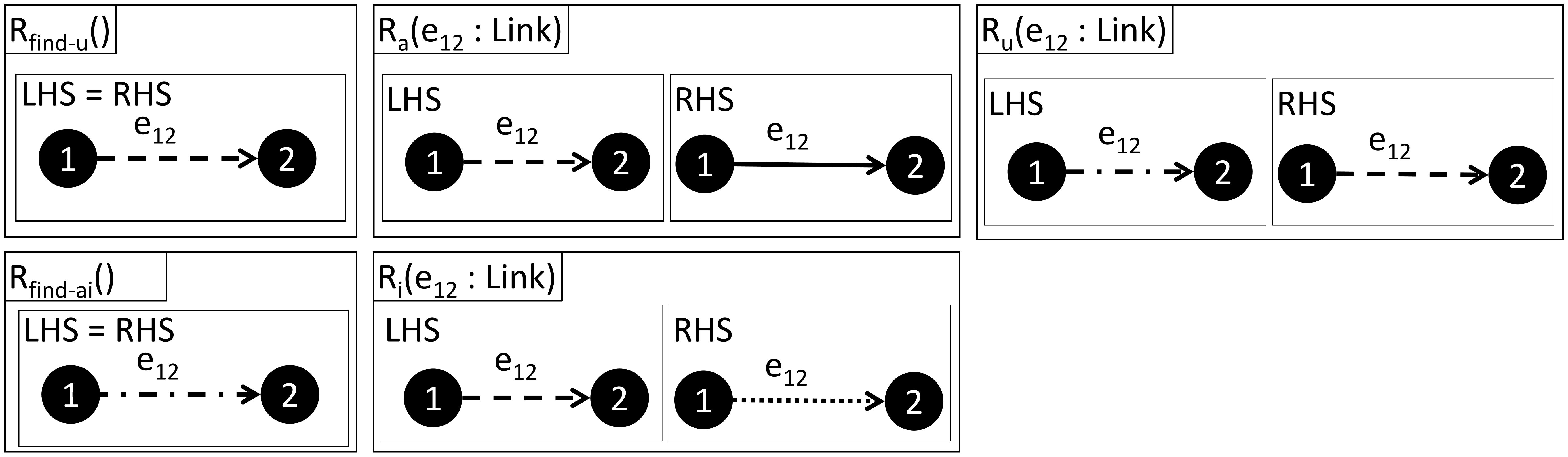}
        \caption{Topology control rules}
        \label{fig:topology-control-rules}
    \end{center}
\end{figure}
\begin{itemize}
    \item The \emph{\findUnclassifiedLinkRuleLong} identifies some unclassified link~\linkVariableOneTwo in the topology.
    As in this case, we use a compact notation (\textsf{LHS=RHS}) that only depicts one pattern if \LHS and the \RHS are identical.
    
    \item The \emph{\findClassifiedLinkRuleLong} identifies some classified link~\linkVariableOneTwo in the topology, \idest, a link that is either active or inactive.
    This rule solely serves for illustrating a shorthand notation, applied in the following:
    In a strict sense, the \RHS pattern of \findClassifiedLinkRule is invalid because it contains the attribute constraint $\linkVariableOneTwo \in \{\ACT,\INACT\}$.
    In the following, we follow the convention that this attribute constraint only applies to the \LHS of the \GTruleNN.
        
    \item The \emph{\activationRuleLong} activates a given unclassified link~\linkVariableOneTwo.
    
    \item The \emph{\inactivationRuleLong} inactivates a given unclassified link~\linkVariableOneTwo.
    
    \item The \emph{\unclassificationRuleLong} unclassifies a given classified (\idest, active or inactive) link~\linkVariableOneTwo.
\end{itemize}

\subsubsection{\GT Operation \basicTCOperation: A Basic \TC Algorithm}
\Cref{fig:topology-control-algorithm-basic} shows the \GToperationNN \basicTCOperation, which serves as starting point for implementing the full specification of \ktc in the following.
\basicTCOperation processes all unclassified links in a topology.
For each unclassified link \linkVariableOneTwo, the \activationRuleLong is applied first if possible.
If the application of \activationRule was successful, the next unclassified link is processed,
otherwise the \inactivationRuleLong is applied, if possible, and the execution returns to the story node containing the \findUnclassifiedLinkRuleLong.
We emphasize that for most of the following explanations, the order of applications of \activationRule and \inactivationRule is arbitrary.
\begin{figure}
    \begin{center}
        \includegraphics[width=\textwidth]{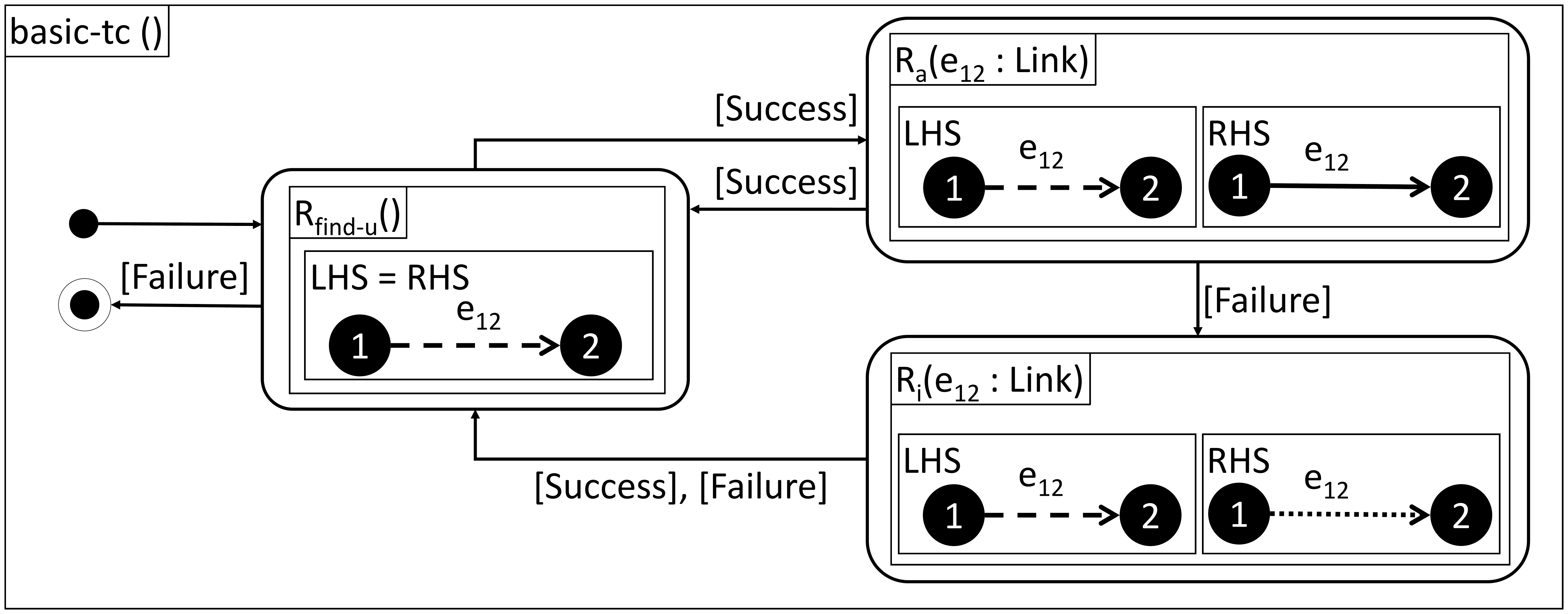}
    \end{center}
    \caption{\GToperationNN \basicTCOperation, a basic \TC algorithm that serves as template for further refinement}
    \label{fig:topology-control-algorithm-basic}
\end{figure}
In this example, the \activationRuleLong is always applicable because \linkVariableOneTwo is bound by \findUnclassifiedLinkRule and passed as parameter to \activationRule.
Therefore, the \inactivationRuleLong is never tried in \basicTCOperation.
The basic \TC algorithm \basicTCOperation
\begin{inparaenum}
\item 
enforces the \unclassifiedLinkConstraintLong because the execution may only reach the stop node if there are no more unclassified links in the topology,
\item
preserves the \inactiveLinkConstraintKTCLong because no link will ever be inactivated, but
\item 
fails to preserve the \activeLinkConstraintKTCLong because links are activated unconditionally.
\end{inparaenum}

\subsubsection{Context Event Rules}
\Cref{fig:context-event-rules} shows the \emph{context event (CE) rules}, which specify the possible modifications of the topology that are caused by the environment.
\begin{figure}
    \begin{center}
         \includegraphics[width=\textwidth]{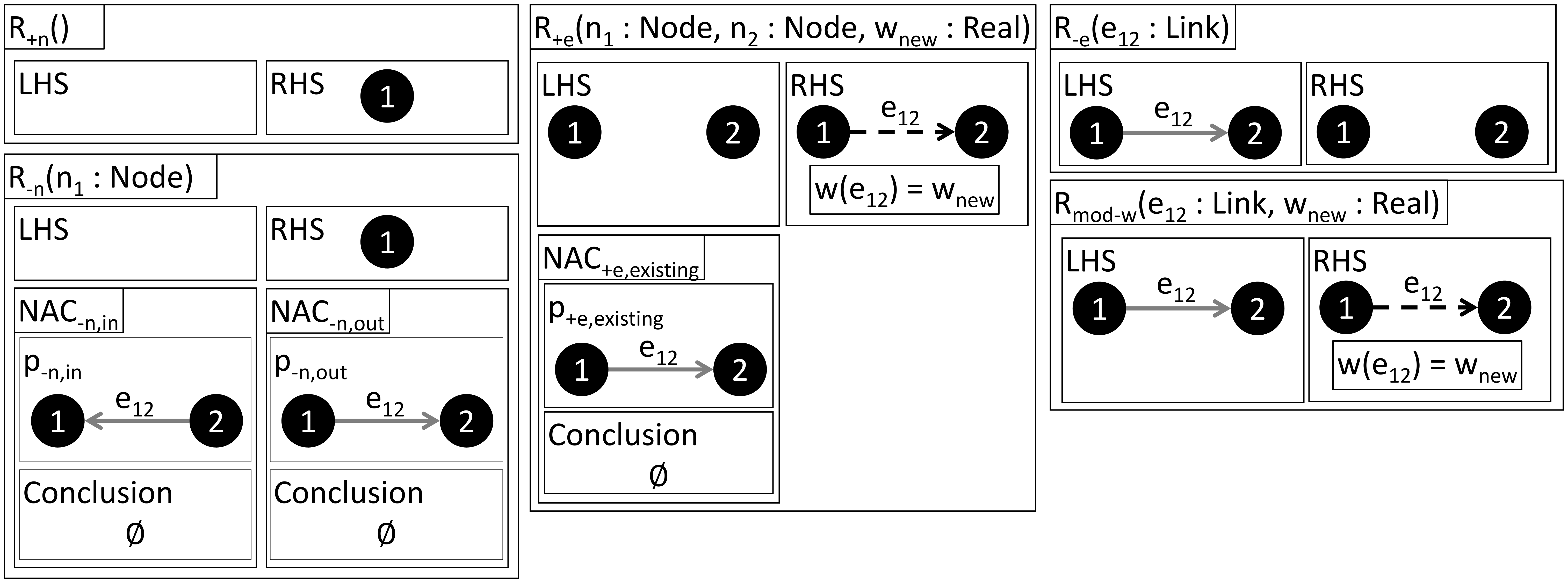}
        \caption{Context event rules}
        \label{fig:context-event-rules}
    \end{center}
\end{figure}
\begin{itemize}
    \item The \emph{node addition rule} \nodeAdditionRule creates a new \nodeNameLong{1} and adds it to the topology by establishing a corresponding \refToFig{nodes-topology} association between the node and the topology, which is not represented in the concrete syntax (as noted before).
    The empty \LHS indicates that the rule is always applicable.
        
    \item The \emph{node removal rule} \nodeRemovalRule removes a single \nodeNameLong{1} from the topology by deleting
    \begin{inparaenum}
        \item the corresponding \refToFig{nodes-topology} association and 
        \item the \nodeNameLong{1} itself.
    \end{inparaenum}
    For conciseness, we write out \nodeName{1} only in the parameter list, but abbreviate it as encircled 1 inside the rule.
    The two negative application conditions, \NAC{\text{-n},\text{in}} and \NAC{\text{-n},\text{out}}, prevent dangling links by ensuring that the node to be removed has neither incoming nor outgoing links.
    
    \item The \emph{link addition rule} \linkAdditionRule creates an unclassified link with the given source \nodeNameLong{1} and target \nodeNameLong{2}, assigns it the given weight \varWnew, and adds it to the topology by creating a corresponding \refToFig{links-topology} association between the link and the topology.
    The negative application condition \NAC{\text{+e,existing}} ensures that a rule application does not insert a parallel link from \nodeNameLong{1} to \nodeNameLong{2}.
    
    \item The \emph{link removal rule} \linkRemovalRule removes a given link \linkVariableOneTwo from the topology by deleting 
    \begin{inparaenum}
    \item the corresponding \refToFig{links-topology} association and 
    \item the link itself.
    \end{inparaenum}

    \item The \emph{weight modification rule} \weightModificationRule sets the weight of the given link \linkVariableOneTwo to \varWnew and unclassifies it.
\end{itemize}

\subsection{Constraint and Consistency Preservation}
\label{sec:constraint-preservation}
In the following, we introduce the concept of constraint preservation to describe how applying a \GT rule affects the consistency of a topology:
An \emph{application of a \GT rule \GTrule{x} preserves a graph constraint \constraint{y}} on a topology that fulfills \constraint{y} if the topology still fulfills~\constraint{y} after applying \GTrule{x}.
A \emph{\GT rule \GTrule{x} preserves a graph constraint~\constraint{y}} if \GTrule{x} preserves~\constraint{y} on any topology.
These definitions can be lifted to the concept of consistency in a natural way:
A \emph{\GT rule \GTrule{x} preserves weak (strong) consistency} if it preserves all graph constraints that make up weak (strong) consistency.
An application condition \AC{z} of a \GT rule \GTrule{x} is \emph{redundant \wrt a graph constraint \constraint{y}} if \AC{z} is always fulfilled under the assumption that \constraint{y} holds prior to each application of \GTrule{x}.
This means that we may remove \AC{z} from \GTrule{x} without threatening consistency preservation.

Given a topology that fulfills a \emph{positive} graph constraint, a rule application preserves the constraint on the topology
\begin{inparaenum}
    \item if each new match of its premise may be extended to a match of the conclusion, and 
    \item if each old match of its premise that also exists after the rule application may still be extended to a match of its conclusion after the rule application.
\end{inparaenum} 
Accordingly, given a topology that fulfills a \emph{negative} graph constraint, a rule application preserves the constraint on the topology if the rule application does not create a new match of the premise of the constraint.

\paragraph{Preservation of structural constraints}
The structural constraints \noParallelLinksConstraint and \noLoopsConstraint may only be violated by adding a link, \idest, by applying the \linkAdditionRuleLong.
This rule, however, has already been defined to preserve these constraints:
\begin{inparaenum}
\item The negative application condition \NAC{\text{+e,existing}} prevents any applications of \linkAdditionRule that would result in a parallel link between \nodeNameLong{1} and \nodeNameLong{2}, and
\item the required injectivity of matches of the \LHS enforces that the \nodeNameLong{1} and \nodeNameLong{2} are distinct, thereby preventing the new link \linkVariableOneTwo to have identical source and target nodes.
\end{inparaenum}

\paragraph{Examples of constraint violation}
\Cref{fig:example-constraint-preservation} shows four examples that illustrate how applying one of the \TC or \CE rules leads to a constraint violation.
These violations clearly call for a modification of the rules to avoid constraint violations.
In all examples, $k$ equals $2$.

In \Cref{fig:example-constraint-preservation-A}, the topology that results from applying the \activationRuleLong to \linkNameLong{12} violates the \activeLinkConstraintKTCLong because the weight of \linkName{12} is more than $k$-times greater than the weight of the weight-minimal link \linkName{13}.

In \Cref{fig:example-constraint-preservation-B}, the topology that results from applying the \linkRemovalRuleLong to \linkNameLong{13} violates the \inactiveLinkConstraintKTCLong because the match of the premise of the constraint can no longer be extended to a match of its conclusion.

In \Cref{fig:example-constraint-preservation-C}, the topology that results from applying the \inactivationRuleLong to \linkNameLong{13} violates the \activeLinkConstraintKTCLong.
Even though the links formed a triangle prior to applying \inactivationRule, the \inactiveLinkConstraintKTCLong was not violated because \linkName{13} was unclassified.

In \Cref{fig:example-constraint-preservation-D}, the topology that results from applying the \weightModificationRuleLong to \linkNameLong{13} with $\varWnew=3$ violates the \inactiveLinkConstraintKTCLong because the single match of the conclusion that extends the match of the premise is destroyed by the unclassification of link \linkName{13},
Remarkably, the \inactiveLinkConstraintKTCLong would \emph{not} have been violated if \linkName{13} had remained active because the weight of \linkName{13} has even decreased from 5 to 3.
\begin{figure}
\begin{center}
    \subcaptionbox{Violation of \activeLinkConstraintKTC caused by applying \activationRule \label{fig:example-constraint-preservation-A}}[.45\textwidth]{        \includegraphics[width=.45\textwidth]{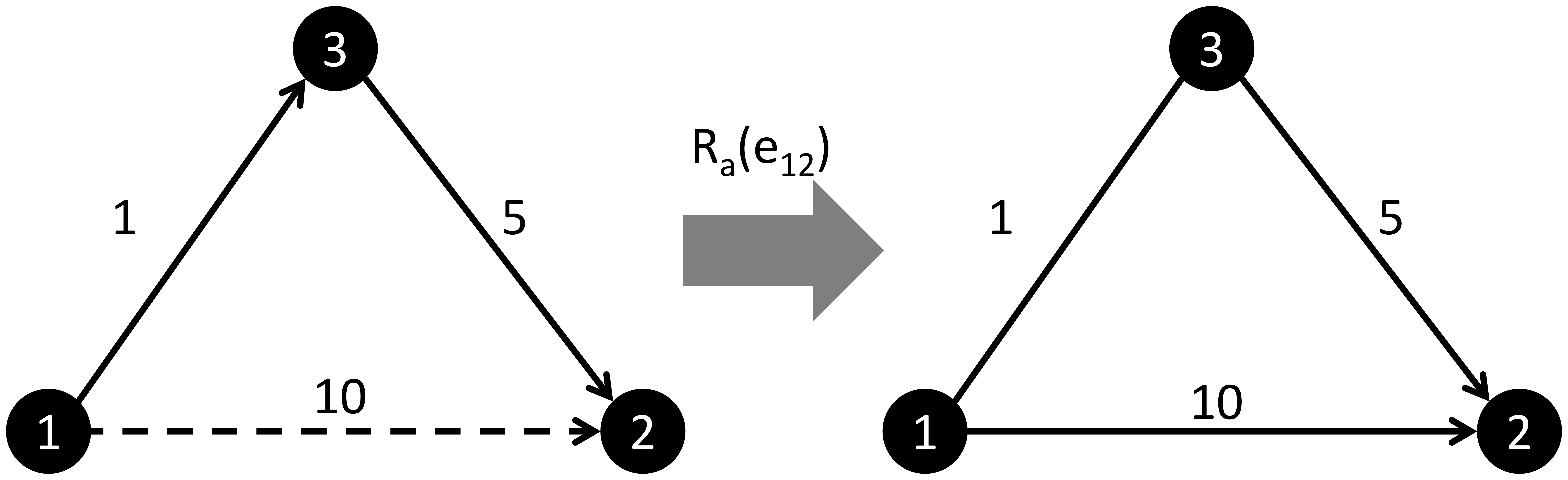}}
    \hspace{1em}
    \subcaptionbox{Violation of \inactiveLinkConstraintKTC caused by applying \linkRemovalRule\label{fig:example-constraint-preservation-B}}[.45\textwidth]{        \includegraphics[width=.45\textwidth]{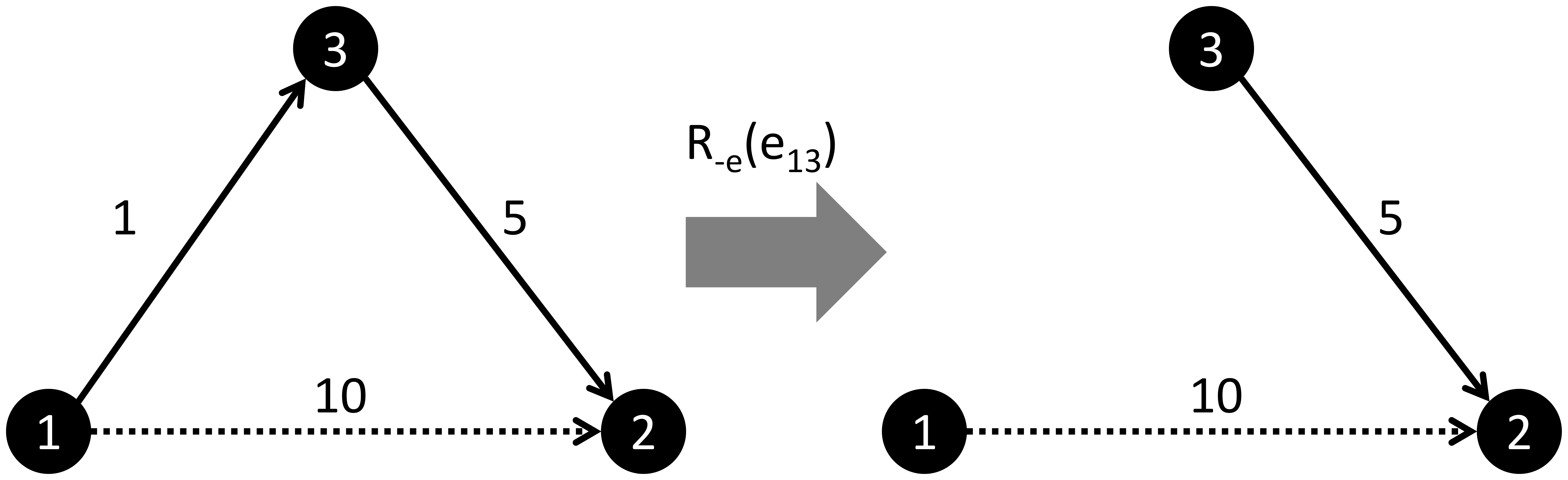}}
    
    \vspace{1ex}
    
    \subcaptionbox{Violation of  \activeLinkConstraintKTC caused by applying \inactivationRule\label{fig:example-constraint-preservation-C}}[.45\textwidth]{        \includegraphics[width=.45\textwidth]{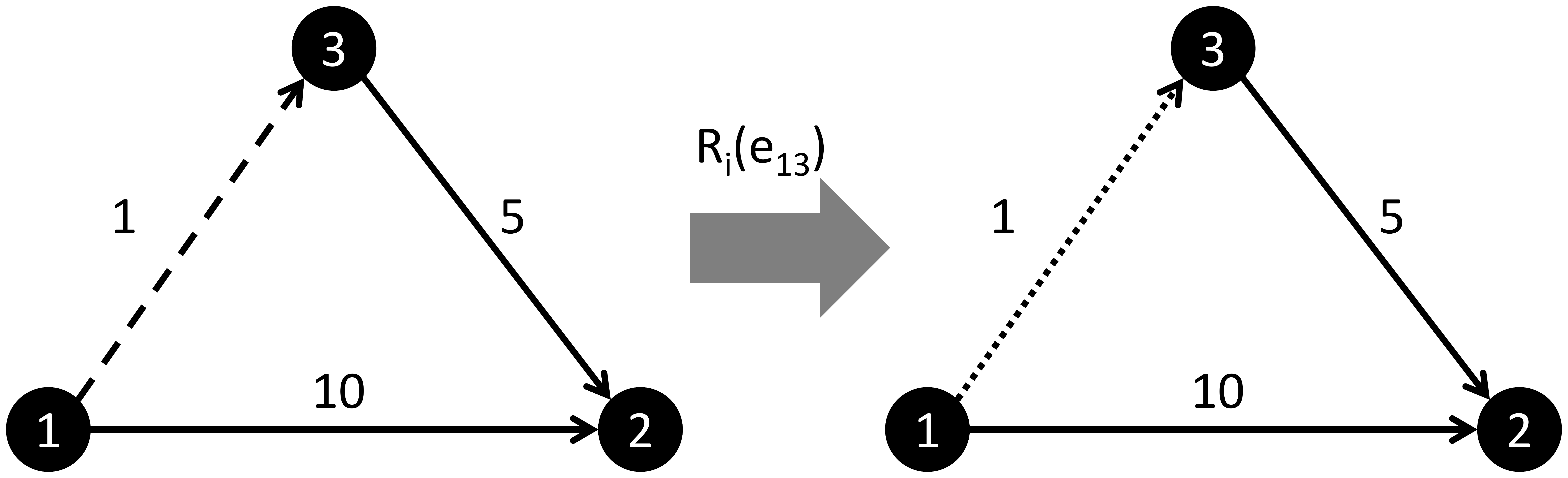}}
    \hspace{1em}
    \subcaptionbox{Violation of  \inactiveLinkConstraintKTC caused by applying \weightModificationRule\label{fig:example-constraint-preservation-D}}[.45\textwidth]{        \includegraphics[width=.45\textwidth]{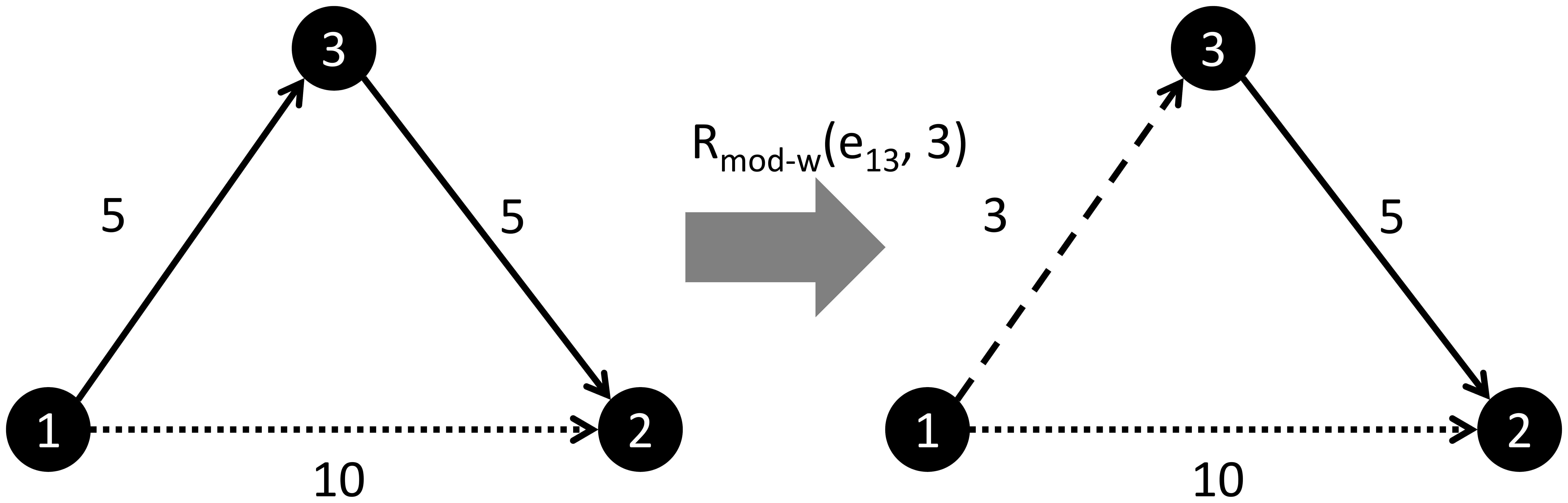}}
    \caption{Examples of constraint violations caused by rule applications ($k=2$)}
    \label{fig:example-constraint-preservation}
\end{center}
\end{figure}

\section{Refining Topology Control and Context Events Rules to Preserve Graph Constraints}
\label{sec:rule-refinement}

The examples at the end of \Cref{sec:constraint-preservation} clearly indicate that \basicTCOperation, the template \GToperationNN for developing \ktc, enforces the \unclassifiedLinkConstraintLong, but fails to preserve even weak consistency.
In this section, the \TC and \CE rules and the \GToperationNN \basicTCOperation are refined to achieve the following goals.
\begin{enumerate}
\item[\textbf{G1}] The refined \TC rules and \CE rules preserve weak consistency.
\item[\textbf{G2}] Each refined unrestrictable rule is applicable whenever its corresponding original rule is applicable (see \Cref{sec:restrictability}).
\item[\textbf{G3}] The refined \GToperationNN \basicTCOperation turns each weakly consistent input topology into a strongly consistent output topology.
\end{enumerate}
After accomplishing these goals, \basicTCOperation will be a valid \GT-based specification of \ktc.
For this reason, we will refer to the refined \GToperationNN \basicTCOperation as \tcOperation from now on.
The structure of this section is depicted in \Cref{fig:refinement-section-overview} and is shortly summarized in the following paragraphs.
\begin{figure}
    \begin{center}
    	\resizebox{\textwidth}{!}{%
    	\begin{tikzpicture}[node distance = 2.6cm]
    	
    	\sffamily
    
    	\node [rc] (1) at (0,0) {\small \textbf{Refinement algorithm}\\Approach and application\\ to \TC and CE rules\\ \lbrack Sec. \ref{sec:refinement-methodology}\rbrack};
    	
    	\node[bigArrow, right of=1](arrow1) {};
    	
    	\node[rc, right of=arrow1, node distance=2.9cm](2) {\small \textbf{Handler operations}\\Approach and application\\to unrestrictable rules\\ \lbrack Sec. \ref{sec:enforcing-applicability}\rbrack};
    	
    	\node[bigArrow, right of=2](arrow2) {};
    	
    	\node[rc, right of=arrow2, node distance=2.9cm](3) {\small \textbf{Pre-processing operations} for enforcing termination\\ \lbrack Sec. \ref{sec:enforcing-termination}\rbrack};
    	
    	\end{tikzpicture}}%
    \end{center}
    \caption{Structure of \Cref{sec:rule-refinement}}
    \label{fig:refinement-section-overview}
\end{figure}
Goal G1 aims at deriving the so-called \emph{weakest precondition}~\cite{Dijkstra1976} for each \GT rule and can be achieved by applying the constructive approach first presented in \cite{HW95} and refined later for attributes in \cite{DV14}.
This approach takes a set of \GTrulesNN and graph constraints as input and refines each \GTruleNN to ensure that all \GTrulesNN preserve all the graph constraints.
In \Cref{sec:refinement-methodology}, we describe the refinement algorithm in detail and apply it to the \TC and \CE rules, finally achieving goal G1.
However, only applying the constructive approach is not sufficient in our application scenario for the following reasons.
\begin{enumerate}
\item
The additional application conditions, which have been introduced during the refinement of the \TC and \CE rules, restrict their applicability.
While such restrictions are perfectly reasonable (and desirable) for \TC rules, restricting \CE rules is unrealistic because these rules represent (non-controllable!) modifications of the topology that are caused by the environment.
For similar reasons, the applicability of the \unclassificationRuleLong may not be restricted.
In the following, we subsume the \CE rules and the \unclassificationRuleLong under the term \emph{unrestrictable (GT) rules}.

In \Cref{sec:enforcing-applicability}, we propose to translate the application conditions of the unrestrictable rules that have been introduced during the rule refinement into appropriate handler (GT) operations.
This step is one of the major contributions of this paper.
In the course of this step, the additional application conditions of the unrestrictable \GTrulesNN are dropped under the assumption that the corresponding handler operation is always invoked immediately after the application of a \CE rule;
upon the completion of this step, goal G2 will be fulfilled.
\item
The additional application conditions of the \activationRuleLong and the \inactivationRuleLong may (and will) lead to situations where the currently considered unclassified link \linkVariableOneTwo can be processed by neither of these rules.
For this reason, \tcOperation may not terminate in certain situations.

In \Cref{sec:enforcing-termination}, we propose to insert a handler operation that is invoked prior to applying the \activationRuleLong or the \inactivationRuleLong.
This operation systematically unclassifies links that would prevent the application of both the \activationRuleLong and the \inactivationRuleLong.
We prove that the modified \GToperationNN \tcOperation always terminates, thereby achieving goal G3.
\end{enumerate}

\subsection{Refining Graph Transformation Rules to Preserve Graph Constraints}
\label{sec:refinement-methodology}

In this section, we tackle goal G1, the preservation of weak consistency, and introduce the refinement algorithm that builds on the constructive approach presented in \cite{HW95,DV14}.
We start by introducing basic concepts.
Then, we describe the general algorithm and apply it to the \TC and \CE rules.

\subsubsection{Basic Concepts: Post-conditions of \GT Rules and Gluings of Patterns}
\label{sec:refinement-basics}
In \Cref{sec:gratra}, we have introduced application conditions, which restrict the applicability of \GT rules and which are checked before the corresponding rule modifies any match. 
Therefore, application conditions are also called \emph{pre-condition}s. 
In the general case, a \GT rule may also be equipped with a set of \emph{post-condition}s, which have the same structure as pre-conditions, but are checked \emph{after} the application of their rules.
The premise of a post-condition extends the \RHS of its rule in the same fashion as the premise of a pre-condition extends the \LHS.
Conclusions of pre- and post-conditions extend their premises as explained beforehand.

Any violation of a pre-condition or a post-condition of a rule blocks its execution; 
as a consequence, a \GT engine has to roll back the application of a rule if this application results in a violation of one of its post-conditions.
Fortunately, any post-condition of a rule can be translated into an equivalent pre-condition (and vice versa).
It is even possible to translate arbitrary graph constraints into post- or pre-conditions of a given \GT rule such that this rule preserves these graph constraints;
this is the fundamental idea of the constructive approach in~\cite{HW95}.

The approach relies on the construction of so-called gluings, which are combinations of the \RHS (\LHS) of a \GTruleNN and the premises and conclusions of the regarded graph constraints.
More precisely, a \emph{gluing} \gluing{\ell,r} of patterns \leftPattern and \rightPattern is a graph pattern that represents one possible overlap of \leftPattern and \rightPattern.
We denote variables of \leftPattern with numbers ($1$, $2$, etc.), variables of \rightPattern with lowercase letters ($a$, $b$, etc.), and variables of \gluing{\ell, r} with uppercase letters plus their corresponding variables in \leftPattern and \rightPattern (\eg, $A[1,a]$).
Injective mappings \leftMapping and \rightMapping exist that assign to each node (link) variable of \leftPattern and \rightPattern a corresponding node (link) variable in~\gluing{\ell, r}.
Each node (link) variable in \gluing{\ell, r} has at least one corresponding node (link) variable in \leftPattern and \rightPattern, and at least one node variable of \gluing{\ell, r} must correspond to node variables in \leftPattern and \rightPattern, \idest, the patterns \leftPattern and \rightPattern must indeed overlap.
The attribute constraints of \gluing{\ell, r} correspond to the joint attribute constraints of \leftPattern and \rightPattern with appropriately relabeled node (link) variables.

For example, \Cref{fig:rule-refinement-unclrule-inactconstraint-gluings} (on page~\pageref{fig:rule-refinement-unclrule-inactconstraint-gluings}) and \Cref{tab:rule-refinement-unclrule-inactconstraint-gluings} show all six possible gluings \gluingUI{z\,,\,1 \leq z \leq 6} of the \RHS of the \unclassificationRuleLong and the premise of the \inactiveLinkConstraintKTCLong using a graphical and a compact tabular representation, respectively.
Each gluing has at most three node variables because each original pattern has two node variables and because at least one node variable of the two patterns must be glued together.
Each column of \Cref{tab:rule-refinement-unclrule-inactconstraint-gluings} corresponds to one gluing \gluingUI{k}, and each row corresponds to one of the up to three node variables \nodeVariableA, \nodeVariableB, and \nodeVariableC of the gluing.
In this example, the node variable \nodeVariableOne of the \unclassificationRuleLong and the node variable \nodeVariablea of the \inactiveLinkConstraintKTCLong are mapped to the node variable \nodeVariableA of gluing \gluingUI{1}.
Note that the resulting set of attribute constraints of gluing \gluingUI{1} is unsatisfiable: $\state(\linkName{AB}) = \ACT$ (resulting from \unclassificationRule) and $\state(\linkName{AB}) = \INACT$ (resulting from \inactiveLinkConstraintKTC).

\subsubsection{Description of Refinement Algorithm}
In~\cite{HW95}, Heckel and Wagner describe the constructive approach for enforcing constraint preservation of \GT rules, which forms the centerpiece of the first step in our approach.
On an abstract level, the \emph{refinement algorithm} works by iteratively considering all pairs of rules and constraints:
A rule \GTrule{x} and a graph constraint \constraint{y} serve as input to the \emph{refinement procedure}.
The refinement procedure results in zero or more additional application conditions of \GTrule{x}
These application conditions prevent any application of \GTrule{x} that may produce violations of \constraint{y}.
In our scenario, we may assume that \TC or \CE rules are only applied to topologies that are weakly consistent.
\Cref{fig:refinement-procedure-overview} gives a detailed overview of the refinement procedure of a particular rule \GTrule{x}  and a particular constraint \constraint{y}, which works as follows:
\begin{enumerate}[(1)]
    \item We construct all gluings \gluing{x,y,z} of the \RHS of rule \GTrule{x} and of the premise of constraint \constraint{y}.
    
    \item 
    Each \gluingLong{x,y,z} serves as the premise of a new post-condition \postcondition{x,y,z}.
    Then, the premise is extended to the \emph{basic conclusion pattern}~\conclusionPC{x,y,z,1} of the post-condition by extending the premise with all node and link variables and the attribute constraints that are part of the conclusion but not of the premise of the \constraintLong{j}.
    
    \item The set of \emph{reduced conclusion patterns}~\conclusionPC{x,y,z,r\,,\,r \geq 2} of the post-condition is the set of all patterns that results from merging one or more node variables of the basic \conclusionPCLong{x,y,z,1}.
    We may only merge node variables that are not contained in the premise.
    The incident link variables of merged node variables are merged accordingly.
    The \emph{conclusion}~\conclusionPC{x,y,z} of post-condition~\postcondition{x,y,z} consists of the basic conclusion pattern and the set of reduced conclusion patterns~\conclusion{x,y,z}{}.
    
    \item
    An \emph{application condition} \AC{x,y,z} is obtained from a post-condition \postcondition{x,y,z} by applying rule~\GTrule{x} in reverse order to the premise and to each conclusion pattern of \postcondition{x,y,z}.
    All node and link variables and all attribute constraints that appear in the \RHS but not in the \LHS of \GTrule{x} are removed from~\AC{x,y,z}, and copies of all node and link variables and attribute constraints that appear in the \LHS but not in the \RHS of \GTrule{x} are added to~\AC{x,y,z}.
\end{enumerate}
\begin{figure}
    \begin{center}
        \includegraphics[width=\textwidth]{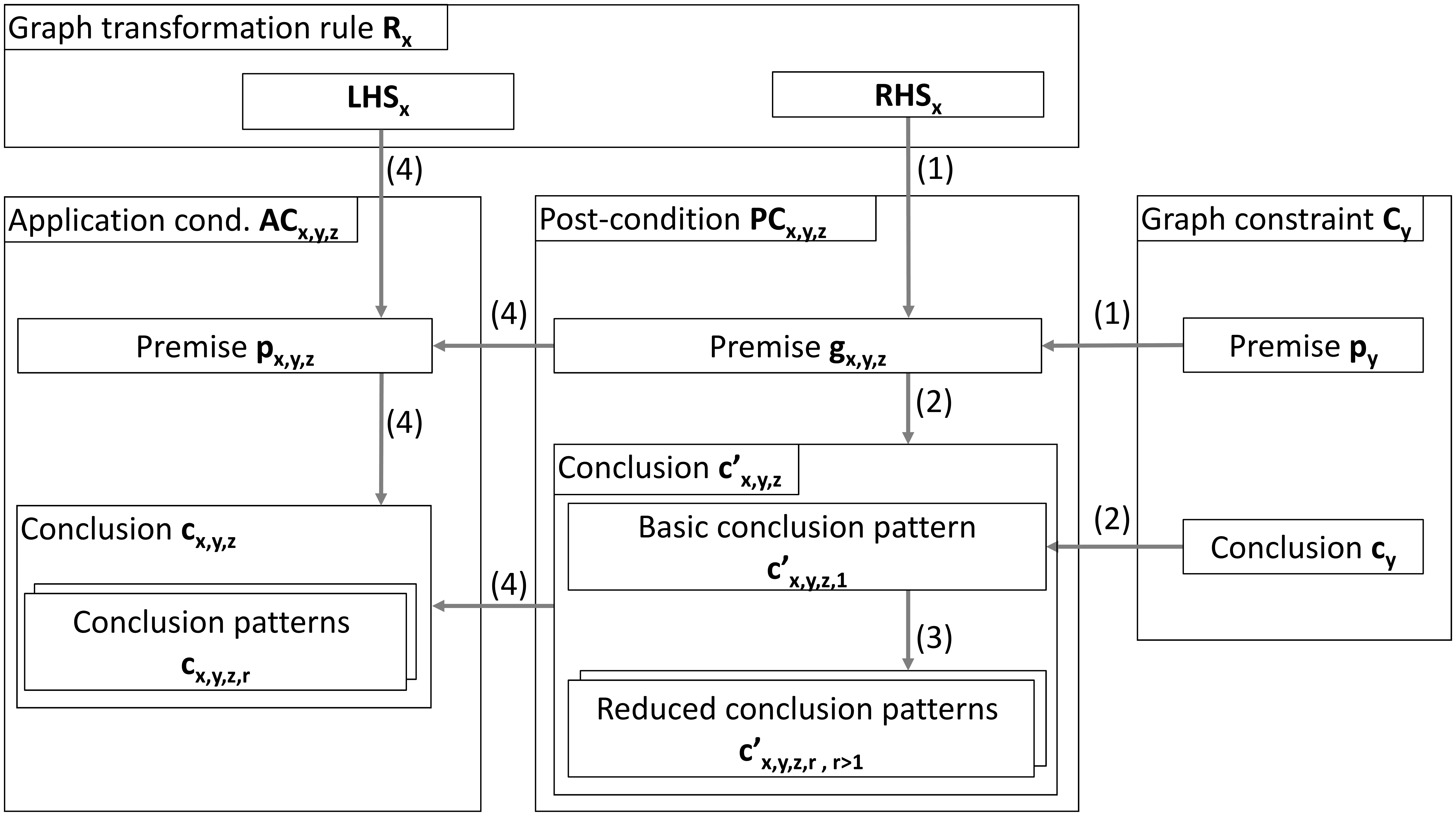}
    \end{center}
    \caption{Overview of the four steps of the rule refinement procedure}
    \label{fig:refinement-procedure-overview}
\end{figure}
To sum up, the refinement procedure first translates the graph constraint to post-conditions of the \GT rule in steps (1)--(3), which are then translated to equivalent pre-conditions in step (4).
We categorize gluings into three groups based on the satisfiability of their corresponding application condition:
\begin{itemize}
\item
\emph{Unsatisfiable gluing:}
If it is impossible to find a match of a gluing in any topology, we say that this gluing is \emph{unsatisfiable}.
For instance, a gluing that contains contradictory link state constraints for a particular link variable is unsatisfiable.
An unsatisfiable gluing \gluing{x,y,z} corresponds to an application condition that is always fulfilled because no match of the \LHS of the rule may ever be extended to a match of premise \premise{x,y,z}{}.
As explained in \Cref{sec:refinement-basics}, \gluingUI{1} is an example of an unsatisfiable gluing.

\item
\emph{Redundant application condition:}
An application condition of a \GT rule is \emph{redundant} if is fulfilled under the assumption that the structural graph constraints (\noParallelLinksConstraint and \noLoopsConstraint) are fulfilled prior to any application of the rule.

\item
\emph{Restrictive application condition:}
Any application condition that does not fall into one of the two above mentioned groups truly restricts the applicability of a rule, \idest, it is a (truly) restrictive application condition.
\end{itemize}
From these explanations, it is clear that we only need to add restrictive application conditions to the \GT rules and discard application conditions that originate from an unsatisfiable gluing or are redundant.

The rule refinement procedure sketched above has the following properties, which are essential for the development of a \TC algorithm that turns a weakly into a strongly consistent topology~\cite{HW95}:
\begin{inparaenum}
\item The added application conditions are \emph{strong enough} to prevent any application of \GTruleLong{x} that would transform a topology $G$ fulfilling \constraint{y} into a topology~$G'$ that violates \constraint{y}, \idest, they are sufficient.
\item The added application conditions are \emph{weak enough} to allow any application of \GTruleLong{x} that does not transform a topology~$G$  fulfilling \constraint{y} into a topology~$G'$ that violates \constraint{y}, \idest, they are necessary.
\end{inparaenum}

\Cref{tab:rule-refinement-subsection-overview} provides an overview of the subsequent sections that describe the application of the refinement algorithm to the \GTrulesNN.
\begin{table}
\begin{center}
    \caption{Overview of iterations of the refinement algorithm}
    \label{tab:rule-refinement-subsection-overview}
\begin{tabular}{p{.48\textwidth}|p{.25\textwidth}p{.25\textwidth}}
\toprule

\textbf{Rules} & \multicolumn{1}{c}{\textbf{\inactiveLinkConstraintKTC}} & \multicolumn{1}{c}{\textbf{\activeLinkConstraintKTC}} \\
 
\midrule

\textbf{\TC} (\unclassificationRule, \activationRule, \inactivationRule)
& \multicolumn{1}{c}{\Cref{sec:refinement-tc-inact}}
& \multicolumn{1}{c}{\Cref{sec:refinement-tc-act}}
\\
\textbf{Context events} (\nodeAdditionRule, \nodeRemovalRule, \linkAdditionRule, \linkRemovalRule, \weightModificationRule) 
& \multicolumn{1}{c}{\Cref{sec:refinement-ce-inact}}
& \multicolumn{1}{c}{\Cref{sec:refinement-ce-act}}
\\

\bottomrule

\end{tabular}
\end{center}
\end{table}

\subsubsection{Refining Topology Control Rules to Preserve the Inactive-Link Constraint}
\label{sec:refinement-tc-inact}

We begin with refining the \TC rules, \idest, the \unclassificationRuleLong, the \activationRuleLong, and the \inactivationRuleLong, to preserve the positive \inactiveLinkConstraintKTCLong.

\refinementPair{\unclassificationRule}{\inactiveLinkConstraintKTC}
The six resulting gluings \gluingUI{z\,,\,1 \leq z \leq  6} of the \RHS of the \unclassificationRuleLong and the premise of the \inactiveLinkConstraintKTCLong (the result of \refinementStep{1}) are shown in \Cref{fig:rule-refinement-unclrule-inactconstraint-gluings} and \Cref{tab:rule-refinement-unclrule-inactconstraint-gluings} using a graphical and a compact tabular representation, respectively.%
\begin{figure}
    \begin{center}
        \includegraphics[width=\textwidth]{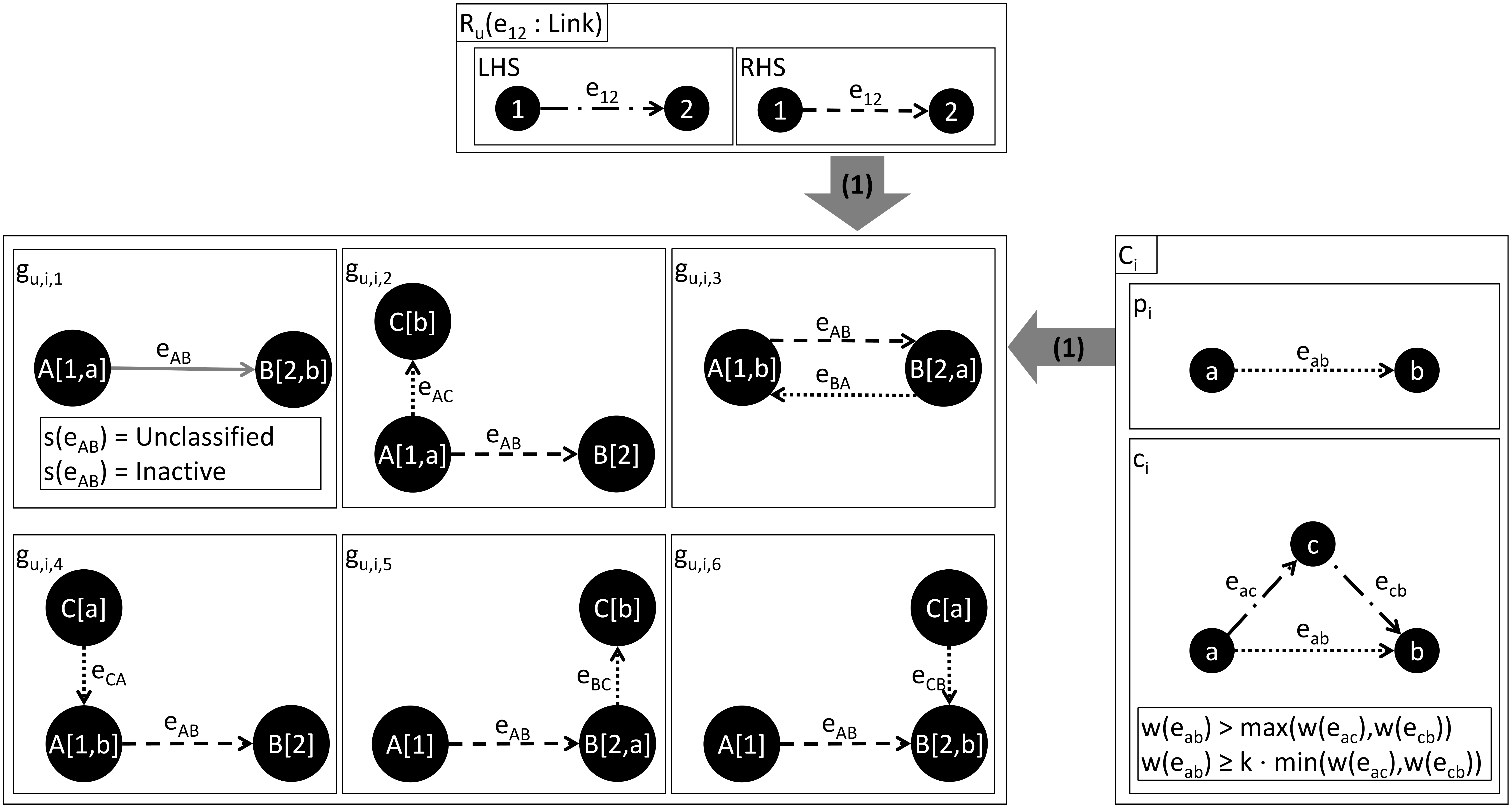}
    \end{center}
    \caption{Gluings \gluingUI{x} of the \RHS of the \unclassificationRuleLong and the premise of the \inactiveLinkConstraintKTCLong}
    \label{fig:rule-refinement-unclrule-inactconstraint-gluings}
\end{figure}
\begin{table}
    \begin{center}
        \caption{Overview of gluings of the \RHS of the \unclassificationRuleLong and the \inactiveLinkConstraintKTCLong}
        \label{tab:rule-refinement-unclrule-inactconstraint-gluings}
        \begin{tabular}{c|cccccc}
            \toprule
            \textbf{Variables in \unclassificationRule and Gluing \gluingUI{x}} &
            \multicolumn{6}{c}{\textbf{Origin in \inactiveLinkConstraintKTC} (per Gluing)}
            \\
            &\textbf{\gluingUI{1}}
            &\textbf{\gluingUI{2}}
            &\textbf{\gluingUI{3}}
            &\textbf{\gluingUI{4}}
            &\textbf{\gluingUI{5}}
            &\textbf{\gluingUI{6}}\\
            \midrule
            \textbf{$A=\leftMapping(1)$}&a&a&b&b&-&-\\
            \textbf{$B=\leftMapping(2)$}&b&-&a&-&a&b\\
            \midrule
            \nodeVariableC&-&b&-&a&b&a\\
            \bottomrule
        \end{tabular}    
    \end{center}
\end{table}
The gluings can be categorized into the aforementioned three categories as follows.
\begin{itemize}
    \item 
    \emph{Unsatisfiable gluing:}
    \GluingUILong{1} is unsatisfiable due to the conflicting link state constraints that require \linkVariableAB to be unclassified (from \unclassificationRule) and inactive (from \inactiveLinkConstraintKTC) at the same time, and is discarded during \refinementStep{2}.

    \item
    \emph{Redundant \PAC{}:}
    \Cref{fig:rule-refinement-unclrule-inactconstraint-gui3} illustrates that  \gluingUILong{3} corresponds to a redundant \PAC{}:
    The affected link \linkVariableAB is never part of the triangle of links that is a match of the conclusion of the \inactiveLinkConstraintKTCLong.
    Due to the assumption that the topology fulfills the \inactiveLinkConstraintKTCLong before applying the \unclassificationRuleLong, the \PAC{} is always fulfilled and may be safely ignored.
    \begin{figure}
        \begin{center}
            \includegraphics[width=\textwidth]{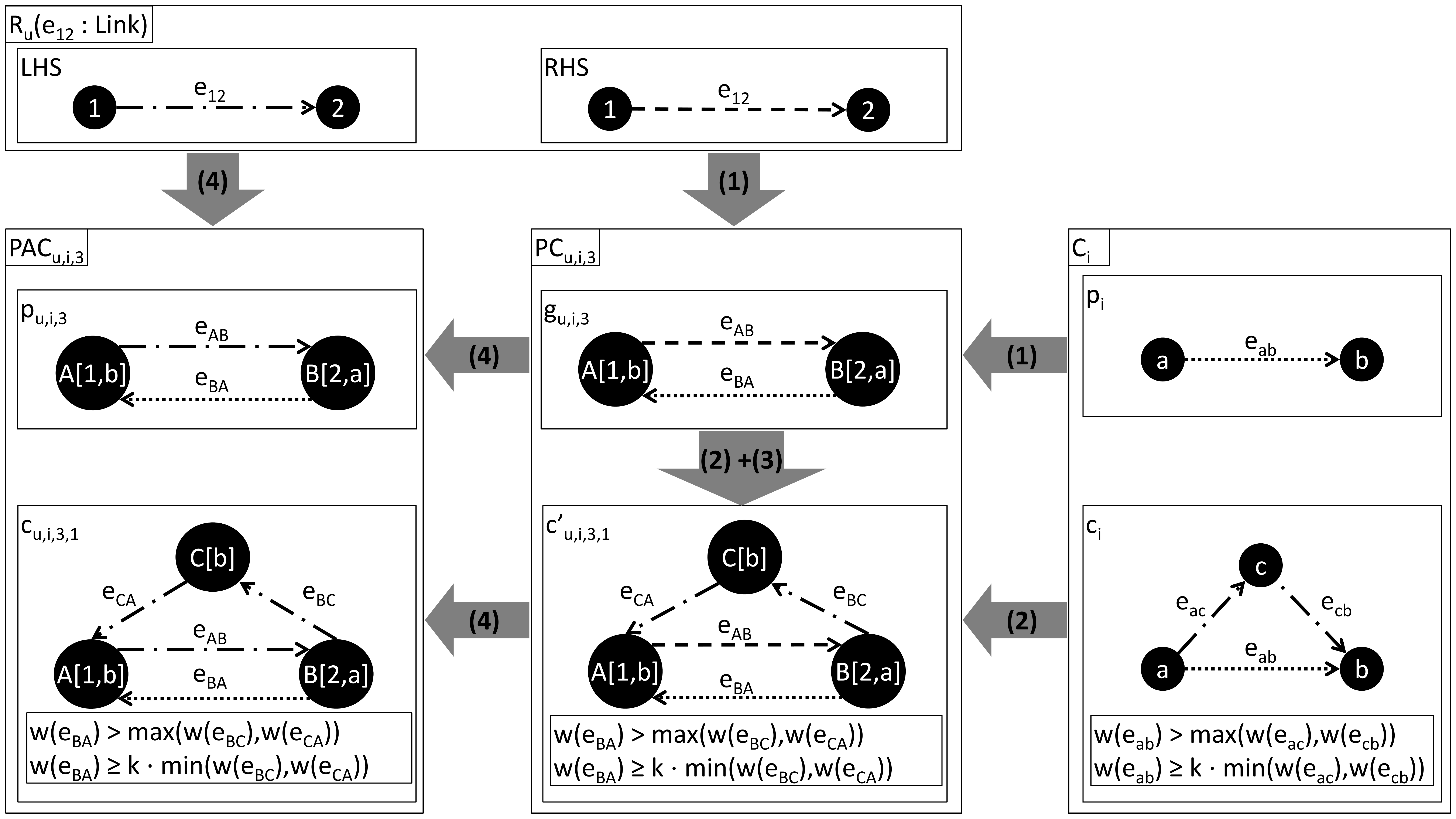}
        \end{center}
        \caption{Transformation of the gluing~\gluingUI{3} into the redundant application condition~ \PACui{3}}
        \label{fig:rule-refinement-unclrule-inactconstraint-gui3}
    \end{figure}
    
    \item 
    \emph{Restrictive \PACs:}
    The gluings \gluingUI{2}, \gluingUI{4}, \gluingUI{5}, and \gluingUI{6} correspond to four additional non-redundant \PACs{} of \unclassificationRule.
    In the following, we describe the application of the refinement procedure to \gluingUI{2} in detail.

\end{itemize}

\GluingUILong{2} is satisfiable;
therefore, it serves as the premise of the new post-condition~\postconditionUI{2}, as shown in \Cref{fig:rule-refinement-unclrule-inactconstraint-gui2}.
\begin{figure}
    \begin{center}
        \includegraphics[width=\textwidth]{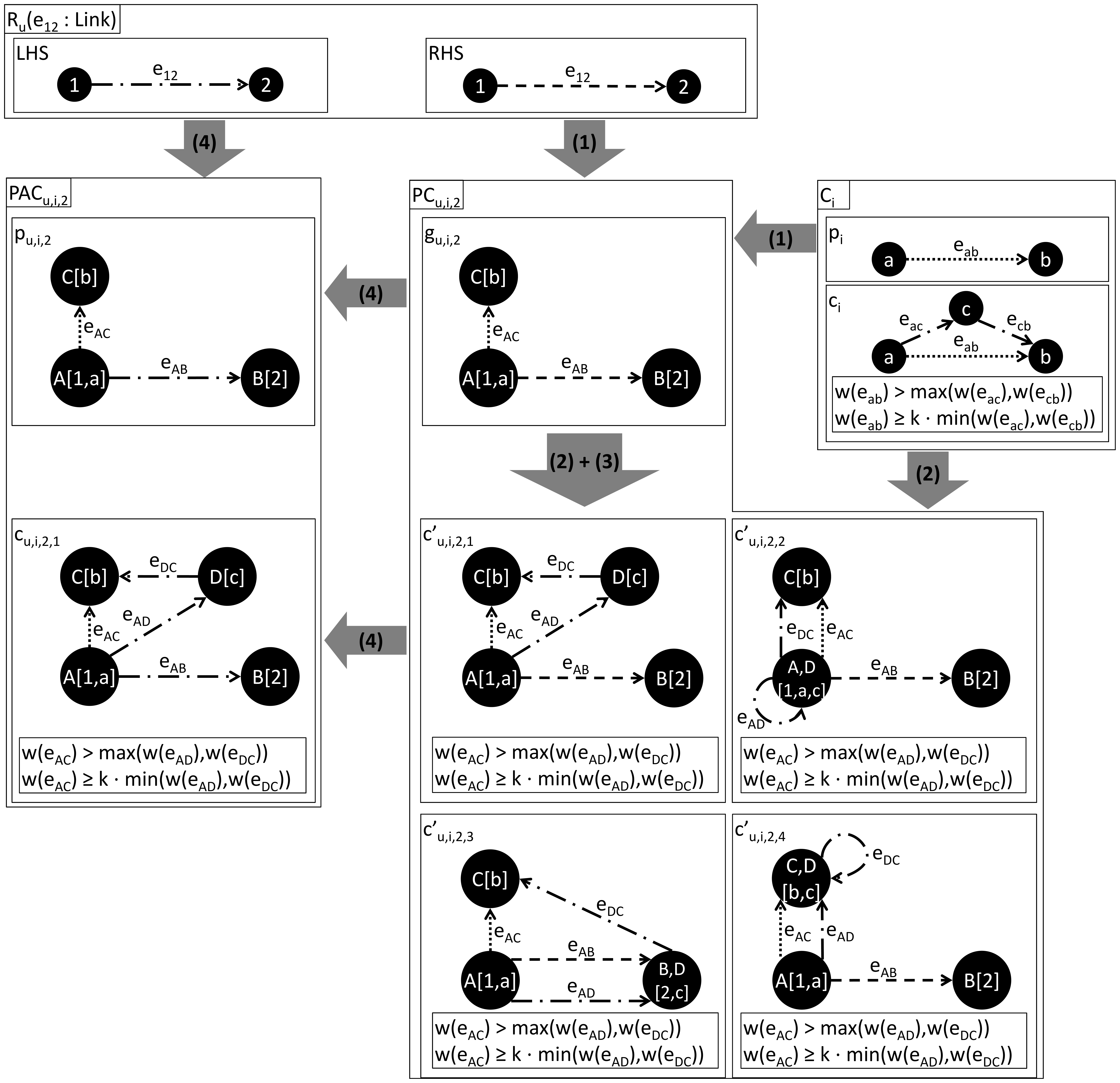}
    \end{center}
    \caption{Transformation of gluing~\gluingUI{2} into the positive application condition~\PACui{2}}
    \label{fig:rule-refinement-unclrule-inactconstraint-gui2}
\end{figure}
In \refinementStep{3}, we obtain the basic conclusion \conclusionUI{2,1}{'} by adding node variable $D$, which corresponds to node variable $c$ in \conclusionDI.
In \refinementStep{4}, three reduced conclusion patterns \conclusionUI{2,r\,;\,2\leq r \leq 4}{'} are obtained by merging node variable \nodeVariableD with the node variables~\nodeVariableA, \nodeVariableB, and \nodeVariableC, respectively.
All three reduced conclusions are not considered further on because they are unsatisfiable due to either loops (for \conclusionUI{2,2}{'}) or parallel link variables (for \conclusionUI{2,3}{'} and \conclusionUI{2,4}{'}).
Therefore, in \refinementStep{4}, only the basic conclusion \conclusionUI{2,1}{'} is transformed into a conclusion pattern of the positive application condition~\PACui{2} of the \unclassificationRuleLong.
Intuitively speaking, the new application condition requires that for each outgoing link of the source node of the affected link \linkVariableOneTwo, a triangle of links must exist that matches the conclusion of the \inactiveLinkConstraintKTCLong.
This additional triangle ensures that even if unclassifying link \linkVariableOneTwo may destroy a match of the conclusion of \inactiveLinkConstraintKTC, there is at least one more match of \conclusionDI, which ensures that \inactiveLinkConstraintKTC is preserved.
As a result of this refinement step, four new non-redundant application conditions \PACui{2}, \PACui{4}, \PACui{5}, and \PACui{6} were added to the \unclassificationRuleLong shown in \Cref{fig:rule-refinement-unclrule-inactconstraint-result}.
\begin{figure}
    \begin{center}
        \includegraphics[width=\textwidth]{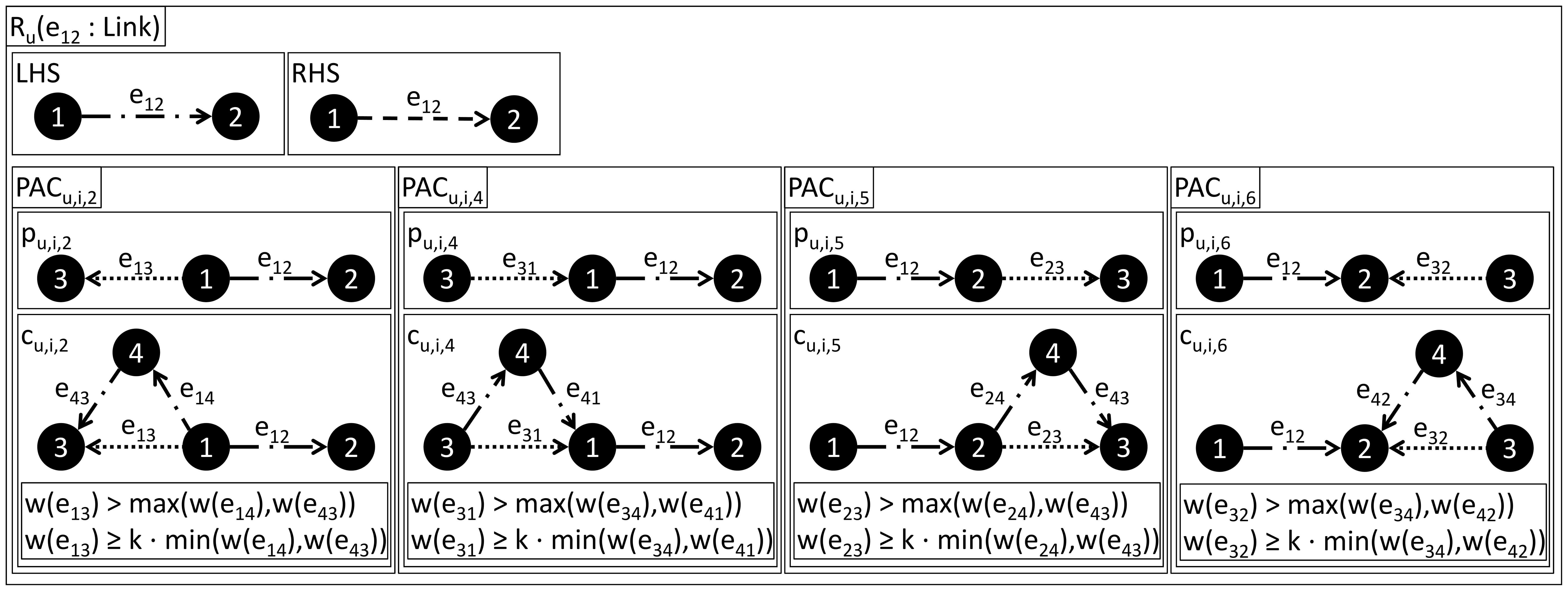}
    \end{center}
    \caption{Refined \unclassificationRuleLong with four additional application conditions \PACui{z\,;\,z \in \{2, 4, 5, 6\}}}
    \label{fig:rule-refinement-unclrule-inactconstraint-result}
\end{figure}

\refinementPair{\activationRule}{\inactiveLinkConstraintKTC}
The six possible gluings \gluingAI{z\,,\,1 \leq z \leq 6} of the \RHS of the \activationRuleLong and the premise of the \inactiveLinkConstraintKTCLong are almost identical to the \gluingsUILong{z\,,\,1 \leq z \leq 6}, shown in \Cref{fig:rule-refinement-unclrule-inactconstraint-gluings}.
The major difference is that the link state constraint of $\linkVariableAB$ originating from the \activationRuleLong is now $\state(\linkVariableAB) = \ACT$ in all gluings.
In this case, the gluings and their corresponding application conditions can be categorized as follows:
\begin{itemize}
    \item
    \emph{Unsatisfiable gluing:}
    Gluing  \gluingAI{1} is unsatisfiable due to the conflicting attribute constraints $\state(\linkVariableAB) = \ACT$ (from \activationRule) and $\state(\linkVariableAB) = \INACT$ (from \inactiveLinkConstraintKTC), and is discarded during \refinementStep{2}.

    \item
    \emph{Redundant \PACs:} 
    The remaining five gluings~\gluingAI{2} to \gluingAI{6} correspond to redundant \PACs as illustrated for \gluingAI{2} in \Cref{fig:rule-refinement-actrule-inactconstraint-gai2}.
    \begin{figure}
        \begin{center}
            \includegraphics[width=\textwidth]{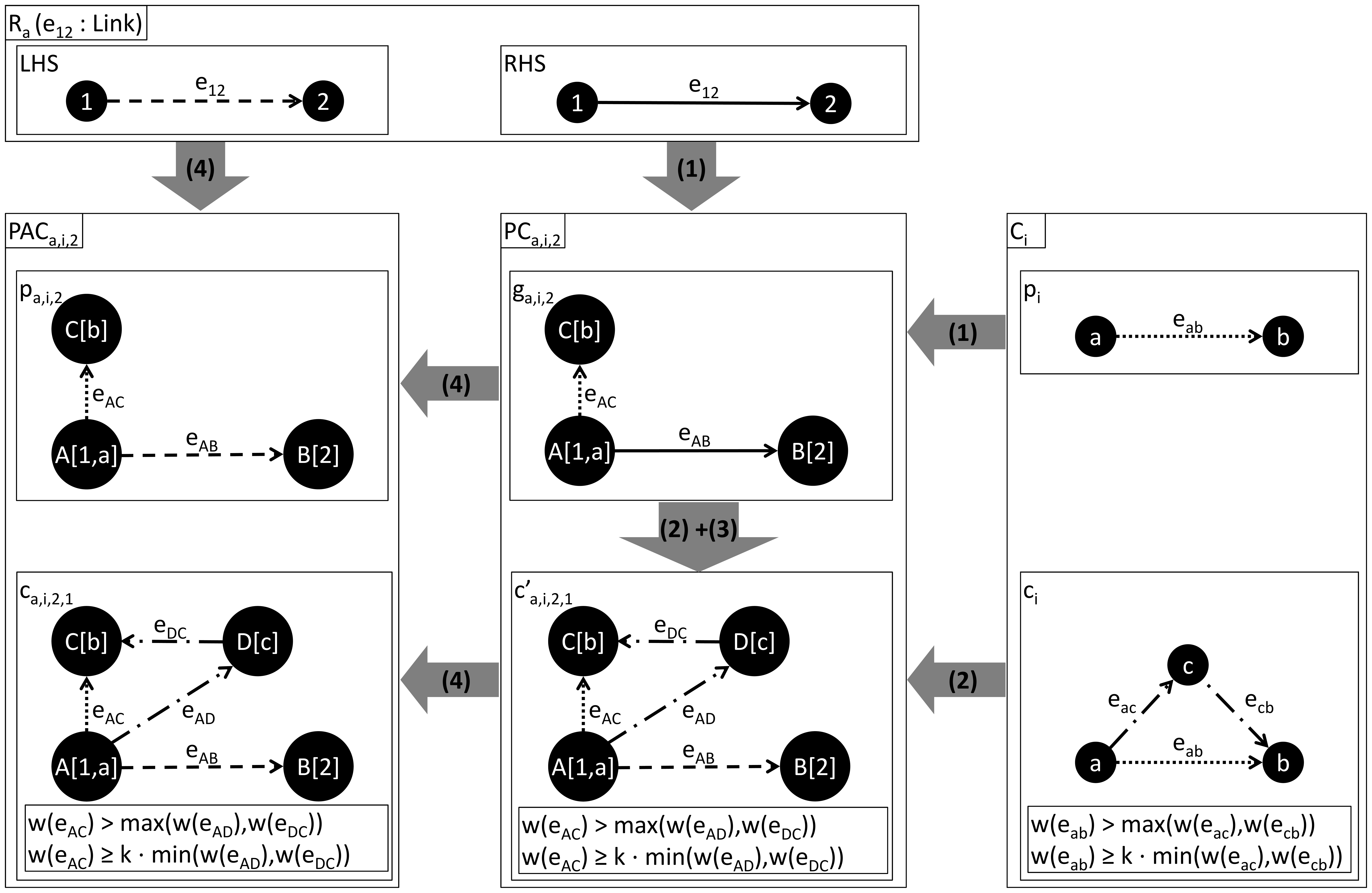}
        \end{center}
        \caption{Transformation of gluing~\gluingAI{2} into the redundant application condition~\PACai{2}}
        \label{fig:rule-refinement-actrule-inactconstraint-gai2}
    \end{figure}
    We show only the basic conclusion because, as discussed previously, the reduced conclusions are all unsatisfiable due to loops or parallel link variables.
    Note that, if any outgoing link of node variable \nodeVariableA was inactive prior to applying the \activationRuleLong, then it would be part of a triangle of links that matches the conclusion of \inactiveLinkConstraintKTC, and \linkVariableAB is not part of this triangle because it is unclassified.
    Therefore, the resulting application condition \PACai{2} is always fulfilled and can be safely ignored.
    
    \item
    \emph{Restrictive \PACs{}:}
    None of the gluings corresponds to a restrictive \PAC{}.
    Therefore, as a result of this refinement step, no application conditions need to be added to the \activationRuleLong.
\end{itemize}

\refinementPair{\inactivationRule}{\inactiveLinkConstraintKTC}
The six gluings \gluingII{z\,,\,1 \leq z \leq 6} of the \RHS of the \inactivationRuleLong and the premise of the \inactiveLinkConstraintKTCLong are similar to the \gluingsUILong{z\,,\,1 \leq z \leq 6}, shown in \Cref{fig:rule-refinement-unclrule-inactconstraint-gluings}.
The major difference is that the link state constraint of $\linkVariableAB$ is now $\state(\linkVariableAB) = \INACT$ in all gluings.
In this case, the gluings and their corresponding application conditions can be categorized as follows:
\begin{itemize}
    \item
    \emph{Unsatisfiable gluing:}
    None of the gluings~\gluingII{z\,,\,1 \leq z \leq 6} is unsatisfiable.
    
    \item 
    \emph{Restrictive \PAC{}:} \Cref{fig:rule-refinement-inactrule-inactconstraint-gii1} shows that the gluing~\gluingII{1} corresponds to an additional \PAC{} of the \inactivationRuleLong.%
    \begin{figure}
        \begin{center}
            \includegraphics[width=\textwidth]{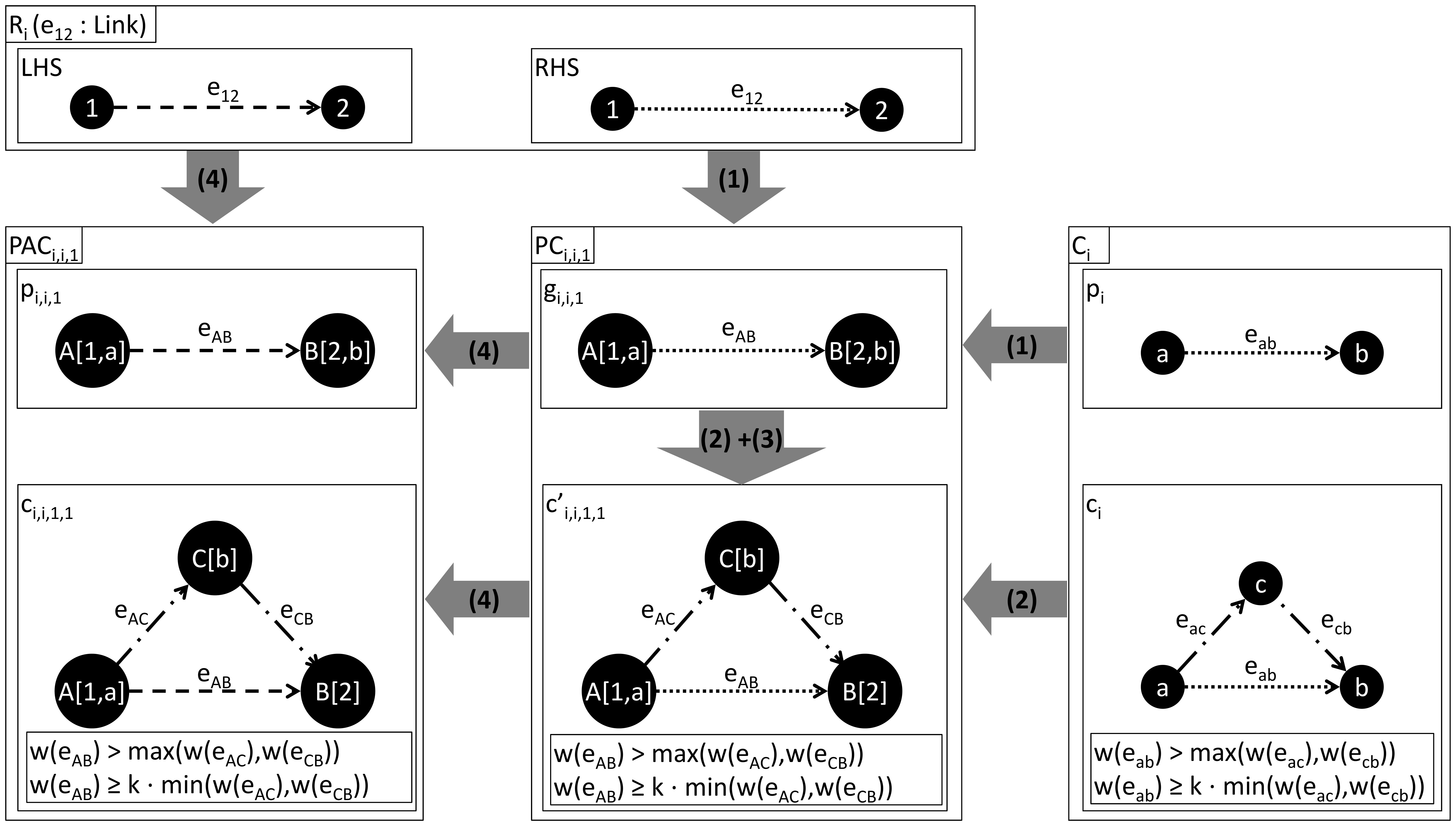}
        \end{center}
        \caption{Transformation of the gluing~\gluingII{1} into the restrictive application condition~\PACii{1}}
        \label{fig:rule-refinement-inactrule-inactconstraint-gii1}
    \end{figure}
    Gluing~\gluingII{1} serves as the premise of the post-condition \postconditionII{1}.
    We construct the conclusion to \postconditionII{1} by extending \gluingII{1} with a new node variable \nodeVariableC, which corresponds to the node variable \nodeVariablec in \inactiveLinkConstraintKTC.
    The new positive application condition~\PACii{1} is obtained by applying \inactivationRule in reverse order.

    \item
    \emph{Redundant \PACs:}
    As in the case of the \activationRuleLong, the remaining five gluings \gluingII{2} to \gluingII{6} result in redundant \PACs due to the assumption that the topology fulfills the \inactiveLinkConstraintKTCLong prior to any application of the \inactivationRuleLong.
\end{itemize}
As a result of this refinement step, the non-redundant positive application condition \PACii{1} was added to the \inactivationRuleLong.

\subsubsection{Refining Topology Control Rules to Preserve the Active-Link Constraint}
\label{sec:refinement-tc-act}

Next, we refine the \TC rules, \idest, the \activationRuleLong, the \inactivationRuleLong, and the \unclassificationRuleLong, to preserve the negative \activeLinkConstraintKTCLong.
The \activeLinkConstraintKTCLong is a negative constraint, which is transformed into additional negative application conditions of the affected \TC rules.
In this iteration, \refinementStep{2} and \refinementStep{3} of the refinement procedure are skipped because the conclusion of the \activeLinkConstraintKTCLong is empty.

\refinementPair{\unclassificationRule}{\activeLinkConstraintKTC}
\Cref{fig:rule-refinement-unclrule-actconstraint-gluings} gives an overview of the $12$ gluings \gluingUA{z\,,\,1 \leq z \leq 12} of the \RHS of the \unclassificationRuleLong and the premise of the \activeLinkConstraintKTCLong.
\begin{figure}
    \begin{center}
        \includegraphics[width=.8\textwidth]{figures/pdf/rule-refinement-unclrule-actconstraint-gluings_cropped.pdf}
    \end{center}
    \caption{All twelve gluings of the \unclassificationRuleLong and the \activeLinkConstraintKTCLong}
    \label{fig:rule-refinement-unclrule-actconstraint-gluings}
\end{figure}
In this case, the gluings and their corresponding application conditions can be categorized as follows:
\begin{itemize}
    \item \emph{Unsatisfiable gluings:} The gluings \gluingUA{1}, \gluingUA{2}, and \gluingUA{8} are unsatisfiable because of the conflicting combined link state constraints $\state(\linkVariableAB) = \UNCL \wedge \state(\linkVariableAB) = \ACT$ (for \gluingUA{1} and \gluingUA{2}) and $\state(\linkVariableAB) = \UNCL \wedge \state(\linkVariableAB) \in \{\ACT,\INACT\}$ (for \gluingUA{1} and \gluingUA{8}), and are discarded during \refinementStep{2}.

    \item \emph{Redundant \NACs:} As above, the remaining nine gluings correspond to redundant \NACs due to the assumption that the topology fulfills the \activeLinkConstraintKTCLong prior to any application of \activationRule.

    \item
    \emph{Restrictive \PACs{}:}
    None of the gluings corresponds to a restrictive \PAC{}.
    Therefore, as a result of this refinement step, no application conditions need to be added to the \unclassificationRuleLong.
\end{itemize}

\refinementPair{\activationRule}{\activeLinkConstraintKTC}
\Cref{fig:rule-refinement-actrule-actconstraint-gluings} gives an overview of the $12$ gluings \gluingAA{z\,,\,1 \leq z \leq 12} of the \RHS of the \activationRuleLong and the premise of the \activeLinkConstraintKTCLong.
\begin{figure}
    \begin{center}
        \includegraphics[width=.8\textwidth]{figures/pdf/rule-refinement-actrule-actconstraint-gluings_cropped.pdf}
    \end{center}
    \caption{All twelve gluings of the \activationRuleLong and the \activeLinkConstraintKTCLong}
    \label{fig:rule-refinement-actrule-actconstraint-gluings}
\end{figure}
These gluings are similar to the gluings shown in \Cref{fig:rule-refinement-unclrule-actconstraint-gluings}, but the link state constraints are non-conflicting here.
In this case, the gluings and their corresponding application conditions can be categorized as follows:
\begin{itemize}
    \item
    \emph{Unsatisfiable gluings:}
    None of the gluings~\gluingAA{z\,,\,1 \leq z \leq 12} is unsatisfiable.
    
    \item
    \emph{Restrictive \NACs:}
    Gluings~\gluingAA{1}, \gluingAA{2}, and \gluingAA{8} are transformed into three \NACs in \refinementStep{4}, as shown in \Cref{fig:rule-refinement-actrule-actconstraint-nacs}.
    \begin{figure}
        \begin{center}
            \includegraphics[width=\textwidth]{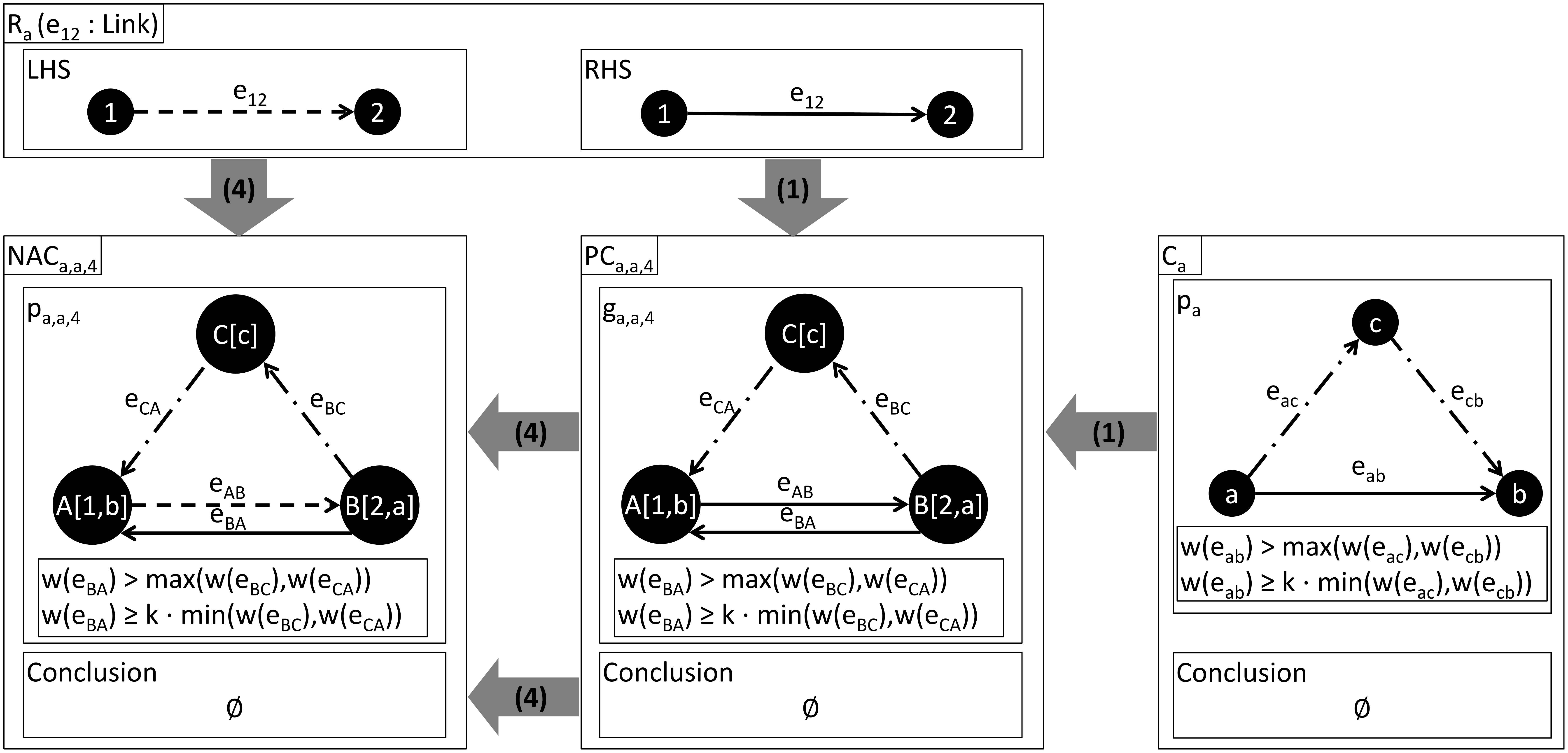}
        \end{center}
        \caption{Transformation of gluing~\gluingAA{4} into the redundant application condition~\NACaa{4}}
        \label{fig:rule-refinement-actrule-actconstraint-gaa4}
    \end{figure}
    For instance, for gluing~\gluingAA{1}, applying the \activationRuleLong in reverse order implies that the link state constraint $\state(\linkVariableAB) = \ACT$ is replaced with $\state(\linkVariableAB) = \UNCL$.

    \item \emph{Redundant \NACs:}
    The other nine gluings result in redundant \NACs;
    gluing~\gluingAA{4} serves as a representative example whose transformation is shown in \Cref{fig:rule-refinement-actrule-actconstraint-gaa4}:
    \begin{figure}
        \begin{center}
            \includegraphics[width=\textwidth]{figures/pdf/rule-refinement-actrule-actconstraint-nacs_cropped.pdf}
            \caption{Transformation of the gluings \gluingAA{1} \gluingAA{2}, and \gluingAA{8} into three restrictive \NACs of the \activationRuleLong}
            \label{fig:rule-refinement-actrule-actconstraint-nacs}
        \end{center}
    \end{figure}
    The gluing~\gluingAA{4} serves as the premise for the post-condition~\postconditionAA{4}, and the set of conclusions of the post-condition is empty.
    The application condition is obtained by setting the link state constraint of \linkVariableAB to $\state(\linkVariableAB) = \UNCL$.
    The additional \NAC{} represents the requirement that the \activeLinkConstraintKTCLong must hold for all anti-parallel links of \linkName{12}, whose source node is the target node of \linkName{12} and vice versa.
    The constructed application condition is redundant due to the assumption that the topology is weakly consistent prior to applying the \activationRuleLong.
\end{itemize}
As a result of this refinement step, the three non-redundant new \NACs \NACaa{1}, \NACaa{2}, and \NACaa{8} were added to the \activationRuleLong.

\refinementPair{\inactivationRule}{\activeLinkConstraintKTC}
The refinement of the \inactivationRuleLong is similar to the previous case.
We obtain $12$ possible gluings \gluingIA{z\,,\,1 \leq z \leq 12} similar to those shown in \Cref{fig:rule-refinement-actrule-actconstraint-gluings}.
The major difference is that the link state constraint of $\linkVariableAB$ originating from \unclassificationRule is now $\state(\linkVariableAB) = \INACT$ in all gluings.
In this case, the gluings and their corresponding application conditions can be categorized into three groups:
\begin{itemize}
    \item 
    \emph{Unsatisfiable gluing:} Gluing  \gluingIA{1} is unsatisfiable due to the conflicting attribute constraints $\state(\linkVariableAB) = \INACT$ (from \inactivationRule) and $\state(\linkVariableAB) = \ACT$ (from \activeLinkConstraintKTC) and is discarded during step (2).

    \item 
    \emph{Restrictive \NACs:} As before, applying the refinement procedure to the gluings \gluingIA{2} and \gluingIA{8} results in two new non-redundant \NACs as shown in \Cref{fig:rule-refinement-inactrule-actconstraint-nacs}.
    
    \item 
    \emph{Redundant \NACs:}
    As above, the remaining nine gluings correspond to redundant \NACs due to the assumption that the topology fulfills the \activeLinkConstraintKTCLong prior to any application of the \inactivationRuleLong.
\end{itemize}
As a result of this refinement step, the two new non-redundant \NACs \NACia{2} and \NACia{8} were added to the \inactivationRuleLong.
\begin{figure}
    \begin{center}
        \includegraphics[width=\textwidth]{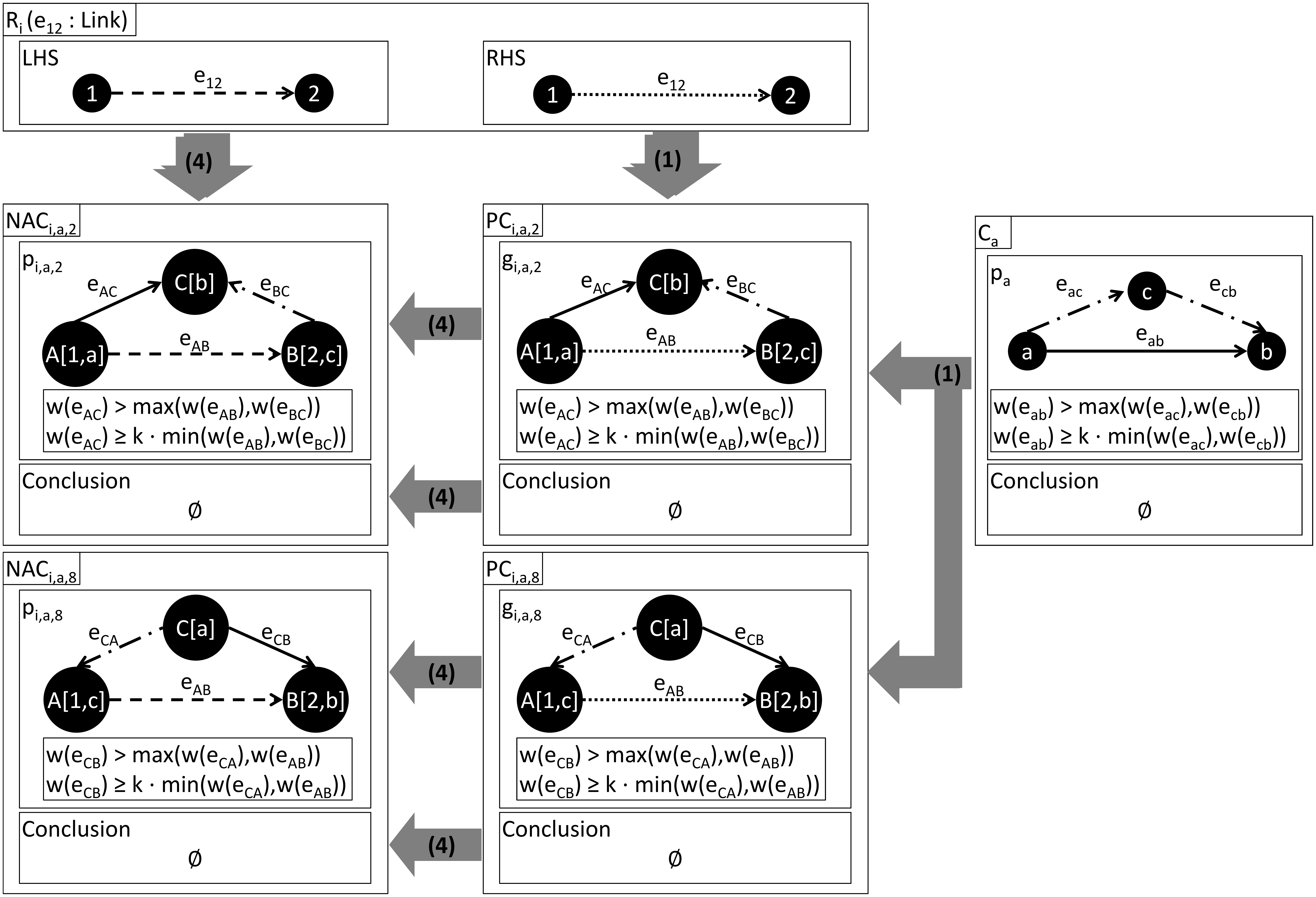}
    \end{center}
    \caption{Transformation of the gluings \gluingIA{2} and \gluingIA{8} into the restrictive application conditions \NACia{2} and \NACia{8}}
    \label{fig:rule-refinement-inactrule-actconstraint-nacs}
\end{figure}

\subsubsection{Refining Context Event Rules to Preserve the Inactive-Link Constraint}
\label{sec:refinement-ce-inact}
After describing the refinement procedure for the \TC rules in detail, the corresponding steps for the \CE rules in this section are presented in an abbreviated form.
Recall that, in general, applying a rule to a topology preserves a positive constraint 
\begin{inparaenum}
    \item if all new matches of the premise of the constraint that result from applying the rule may be extended to matches of the conclusion of this constraint, and
    \item if for each existing match of its premise that also exists after the rule application, at least one corresponding match of the conclusion exists.
\end{inparaenum}

\refinementPair{\nodeAdditionRule}{\inactiveLinkConstraintKTC}
The \nodeAdditionRuleLong preserves the \inactiveLinkConstraintKTCLong because a rule application of \nodeAdditionRule adds an isolated node.
Neither a new match of the premise is created nor a match of the conclusion is destroyed by adding a node to the topology.
Therefore, the \inactiveLinkConstraintKTCLong is already preserved by the \nodeAdditionRuleLong.

\refinementPair{\nodeRemovalRule}{\inactiveLinkConstraintKTC}
The \nodeRemovalRuleLong preserves the \inactiveLinkConstraintKTCLong because a rule application of \nodeRemovalRule removes only isolated nodes, which may not be part of a match neither of the premise nor of the conclusion of \inactiveLinkConstraintKTC.

\refinementPair{\linkAdditionRule}{\inactiveLinkConstraintKTC}
The \linkAdditionRuleLong preserves the \inactiveLinkConstraintKTCLong because a rule application of \linkAdditionRule adds an unclassified link to a topology.
Adding an unclassified link neither produces a match of the premise of \inactiveLinkConstraintKTC that cannot be extended to a match of the conclusion nor destroys any match of the conclusion of \inactiveLinkConstraintKTC.

\refinementPair{\linkRemovalRule}{\inactiveLinkConstraintKTC}
The \linkRemovalRuleLong does not preserve the \inactiveLinkConstraintKTCLong because a match of the conclusion may be destroyed without destroying the corresponding match of the premise.
The refinement procedure of the \linkRemovalRuleLong and the \inactiveLinkConstraintKTCLong proceeds completely analogous to the refinement procedure of the \unclassificationRuleLong and the \inactiveLinkConstraintKTCLong.
For brevity, we only present the resulting refined \linkRemovalRuleLong in 
\Cref{fig:rule-refinement-linkremovalrule-inactconstraint-result}.
Note that the premises of the application conditions only match if the  \linkNameLong{12} is either active or inactive and if it has an incident inactive link.
\begin{figure}
    \begin{center}
        \includegraphics[width=.7\textwidth]{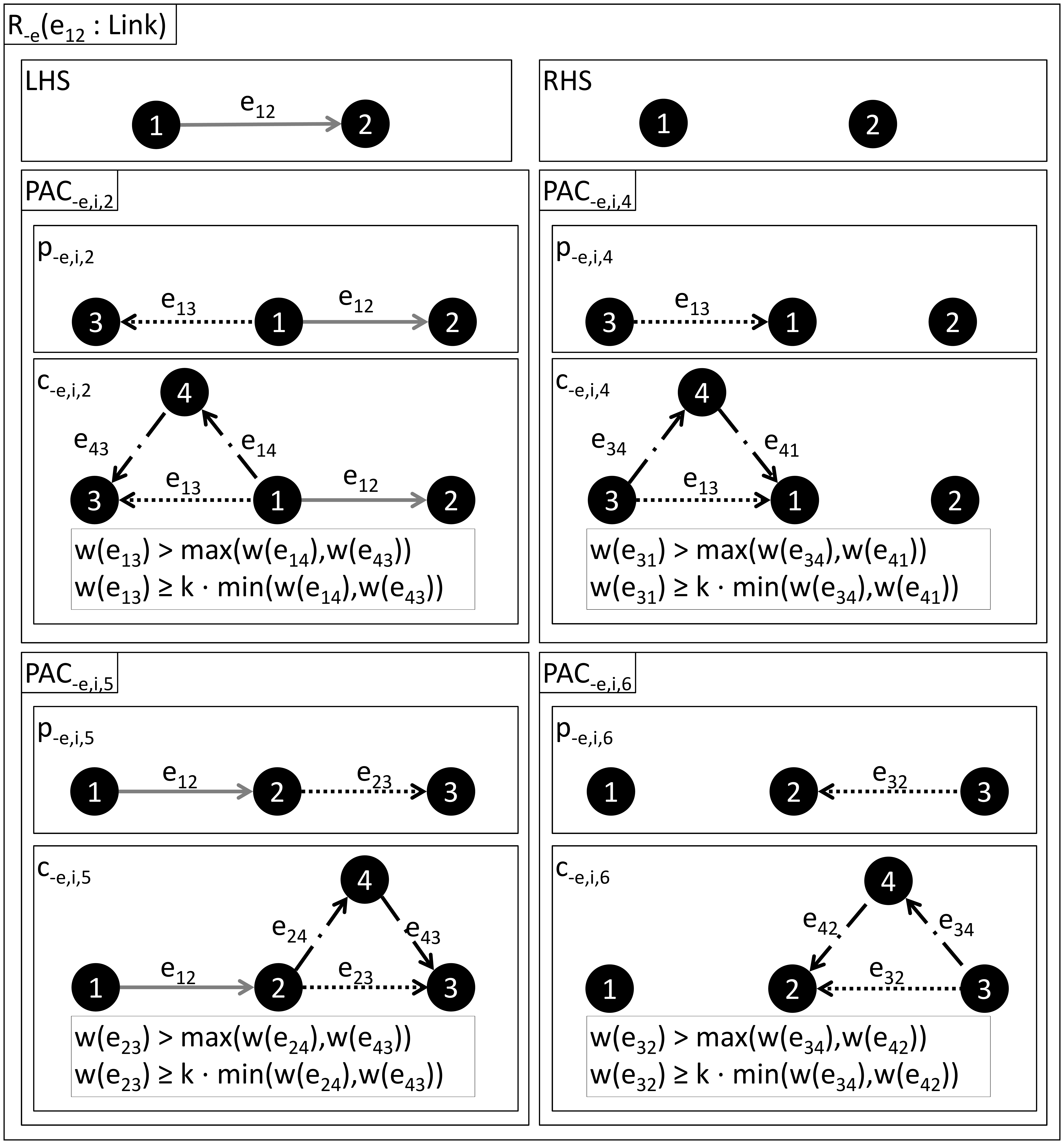}
    \end{center}
    \caption{Final refined \linkRemovalRuleLong}
    \label{fig:rule-refinement-linkremovalrule-inactconstraint-result}
\end{figure}

\refinementPair{\weightModificationRule}{\inactiveLinkConstraintKTC}
The \weightModificationRuleLong may violate the \inactiveLinkConstraintKTCLong in the very same way as the \linkRemovalRuleLong does.
Therefore, we omit the details of the derivation of the application conditions and only present the refined \unclassificationRuleLong in
\Cref{fig:rule-refinement-distmodrule-inactconstraint-result}.
One particularity here is the attribute constraint that requires $\weight(\linkVariableOneTwo)$ to equal $\varWnew$ after applying \weightModificationRule.
This attribute constraint is still present in the gluings that yield the post-conditions (not shown here), but it is missing from the resulting application conditions \PACmoddi{2}, \PACmoddi{4}, \PACmoddi{5}, and \PACmoddi{6} because applying \weightModificationRule in reverse order removes this attribute constraint. 
\begin{figure}
    \begin{center}
        \includegraphics[width=.7\textwidth]{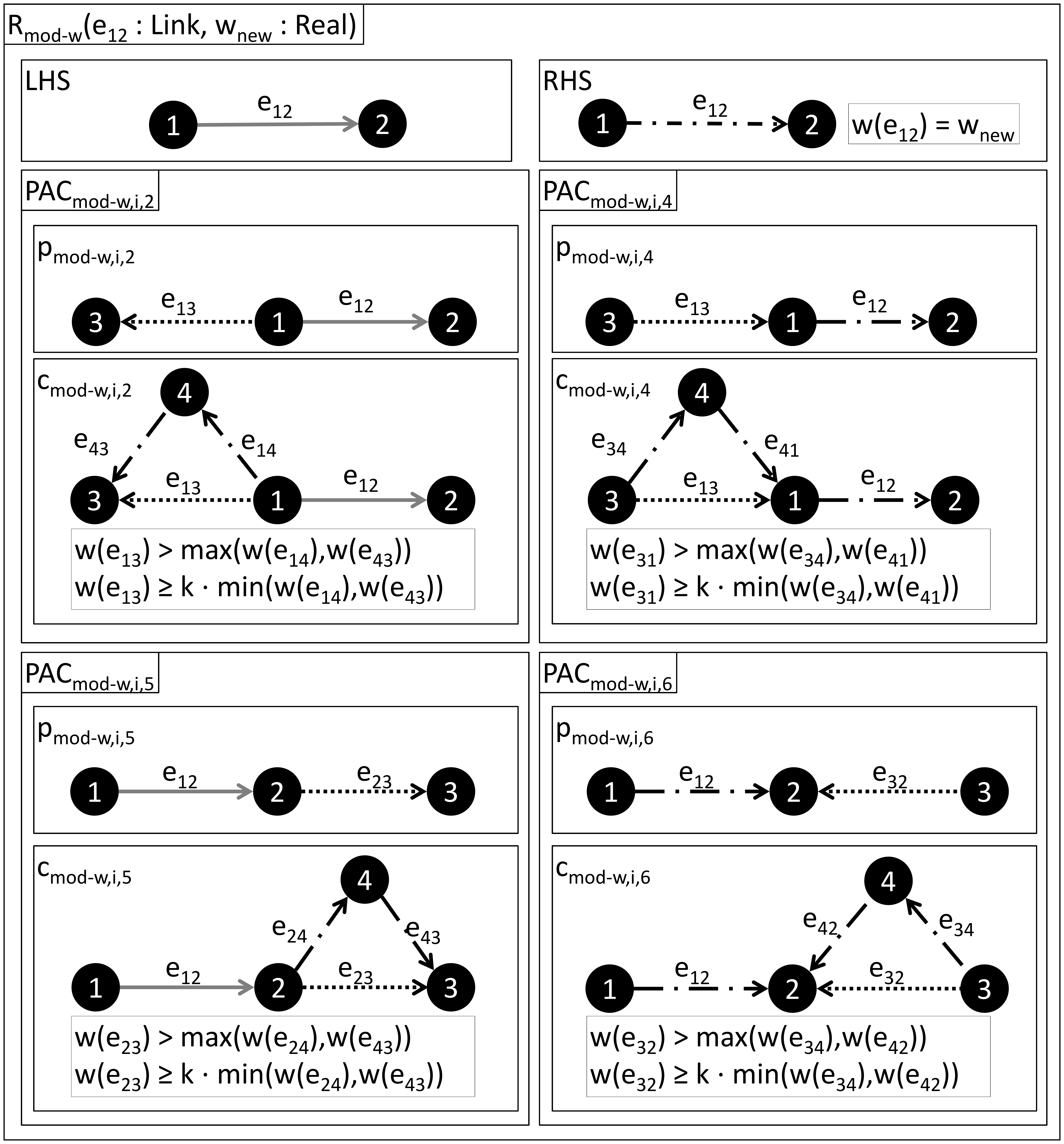}
    \end{center}
    \caption{Final refined \weightModificationRuleLong}
    \label{fig:rule-refinement-distmodrule-inactconstraint-result}
\end{figure}

\subsubsection{Refining Context Event Rules to Preserve the Active-Link Constraint}
\label{sec:refinement-ce-act}

As in the previous iteration, we abbreviate the description of the application of the refinement procedure in the following.
Recall that applying a rule to a topology preserves a negative constraint if the rule application may not produce any new matches of the premise of the constraint.

\refinementPair{\nodeAdditionRule}{\activeLinkConstraintKTC}
The \nodeAdditionRuleLong preserves the \activeLinkConstraintKTCLong because an application of \nodeAdditionRule adds an isolated node and, therefore, never produces a new match of the premise of \activeLinkConstraintKTC.

\refinementPair{\nodeRemovalRule}{\activeLinkConstraintKTC}
The \nodeRemovalRuleLong preserves the \activeLinkConstraintKTCLong because a rule application of \nodeRemovalRule removes an isolated node and, therefore, never produces a new match of the premise of \activeLinkConstraintKTC.

\refinementPair{\linkAdditionRule}{\activeLinkConstraintKTC}
The \linkAdditionRuleLong preserves the \activeLinkConstraintKTCLong because a rule application of \linkAdditionRule adds an \emph{unclassified} link to a topology, which may never produce a new match of the premise of \activeLinkConstraintKTC.

\refinementPair{\linkRemovalRule}{\activeLinkConstraintKTC}
The \linkRemovalRuleLong preserves the \activeLinkConstraintKTCLong because a rule application of \linkRemovalRule removes a link and, therefore, never produces a new match of the premise of \activeLinkConstraintKTC.

\refinementPair{\weightModificationRule}{\activeLinkConstraintKTC}
The \weightModificationRuleLong preserves the \activeLinkConstraintKTCLong because a rule application of \weightModificationRule unclassifies a link, which cannot result in a new match of the premise of \activeLinkConstraintKTC.

\subsubsection*{Intermediate Summary}
\label{sec:refinement-intermediate-summary}
After applying the refinement procedure to all combinations of \GT rules and \ktc-specific constraints, all refined rules preserve weak consistency.
\Cref{fig:ktc-algorithm-after-step2} shows the specification of the \ktc algorithm after \Cref{sec:refinement-methodology}.
The output topology fulfills the \unclassifiedLinkConstraintLong because the stop node is only reached if the topology contains no more unclassified links.
\begin{figure}
    \begin{center}
        \includegraphics[width=.9\textwidth]{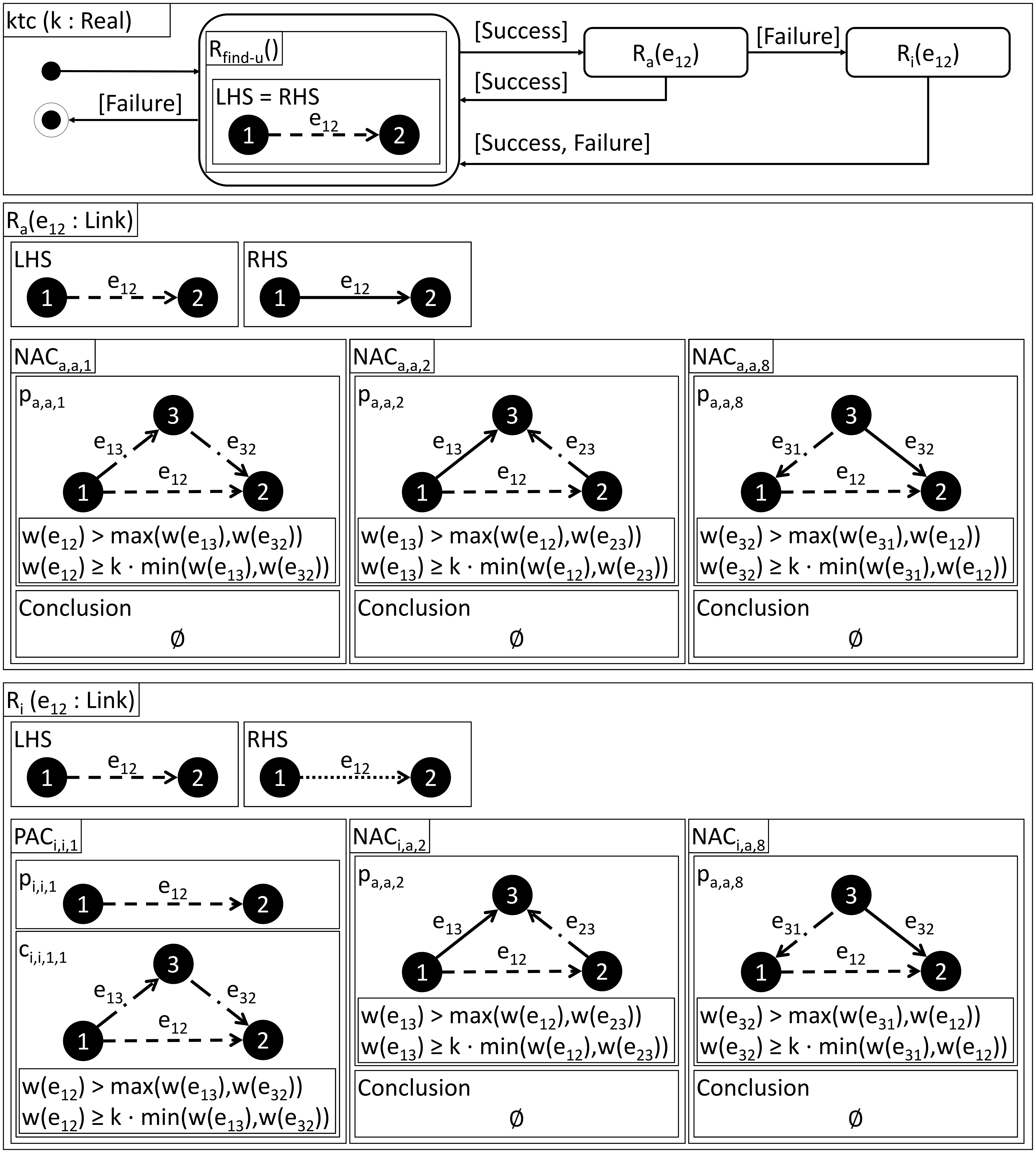}
    \end{center}
    \caption{\GToperationNN \tcOperation, the specification of \ktc after rule refinement in \Cref{sec:refinement-methodology}}
    \label{fig:ktc-algorithm-after-step2}
\end{figure}

\subsection{Restoring Applicability of Unrestrictable Rules via Handler Operations}
\label{sec:enforcing-applicability}

\Cref{sec:refinement-methodology} resulted in additional application constraints for three unrestrictable \GT rules:
the \unclassificationRuleLong, the \linkRemovalRuleLong, and the \weightModificationRuleLong.
The \weightModificationRuleLong, for instance, is currently only applicable if the resulting topology is still weakly consistent.
As explained in \Cref{sec:restrictability}, unrestrictable \GTrulesNN represent (voluntarily or involuntarily) non-controllable modifications of the topology.
Therefore, the goal of this section is to \emph{restore} the applicability of all unrestrictable \GTrulesNN (G2), but without sacrificing the guarantee that applying any of the unrestrictable rules preserves weak consistency (G1).

The following proposed solution is one of the main contributions of this paper.
We suggest to systematically turn any added application condition \AC{x,y} of an unrestrictable \GTruleLong{x} into appropriate \emph{restoration operations} \restorationOperation{\AC{x,y}}.
The purpose of this restoration operation is to resolve any violations of weak consistency caused by not checking \AC{x,y}.

\Cref{fig:restoration-operations-overview} shows the general idea of deriving restoration operations from application conditions of a \GTruleLong{x}:
Each application of \GTrule{x} is replaced with a corresponding handler operation \handlerOperation{\GTrule{x}}.
\begin{figure}
\begin{center}
\includegraphics[width=.9\textwidth]{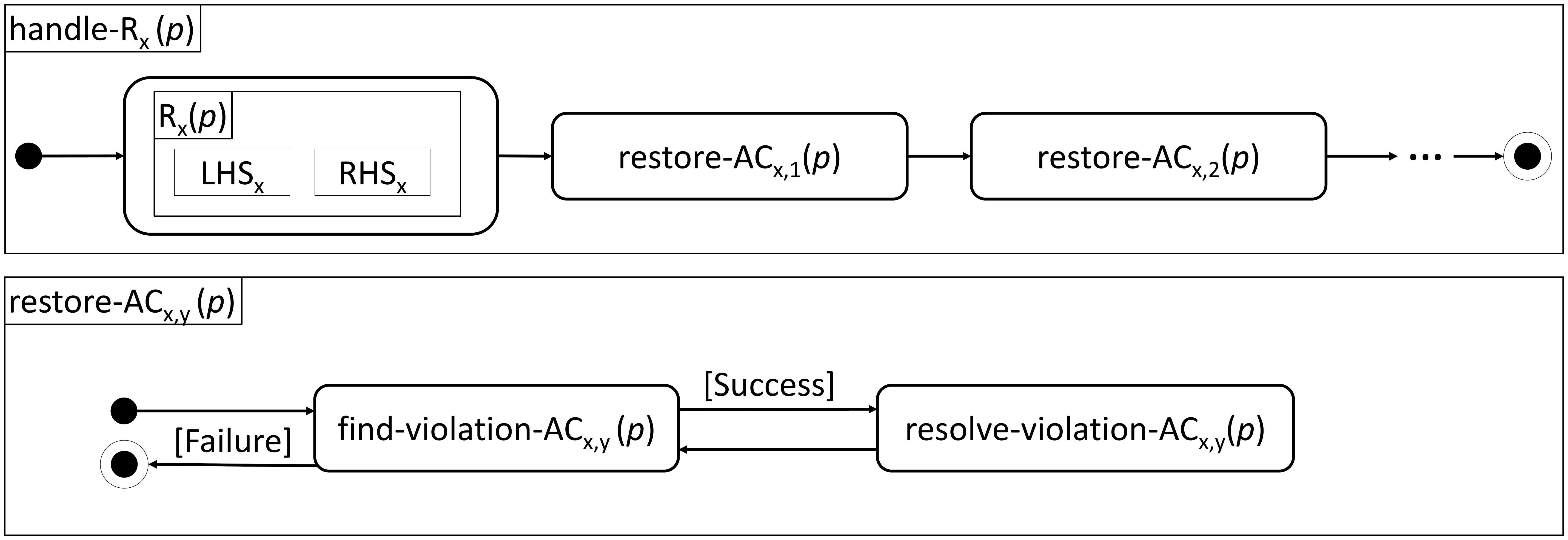}
\caption{Schematic overview of derivation of restoration operations}
\label{fig:restoration-operations-overview}
\end{center}
\end{figure}
For each \ACLong{x,y} to be dropped, a restoration operation \restorationOperation{\AC{x,y}} is invoked after applying \GTrule{x}.
The restoration operation \restorationOperation{\AC{x,y}} iteratively identifies all violations of \AC{x,y} by means of a \emph{violation identification operation} and resolves the identified violation by means of a \emph{violation resolution operation}.
A restoration operation terminates only if all constraint violations have been eliminated.
In the schematic overview shown in \Cref{fig:restoration-operations-overview}, we assume that \GTrule{x} takes a (potentially empty) list $p$ of parameters (\eg, the link variable \linkVariableOneTwo of \unclassificationRule).
The corresponding handler operation \handlerOperation{\GTrule{x}} has the same list of parameters \emph{p} as \GTrule{x}.
The parameter list \emph{p} is forwarded to \GTrule{x} and the restoration operations \restorationOperation{\AC{x,y}}.
The following properties have to be proved for each restoration operation.
\begin{itemize}
\item The resolution of a particular constraint violation should not result in new constraint violations at the affected match of \GTrule{x}.
Otherwise, the restoration operation may not terminate.
Still, it may be necessary to create constraint violations either of other application conditions or at a different match of the current application condition.
Such violations have to be handled subsequently (\eg, by recursively invoking restoration operations).

\item
A general assumption in our scenario is that constraint violations can be resolved by unclassifying links.
This is a valid assumption because the application conditions were derived from the constraints of weak consistency. Weak consistency is guaranteed to hold on a completely unclassified topology (see \Cref{thm:unclassified-topology-fulfills-weak-consistency}).
Therefore, a na\"{i}ve violation resolution operation could simply unclassify all links in the topology.
\end{itemize}
\begin{table}
\begin{center}
\caption{Overview of operations introduced during the restoration step (\Cref{sec:enforcing-applicability})}
\label{tab:restoration-operations-overview}
\begin{tabular}{p{0.3\textwidth}p{0.6\textwidth}}
\toprule

\textbf{Conventional Name} & \textbf{Description} \\

\midrule

\handlerOperation{\GTrule{x}} & Handler operation for \GTrule{x}, which ensures that weak consistency is preserved when applying the unrestricted version of \GTrule{x}\\

\restorationOperation{\AC{x,y}} & Restoration operation for \AC{x,y}, which restores application condition \AC{x,y} by eliminating all violations of \AC{x,y} that result from applying \GTrule{x}\\

\violationIdentificationOperation{\AC{x,y}} & Violation identification operation for \AC{x,y}, which identifies some violation of \AC{x,y} \\

\violationResolutionOperation{\AC{x,y}} & Violation resolution operation for \AC{x,y}, which resolves a given violation of \AC{x,y} \\

\bottomrule
\end{tabular}
\end{center}
\end{table}
In the following subsections, we illustrate the concrete derivation of violation identification and resolution operations, based on affected rules \unclassificationRule, \linkRemovalRule, and \weightModificationRule. 

\subsubsection{Deriving Repair Operations for the Unclassification Rule}

Applying the refinement algorithm to the \unclassificationRuleLong resulted in four additional application conditions---\PACui{2}, \PACui{4}, \PACui{5}, and \PACui{6}---as shown in \Cref{fig:rule-refinement-unclrule-inactconstraint-result}.
These application conditions essentially require that the affected link \linkVariableOneTwo may only be inactivated if each incident inactive link of the source and target node of the affected link can be extended to a triangle that matches the conclusion of the \inactiveLinkConstraintKTCLong.

\Cref{fig:repair-operations-for-unclassification-rule} shows the \handlerOperationLong{\unclassificationRule} and one  \restorationOperation{\PACui{2}}.
\begin{figure}
    \begin{center}
        \includegraphics[width=\textwidth]{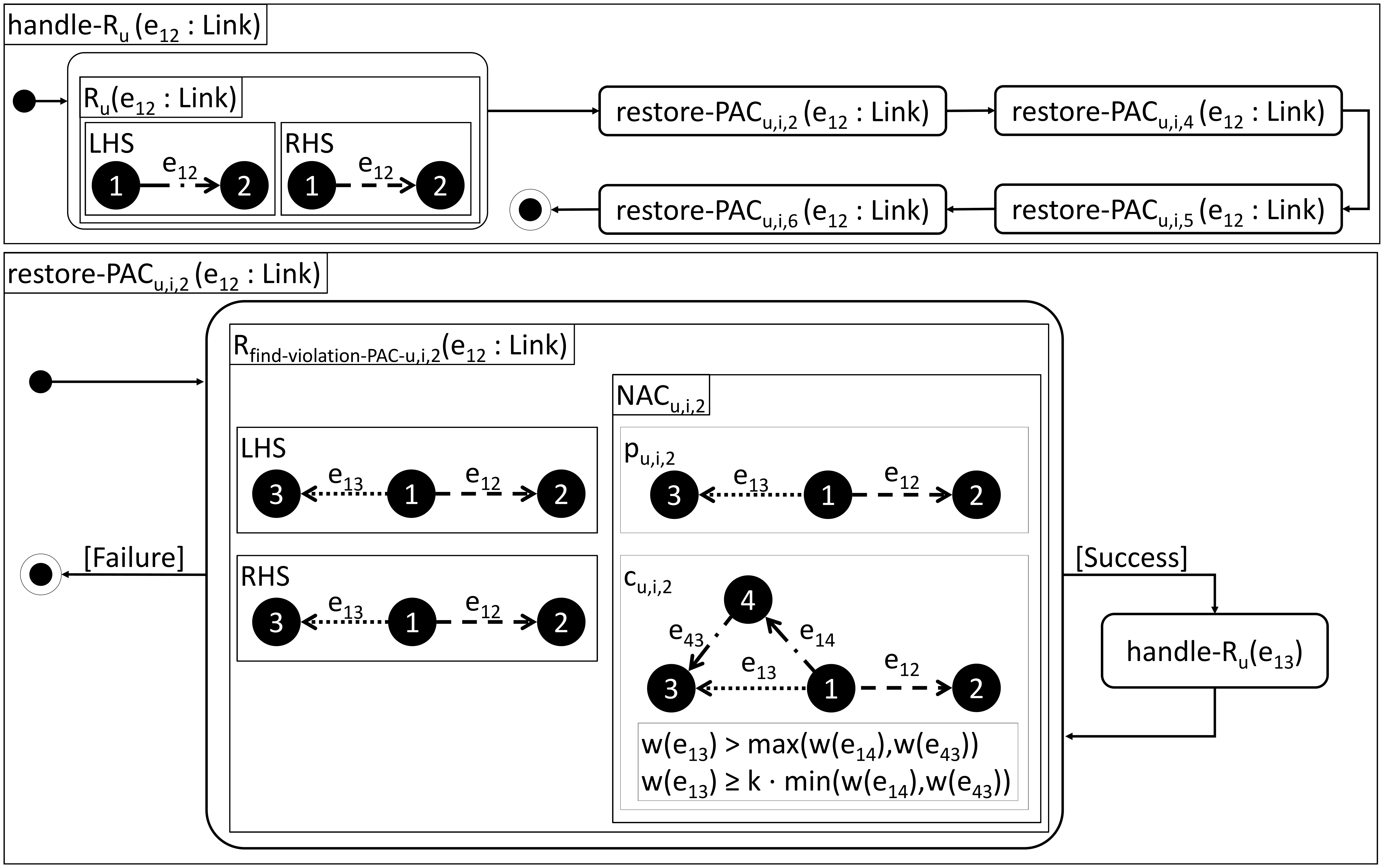}
    \end{center}
    \caption{Handler operation \handleUnclassification for \unclassificationRuleLong}
    \label{fig:repair-operations-for-unclassification-rule}
\end{figure}
The parameter list of \unclassificationRule contains one link variable \linkVariableOneTwo.
Corresponding to the four application conditions, we create four restoration operations: \restorationOperation{\PACui{2}}, \restorationOperation{\PACui{4}}, \restorationOperation{\PACui{5}}, and \restorationOperation{\PACui{6}}.
For conciseness, we only show and discuss the first restoration rule, \restorationOperation{\PACui{2}}.
The violation identification operation of \PACui{2} consists of the sole rule  \violationIdentificationRule{\PACui{2}} and is, therefore, inlined here.
This rule essentially identifies some inactive link \linkVariableOneThree for which \linkVariableOneTwo is part of the sole match of the conclusion of the \inactiveLinkConstraintKTCLong.
To identify such situations, the positive application condition \PACui{2} is turned into the  \emph{negative} \NACui{2} of \violationIdentificationRule{\PACui{2}}.
For each identified link \linkVariableOneThree, unclassifying \linkVariableOneTwo would result in a violation of \PACui{2}---and thereby of \unclassifiedLinkConstraint---if no countermeasures are taken because the sole match of the conclusion of \PACui{2} at \linkVariableOneThree would be destroyed.

To resolve the identified, immanent constraint violation at \linkVariableOneThree, we apply the default violation resolution strategy and unclassify \linkVariableOneThree in this case.
A more sophisticated strategy could try to establish an additional match of the conclusion of \inactiveLinkConstraintKTC, \eg, by classifying other links appropriately.
While this strategy is certainly a means to reduce the number of links that need to be unclassified, it is not generally applicable because we may easily come up with topologies that, for instance, do not contain alternate triangles. 
Therefore, we chose the default strategy here, which means that the violation resolution operation is a recursive call to \handleUnclassification with the problematic link variable \linkVariableOneThree as parameter.
The restoration operations for the other three \ACs (\PACui{4}, \PACui{5}, and \PACui{6}) are derived in an analogous manner.

The recursive invocation of \handleUnclassification is guaranteed to terminated.
The reason is that, at each invocation of \handleUnclassification, at least one classified link is being unclassified by invoking \unclassificationRule prior to the recursive invocations of \handleUnclassification.

\subsubsection{Deriving Repair Operations for the Context Event Rules}

Steps for deriving restoration operations for the \linkRemovalRuleLong and the \weightModificationRuleLong are completely analogous to the steps for the \unclassificationRuleLong and are, therefore, omitted for brevity.

\subsubsection{Context Event Rules and Connectivity}
A crucial assumption of our approach is that the topology is weakly consistent whenever \tcOperation is invoked, \idest, \tcOperation is designed to \emph{preserve} weak consistency, not to enforce it.
Between two invocations of \tcOperation, an arbitrary number of \CEs may occur.
After the modifications to the \CE rules described in this section, we may safely assume that an appropriate restoration operation is invoked whenever a \CE occurs.
Based on this assumption, we can prove the following theorem.
\begin{theorem}
    The refined \CE rules preserve weak connectivity on physically connected topologies.
\end{theorem}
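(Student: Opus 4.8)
The plan is to avoid re-proving connectivity from scratch and instead reduce the statement to machinery that is already in place. The central observation is that \Cref{thm:weak-consistency-weak-connectivity} already guarantees that \emph{any} weakly consistent, physically connected topology is weakly connected. So if I can show that applying a refined \CE rule together with its handler operation turns a weakly consistent topology into one that is again weakly consistent, then weak connectivity of the result follows immediately from \Cref{thm:weak-consistency-weak-connectivity} under the theorem's standing hypothesis that the result is physically connected. In other words, I would chain goal~G1 (preservation of weak consistency) with \Cref{thm:weak-consistency-weak-connectivity}, rather than tracking individual active-and-unclassified paths through each modification.

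Concretely, I would proceed in three steps. First, I invoke the standing assumption that the topology is weakly consistent whenever a \CE occurs, so the input topology to each refined \CE rule is weakly consistent. Second, I appeal to goal~G1, established in \Cref{sec:refinement-methodology} and \Cref{sec:enforcing-applicability}: each refined \CE rule together with its handler operation preserves weak consistency, so the output topology $G'$ is again weakly consistent. Third, I record the key structural fact about the handler operations, namely that their violation-resolution steps only \emph{unclassify} links and never delete links or nodes beyond the modification performed by the bare \CE rule itself. Since unclassification leaves the underlying link set untouched, the handler part of an operation cannot affect physical connectivity; whether $G'$ is physically connected therefore depends only on the structural modification of the \CE rule. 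Under the hypothesis that $G'$ is physically connected, I then apply \Cref{thm:weak-consistency-weak-connectivity} to the weakly consistent $G'$ and conclude that $G'$ is weakly connected.

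It is instructive to see where the actual content sits by walking through the rules. For \nodeAdditionRule and \linkAdditionRule the argument is easy, because these rules only \emph{add} elements and a newly added link is unclassified, hence usable by weak-connectivity paths. For \nodeRemovalRule the negative application conditions force the removed node to be isolated, so no weak path among the remaining nodes is touched. For \weightModificationRule the link set is unchanged and the affected link is merely unclassified, which again cannot destroy a weak path. The genuinely delicate case, where I expect the main obstacle, is \linkRemovalRule: removing a link may destroy a weak-connectivity path even when physical connectivity survives, since a surviving physical path is permitted to traverse \emph{inactive} links that a weak path may not use. This is exactly where the reduction pays off: instead of following the removed link and the ensuing cascade of unclassifications performed by \handleLinkRemoval, I rely on the fact that the handler restores weak consistency, and then let the inductive construction underlying \Cref{thm:weak-consistency-weak-connectivity} rebuild the required active-and-unclassified paths. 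Thus the hard work has already been done in proving G1 and \Cref{thm:weak-consistency-weak-connectivity}, and the only remaining obligation is to verify that handler operations leave physical connectivity intact so that the two results may be safely chained.
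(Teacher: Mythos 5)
Your proposal is correct and follows essentially the same route as the paper's own proof-sketch, which likewise chains goal~G1 (the refined \CE rules with their restoration operations preserve weak consistency) with \Cref{thm:weak-consistency-weak-connectivity} to conclude weak connectivity. The extra care you take in checking that handler operations only unclassify links and thus cannot affect physical connectivity is a sensible clarification of the hypothesis, but it does not change the argument, which is the one the paper gives.
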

\begin{proof-sketch}
    The refined \CE rules (with the restoration operations) preserve weak consistency.
    According to \Cref{thm:weak-consistency-weak-connectivity}, weak consistency implies weak connectivity;
    therefore, the claim follows.
\end{proof-sketch}
\subsection{Enforcing Termination of \tcOperation via Handler Operations}
\label{sec:enforcing-termination}

In this section, we first illustrate that the current version of \tcOperation (shown in \Cref{fig:ktc-algorithm-after-step2}) may run into situations that lead to an infinite execution in \Cref{sec:termination-example}.
Based on these observations and the concept of restoration operations, as described in \Cref{sec:enforcing-applicability}, we introduce a pre-processing operation that ensures that one of the two \TC rules, \activationRule or \inactivationRule, is applicable in each iteration of the \TC algorithm.
This step fulfills goal G3.

\subsubsection{Example of Non-Termination and Problem Analysis}
\label{sec:termination-example}
As \Cref{fig:sample-topology-for-shared-NACs} shows, there are topologies for which the current \TC algorithm may not terminate:
Starting with an entirely unclassified topology, the \TC algorithm activates \linkName{23} and inactivates \linkName{13}.
Now, the unclassified link \linkName{12} may be neither activated nor inactivated because the \activationRuleLong and the \inactivationRuleLong share two pairs of \NACs that may prevent the applicability of \emph{both} \TC rules in this situation.
In general, if \NACaa{2} prevents the application of \activationRule for a certain unclassified link, then \NACia{2} also prevents the application of \inactivationRule.
The same holds for \NACaa{8} and \NACia{8}, respectively.
\begin{figure}
\begin{center}
\subcaptionbox{Initial situation
\label{fig:sample-topology-for-shared-NACs}}[.7\textwidth]{        \includegraphics[width=.7\textwidth]{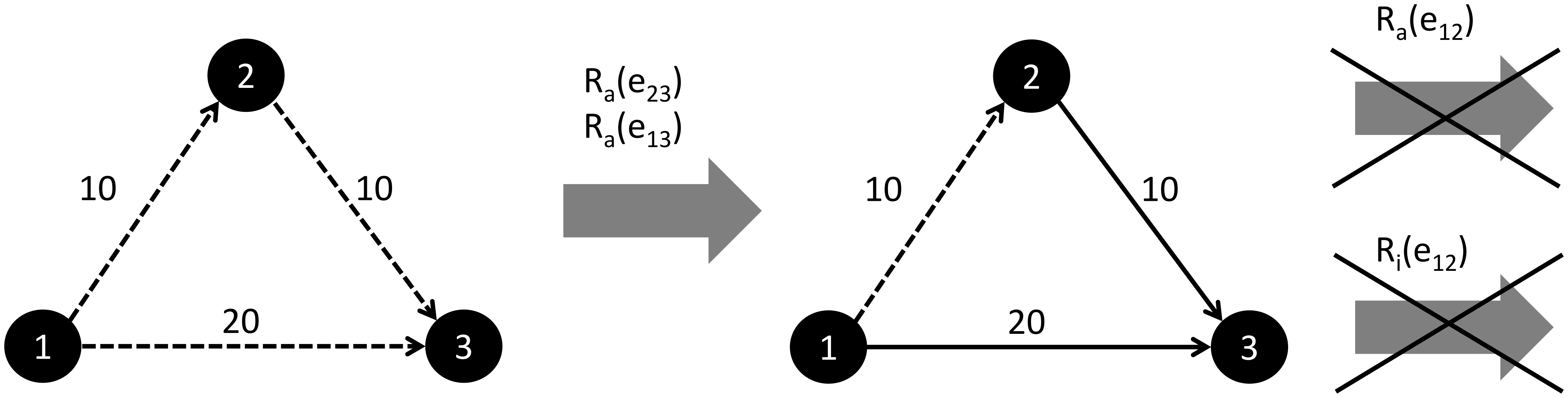}}

\vspace{2ex}

\subcaptionbox{Possible solution: re-unclassifying weight-maximal link \linkName{13}\label{fig:sample-topology-for-shared-NACs-solution}}[\textwidth]{        \includegraphics[width=\textwidth]{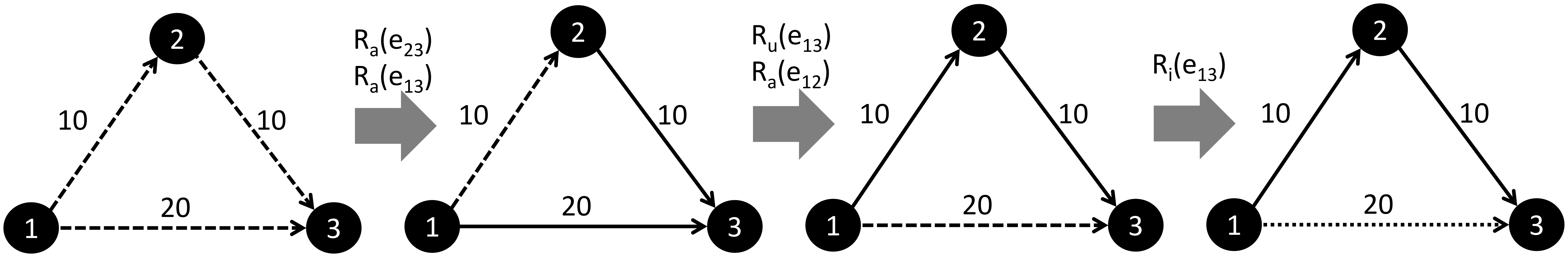}}
\end{center}
\caption{Example of non-termination of \tcOperation (\Cref{fig:ktc-algorithm-after-step2}) and a possible solution}
\end{figure}
\Cref{fig:sample-topology-for-shared-NACs-solution} depicts a possible solution for this particular situation:
When processing the unclassified \linkNameLong{12}, the wrong classification of \linkNameLong{13} is reverted by unclassifying \linkName{13}.
Afterwards, the \activationRuleLong is successfully applied to \linkNameLong{12}, leaving only one unclassified link, \linkName{13}, which finally gets inactivated.
The preceding discussion shows that 
\begin{inparaenum}
\item processing links in an \enquote{unfavorable} order may cause wrong classifications, and 
\item at least two possible solutions to this problem exist.
\end{inparaenum}

As a first solution, we could pre-determine the order of processing of the unclassified links to ensure that less (or even no) unclassifications are necessary.
Currently, the processing order is non-deterministic:
the \findUnclassifiedLinkRuleLong may return any of the unclassified links in the topology.
Pre-determining the processing order brings the benefit of no or few required re-classifications of links, but it may be non-trivial to identify such a suitable link processing order.

As a second solution, we could re-unclassify links whose state is known to be \enquote{wrong.}
In the case of \ktc, \enquote{wrong} means that a particular link blocks the activation and inactivation of an unclassified link.
This solution brings the benefit that the processing order needs not be pre-determined, leaving this decision to the underlying \GT engine or code generator that executes the \GT rules.
The downside is that links may be re-unclassified many times if the re-unclassification logic is not properly specified.

\subsubsection{Introducing Handler Operations}
We propose to apply the second solution for the \ktc case study.
We take the very same approach as in \Cref{sec:enforcing-applicability}:
Our goal is to eventually drop two critical pairs of negative application conditions: \NACaa{2} and \NACia{2} plus \NACaa{8} and \NACia{8}.
To this end, we treat these \ACs in the same way as the additional \ACs of the unrestrictable rules:
We may replace the application of the \activationRuleLong and the \inactivationRuleLong with invocations of appropriate handler operations.
In \handlerOperation{\activationRule}, we may then introduce the restoration operations \restorationOperation{\NACaa{2}} and \restorationOperation{\NACaa{8}}, and
in \handlerOperation{\inactivationRule}, we may then introduce the restoration operations \restorationOperation{\NACia{2}} and \restorationOperation{\NACia{8}}.
The remaining \ACs---\PACii{1} and \NACaa{1}---are considered later.

This time, however, we proceed slightly differently, achieving the same result but with improved performance and readability:
Instead of inserting a restoration operation for \emph{each} of the two affected rules, we only invoke the restoration operation \emph{once} prior to applying the first rule (\activationRule in this case).
This is valid because the derived restoration operations would be identical for both rules, and no new constraint violations may be created by invoking the \activationRuleLong, because the \inactivationRuleLong is only tried if the first rule application has failed.

\Cref{fig:ktc-algorithm-after-step3} depicts the resulting, and final, version of the operation \tcOperation, the additional handler operation, and the \activationRuleLong and \inactivationRuleLong after removing the \NACs that are now always fulfilled.
The handler operation consists of two loops that unclassify the weight-maximal links \linkName{13} and \linkName{23} in each match of the premise of corresponding \NACs.
For each unclassification of an active link \linkName{13}/\linkName{32}, the repair operation \repairUOperation of the \unclassificationRuleLong (as shown in \Cref{fig:repair-operations-for-unclassification-rule}) is invoked to repair possible constraint violations.

Note that the premise of \NACaa{1} and the conclusion of \PACii{1} are equivalent, \idest, each match of \NACaa{1} is also a match of \PACii{1}.
This means that the \activationRuleLong in \Cref{fig:ktc-algorithm-after-step3} is applicable to a given link \linkName{12} \emph{if and only if} the \inactivationRuleLong is not applicable to the same link.
For this reason, we could remove \PACii{1} from \inactivationRule for performance and readability reasons.
\begin{figure}
    \begin{center}
        \includegraphics[width=\textwidth]{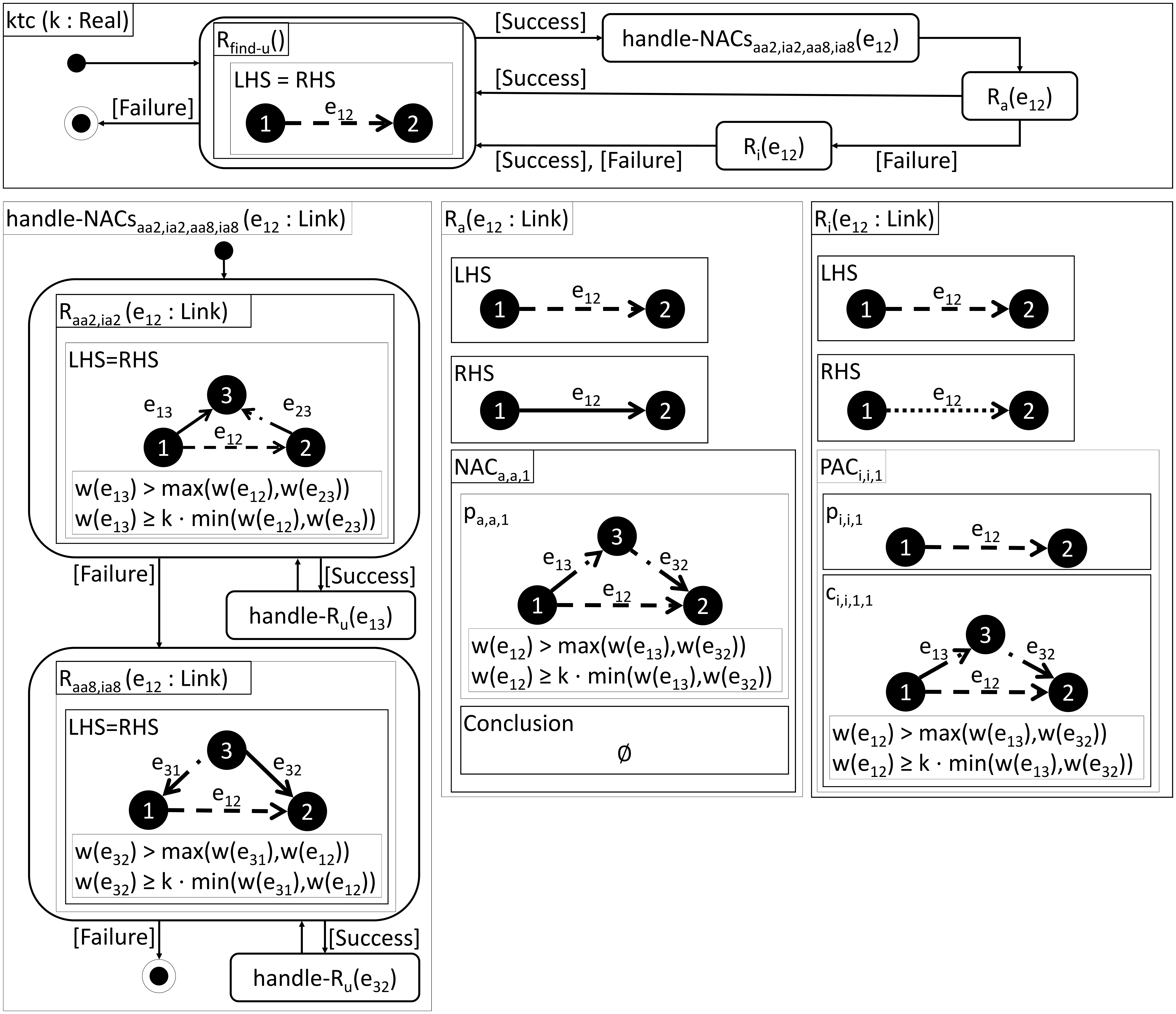}
    \end{center}
    \caption{\GToperationNN \tcOperation, the final specification of \ktc after enforcing termination in \Cref{sec:enforcing-termination}}
    \label{fig:ktc-algorithm-after-step3}
\end{figure}

\subsubsection{Proof of Termination}
After showing that the \TC algorithm produces strongly consistent topologies, which fulfill the \activeLinkConstraintKTCLong, the \inactiveLinkConstraintKTCLong, and the \unclassifiedLinkConstraintLong, it remains to show that the \TC algorithm, shown in \Cref{fig:ktc-algorithm-after-step3}, terminates.

\begin{theorem}[Termination]
    \tcOperation terminates for any input topology.
\end{theorem}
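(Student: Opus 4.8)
The plan is to prove termination by exhibiting a well-founded ranking function that strictly decreases with every iteration of the main loop of \tcOperation. The naive measure ``number of unclassified links'' does \emph{not} work here, because the handler operation \handleUnclassification that was inserted in \Cref{sec:enforcing-termination} may unclassify additional links \emph{before} \activationRule or \inactivationRule classifies the currently processed link, so the unclassified count can temporarily grow. The central structural observation that rescues the argument is that \emph{unclassifications only ever propagate to strictly heavier links}: whenever some link is unclassified while a link \linkName{12} is being processed, the unclassified link has strictly larger weight than \linkName{12}.

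First I would establish this ``upward propagation'' property as a lemma. When the handler processes an unclassified link \linkName{12}, it only unclassifies the weight-maximal links of the triangles matched by the shared negative application conditions \NACaa{2} and \NACaa{8}. By the \activeLinkConstraintKTCLong, such a weight-maximal link \linkName{13} satisfies $\weightOf{\linkName{13}} > \weightOf{\linkName{12}}$, since that constraint uses the \emph{strict} inequality $\weightOf{\linkVariableab} > \max(\weightOf{\linkVariableac}, \weightOf{\linkVariablecb})$. Each such unclassification triggers \repairUOperation, which resolves violations of \PACui{2}, \PACui{4}, \PACui{5}, and \PACui{6}; by the \inactiveLinkConstraintKTCLong these can affect only an inactive link that was itself the (strictly heavier) weight-maximal link of a triangle that just lost one of its two lighter legs. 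Hence every link unclassified during the processing of \linkName{12}---directly by the handler or recursively by repair---is strictly heavier than \linkName{12}.

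Next I would define the ranking function. Fix a total order $e_1, \dots, e_m$ of all links that refines the weight order, breaking ties arbitrarily; the strict-heaviness above guarantees that the relevant rank comparisons do not depend on the tie-breaking. To each loop-head configuration I associate the bit vector $(x_1, \dots, x_m)$ with $x_i = 1$ if $e_i$ is unclassified and $x_i = 0$ otherwise, ordered lexicographically with the lightest link $e_1$ as the most significant position. In each iteration some unclassified link $e_k$ is selected by \findUnclassifiedLinkRule and, after the handler has run, classified by \activationRule or \inactivationRule (exactly one applies, because after the handler the only residual conditions are the complementary \NACaa{1} and \PACii{1}). Thus, comparing consecutive loop-head configurations, coordinate $x_k$ flips from $1$ to $0$, while by the propagation lemma every other coordinate that changes sits at a position $> k$; the coordinates at positions $< k$ are untouched. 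The first changed coordinate is therefore $x_k$ with a $1 \to 0$ transition, so the bit vector strictly decreases lexicographically.

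Since the ranking function takes values in the finite set $\{0,1\}^m$ and strictly decreases at every iteration, the main loop executes only finitely many iterations. Each individual iteration performs only finitely much work, because the handler operation together with the recursive \handleUnclassification and \repairUOperation invocations was already shown to terminate in \Cref{sec:enforcing-applicability}. Combining both facts yields termination of \tcOperation. I expect the main obstacle to be the rigorous justification of the upward-propagation lemma: one must verify that \emph{no} unclassification---across the handler's foreach loops and the full recursion tree of the repair operations---ever touches a link of weight less than or equal to that of the link currently being processed, which rests precisely on the strict weight inequalities built into both \ktc-specific constraints.
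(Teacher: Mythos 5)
Your proof is correct and takes essentially the same approach as the paper's: both order the links by weight, rely on the key fact that the pre-processing handler (and its recursive repairs) unclassifies only links of strictly greater weight than the one currently being processed, and conclude via a well-founded lexicographic ordering on link-state sequences --- the paper's relation $\prec$ on the sequences $s_i$ is exactly your bit-vector order read in the opposite direction. Your explicit justification of the upward-propagation lemma and of per-iteration finiteness simply spells out what the paper asserts more tersely.
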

\begin{proof}   
    We have to show that the number of rule applications of the \findUnclassifiedLinkRuleLong is finite for any input topology.
    Consider a topology with link set $E$.    
    We consider the sequence of all link states $s_i(e_1),\hdots,s_i(e_m)$ with $m:=|E|$ after the $i$-th execution of \findUnclassifiedLinkRule, where the links are ordered according to their weight:
    \begin{align*}
        \forall v,w \in \{1,\dots,m\}: \weight(e_v) < \weight(e_w) \Rightarrow v < w
    \end{align*}
    Note that the right-to-left implication does not hold because no order is imposed on two links \linkName{v}, \linkName{w} that have an identical weight value.
    We compare two sequences of link states, $s_{i}$ and $s_j$, as follows:
    $s_i \prec s_j$ if and only if%
    \begin{inparaenum}[(i)]
        \item some link $e_v$ is unclassified in $s_i$ and active or inactive in $s_j$, and 
        \item the states of all links with a weight less than $e_v$ are identical in $s_i$ and $s_j$,
    \end{inparaenum} formally:
    \begin{align*}
    s_i \prec s_j \;:\Leftrightarrow\; 
        &\exists v \in \{1,\dots,m\}: s_i(e_v) = \UNCL \;\wedge\; s_j(e_v) \in \{\ACT,\INACT\}\\
        &\wedge \forall w \in \{1,\dots,v-1\} : s_i(e_w) = s_j(e_w)
    \end{align*}
    Note that any sequence that consists of active and inactive links only is an upper bound for $\prec$.
    
    We now show that $s_{i-1} \prec s_{i}$ for $i > 1$.
    Let $e_v$ be the link that is bound by applying the \findUnclassifiedLinkRuleLong.
    The pre-processing operation unclassifies only links $e_w$ if $\weight(e_w) > \weight(e_v)$ with $w > v$.
    The \activationRuleLong or the \inactivationRuleLong activate or inactivate $e_v$, respectively.
    
    Therefore, $s_{i-1}\prec s_i$ because
    \begin{inparaenum}
        \item the first $v{-}1$ elements of $s_{i-1}$ and $s_i$ are identical, and
        \item $s_{i-1}(e_w) = \UNCL$ and $s_i(e_w) \in \{\ACT,\INACT\}$.
    \end{inparaenum}
    The termination follows because any ordered sequence $s_1 \prec s_2 \prec \dots$ has a finite length.
\end{proof}

\paragraph{Example}
To illustrate the comparison of link state sequences, \Cref{tab:sample-topology-for-shared-NACs-solution-link-states} shows the link state sequences for the previous example in \Cref{fig:sample-topology-for-shared-NACs-solution}.
The columns are sorted by increasing weight.
In the third iteration, the pre-processing operation unclassifies \linkName{13}, which ensures that the \activationRuleLong may be applied to \linkName{12}.
\begin{table}
    \caption{Link states after each iteration of the example in \Cref{fig:sample-topology-for-shared-NACs-solution}}.
    \label{tab:sample-topology-for-shared-NACs-solution-link-states}
\begin{center}
\begin{tabular}{c|ccc}
    \toprule
    \textbf{Iteration}  & \multicolumn{3}{c}{\textbf{State after Iteration}}\\
     & $\state(\linkName{12})$ & $\state(\linkName{23})$ & $\state(\linkName{13})$ \\
    \midrule
    (initial) & \UNCL & \UNCL & \UNCL \\
    1 & \UNCL & \ACT & \UNCL \\
    2 & \UNCL & \ACT & \ACT \\
    3 & \ACT & \ACT & \UNCL \\
    4 & \ACT & \ACT & \INACT \\
    \bottomrule
\end{tabular}
\end{center}
\end{table}

This proof concludes the refinement of the \TC rules.
We have shown that the refined rules preserve the \activeLinkConstraintKTCLong and the \inactiveLinkConstraintKTCLong and that introducing the pre-processing operation enforces the termination of the \TC algorithm.
\subsection{Summary of Rule Refinement}
\label{sec:refinement-summary}

We conclude this section with a short overview of its results.
First, to ensure that the \TC and \CE rules preserve weak consistency (G1), we first applied the constructive approach by Heckel and Wagner~\cite{HW95} in \Cref{sec:refinement-methodology}.
Then, we introduced the complementary restoration step in \Cref{sec:enforcing-applicability}.
This step is necessary to fulfill the goal that the unrestrictable \GTrulesNN---the \unclassificationRuleLong and the \CE rules---should always be applicable (G2).
The restoration step systematically transforms all additional application conditions into restoration operations that eliminate (anticipated) constraint violations caused by applying the original \GT rules.
Finally, we observed that an unfavorable processing order of unclassified links may lead to a non-termination of \tcOperation, which contradicts goal G3.
We solved this problem by re-using the concept of restoration operations:
We transformed the problematic application conditions into appropriate restoration operations.
Thereby, we enforced the applicability of at least one of the \TC rules in each iteration of the \TC algorithm.

\section{Simulation-Based Evaluation}
\label{sec:evaluation}

In this section, we evaluate the incremental specification of \ktc constructed in \Cref{sec:rule-refinement}.
For this evaluation, we use the \GT tool \eMoflon~\cite{LAS14} to automatically transform the specification of \ktc into executable Java code.
For simulating \WSNs, we use the network simulation platform \Simonstrator~\cite{RSRS15}.

In \Cref{sec:eval-rqs}, we present the research questions to be answered during this evaluation.
In \Cref{sec:eval-setup}, we outline the evaluation setup.
In \Cref{sec:eval-rq-correctness,sec:eval-rq-incrementality,sec:eval-rq-generalizability,sec:eval-rq-performance}, we present and discuss the results to answer the four identified research questions.
In \Cref{sec:eval-threats}, we discuss threats to validity and summarize the results of the evaluation.

\subsection{Research Questions}
\label{sec:eval-rqs}
We emphasize that the focus of this paper is \emph{not} to develop a new \TC algorithm but to derive a novel, incremental version of an established \TC algorithm using the correct-by-construction approach presented in the preceding sections.
Therefore, the following research questions focus on the general validity, applicability, and performance of our approach.
Our goal is not to perform a detailed analysis of the \TC algorithm \wrt network-specific metrics (\eg, network lifetime, energy consumption).
The answers to the following research questions are summarized in \Cref{tab:answers-to-rqs} at the end of this section.

\textbf{\RQCorrectnessLong:}
The implementation that is derived from the \GT-based specification should always be correct.
The implementation is correct if the topology is always weakly consistent and  if \tcOperation always terminates and returns a strongly consistent topology.
More specifically, in \Cref{sec:eval-rq-correctness}, we will address the following questions:
\begin{itemize}
    \item Is the topology always weakly consistent after \CE handling?
    \item Does \tcOperation always terminate? Is the resulting topology strongly consistent?
\end{itemize}    

\textbf{\RQIncrementalityLong:}
Incrementality is an fundamental property of the \TC{} algorithms developed using our approach.
We quantitatively assess incrementality in terms of the number of link state modifications required to handle \CEs and to re-optimize the topology.
More specifically, in \Cref{sec:eval-rq-incrementality}, we will address the following questions:
\begin{itemize}
\item 
Which properties of a topology (\eg, node count, node density) influence the cost of handling \CEs and invoking \TC?
\item How large is the cost of executing incremental \ktc compared to batch \ktc?
\end{itemize}

\textbf{\RQPerformanceLong:}
From a practical perspective, it is crucial that simulations may be performed in sensible time.
While we only have limited influence on the execution time of the actual simulation (\eg, due to simulated network traffic), the overhead of executing \tcOperation should remain reasonable.
The purpose of this research question is to evaluate this overhead quantitatively.
More specifically, in \Cref{sec:eval-rq-performance}, we will address the following questions:
\begin{itemize}
    \item What fraction of execution time is spent on executing \tcOperation?
    \item Is the total execution time of \tcOperation sensible for topologies of realistic size?
\end{itemize}

\textbf{\RQGeneralizabilityLong:}
The proposed methodology shall help reducing the gap between specification and implementation that is pertinent in the \WSN community.
This research question addresses the general applicability of the proposed approach.
More specifically, in \Cref{sec:eval-rq-generalizability}, we will discuss the following questions:
\begin{itemize}
\item 
Is meta-modeling suitable to specify all relevant properties of \WSN topologies?
\item 
Are \GT rules suitable to specify \CEs and \TC operations?
\item 
Are graph constraints and the proposed refinement algorithm suitable to specify and ensure the preservation of consistency constraints and optimization goals of typical \TC algorithms?
\item
Is a \TC developer able to apply the described methodology?
\end{itemize}

\subsection{Evaluation Setup}

\label{sec:eval-setup}
Following best practice for simulation studies (\eg, \cite{Kurkowski2005,Hiranandani2013}), we document the evaluation setup in detail to foster reproducibility of our results.
The technical platforms of this evaluation are the \GT tool \eMoflon \cite{LAS14} and the network evaluation platform \Simonstrator~\cite{RSRS15} with its contained network simulator \PFS~\cite{SGRNKS11}, as shown in \Cref{fig:evaluation-setup} and as described in detail in the following.
\begin{figure}
    \begin{center}
        \includegraphics[width=\textwidth]{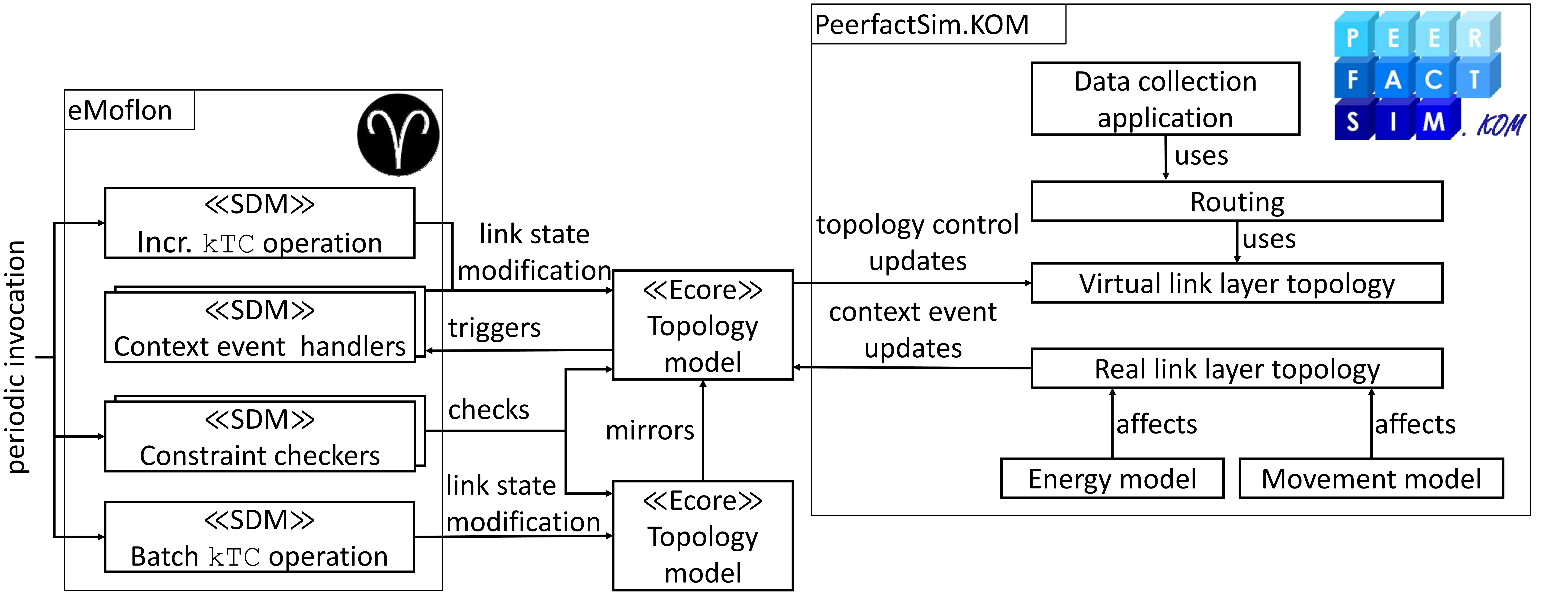}
        \caption{Overview of the evaluation setup}
        \label{fig:evaluation-setup}
    \end{center}
\end{figure}

\paragraph{eMoflon}
The \CE handlers and the \GToperationNN \tcOperation have been specified using a visual syntax of story diagrams in \eMoflon.
The specification is compiled into EMF-compliant Java code, which operates on an EMF-based topology model.
The topology meta-model has been specified in Ecore~\cite{SBMP08}.
Whenever the topology model is modified by a \CE in the simulator, the corresponding handler operation is triggered.
The \TC algorithm \tcOperation periodically updates the topology model via \LSMs.

\paragraph{Simulator}
\PFS~\cite{SGRNKS11} is a time-discrete network simulator that allows to simulate protocols on the underlay (\idest, the physical and link layer of the ISO-OSI reference model~\cite{Zimmermann1980}) as well as on the overlay (\idest, network, transport and application layer~\cite{Zimmermann1980}).

\paragraph{Topology control via virtual topologies}
Upper-layer protocols typically operate on the topology view provided by lower-layer protocols.
We created a generic \emph{topology framework} that introduces \emph{virtual} topologies:
The virtual topology is a view of the corresponding real topology, which is used by upper-layer applications instead of the real topology.
In this evaluation, we consider the real and virtual \emph{link layer} topology.
Using a virtual link layer topology allows us to apply \TC transparently, \idest, without the upper-layer protocols noticing that they operate on a modified view of the real topology.
The real link layer topology is affected by two types of \CEs:
\begin{itemize}[(i)]
\item
The \emph{energy model} of a node simulates the depletion of its battery over time.
The energy model determines the initial charge of the battery and the amount of energy that transmitting a message of a certain size requires.
We say that a node is \emph{alive} at a certain point in time if its battery is not empty.
\item
The \emph{movement model} specifies the movement of each node inside the bounds of the simulation area.
In this scenario, we use a modified version of the Gauss-Markov movement model~\cite{Camp2002}.
To simulate the typically low dynamics in \WSNs, we introduced a \emph{hesitation probability} of \evalParHesitation:
This is the probability that a node does not move within a time step.
Node movement causes link weight modifications and link additions/removals if one node enters/leaves another node's transmission range.
\end{itemize}
The modifications of the topology model caused by \CEs trigger the corresponding handler operations.
In the reverse direction, \LSMs of the topology model that result from an invocation of \tcOperation are propagated to the virtual link layer topology.
The virtual link layer topology contains all links in the original link layer topology that are either active or unclassified, which ensures that the link layer topology remains connected even in the presence of \CEs.

In this evaluation, \TC is invoked periodically every \evalParTCInterval (simulation time).
The parameter $k$ of \ktc{} is set to \evalParK, which is a typical value~\cite{KSSVMS14}.
The set of \CEs is collected during this period and handled prior to invoking \TC{}.
We subsume the subsequent execution of \CE handling and \TC invocation under the term \emph{\TC run}, \idest, every \evalParTCInterval, a \TC run is performed.

\paragraph{Overlay}
On top of the virtual link layer topology, a typical (many-to-one) data collection application~\cite{Yick2008} is running:
All \emph{sensor} nodes periodically send data messages to a central \emph{base station} node, which has a fixed power supply (modeled using a very large battery capacity of \evalParBatteryTarget).
The transmitted messages are routed by a global knowledge routing algorithm, which operates on the complete virtual link layer topology.

\Cref{tab:evaluation-parameters-overview} summarizes the parameters of the simulation setting that are fixed for all configurations.
We run each simulation in \emph{terminating mode}, \idest, we abort the simulation after \evalParSimulationDuration, regardless of the remaining number of alive nodes.
This means that, in each simulation run, \TC is performed $\frac{\evalParSimulationDuration}{\evalParTCInterval} - 1 =  \evalParMeasurementsPerRun$ times because we neglect the initial \TC run (at \evalParTCInterval of simulated time) for technical reasons.
All experiments were performed on a 64-bit machine with an Intel i7-2600~CPU (\SI{2}{cores}, \SI{3.7}{\giga\hertz}) and \SI{8}{\giga\byte} of RAM running Windows 7 Professional.
\begin{table}
\begin{minipage}{\textwidth}
\begin{center}

    \caption{Common parameter values of all simulation configurations}
    \label{tab:evaluation-parameters-overview}
    \begin{tabular}{ll}
        \toprule
        \textbf{Parameter} & \textbf{Value}\\
        \midrule
        \multicolumn{2}{l}{\textbf{Simulation}}\\
        \midrule
        Network simulator & \PFS / \Simonstrator 2.4~\cite{RSRS15}\\
        GT-tool & eMoflon 2.15.0~\cite{LAS14}\\
        Number of runs per configuration & \evalParSeedCount\\
        Seeds & 1\dots\evalParSeedCount\\
        Simulation type & terminating (after \evalParSimulationDuration of simulated time)\\
        Measurements per run & \evalParMeasurementsPerRun\\
        Code availability & as SHARE virtual machine \cite{GM11}\footnote{For details, please see \url{https://github.com/Echtzeitsysteme/CorrectByConstructionTC-JVLC16}}\\
        World size & \emph{variable}  (see \Cref{tab:evaluation-configurations-overview})\\
        \midrule
        \multicolumn{2}{l}{\textbf{Node configuration}}\\
        \midrule
        Source node count (initial) & \emph{variable} (see \Cref{tab:evaluation-configurations-overview}) \\
        Target node count & 1\\
        Transmission radius & \SI{131}{\meter}\\
        Movement model & Gauss-Markov~\cite{Camp2002} with hesitation ($p=\evalParHesitation$)\\
        Placement model & uniform at random\\       
        Initial battery level (sources) & \evalParBatterySource\\
        Initial battery level (target) & \evalParBatteryTarget\\
        \midrule
        \multicolumn{2}{l}{\textbf{Link layer and topology control}}\\
        \midrule
        Link count (initial) & \emph{variable} (see \Cref{tab:evaluation-configurations-overview})\\
        Link layer & IEEE 802.11 Ad-Hoc\\
        \TC algorithm & incremental \ktc\\
        $w$ calculation & Euclidean distance between incident nodes\\
        $k$ parameter & \evalParK\\
        Interval of \TC runs& \evalParTCInterval\\
        \TC runs per simulation run & \evalParMeasurementsPerRun\\
        \midrule
        \multicolumn{2}{l}{\textbf{Overlay}}\\
        \midrule
        Application & Data collection\\
        Communication pattern & many-to-one\\
        Routing & Global knowledge\\
        Message size & \SI{1}{\kilo\byte}\\
        Message transmission frequency & 1 message every \SI{10}{\second}\\
        \bottomrule
    \end{tabular}
\end{center}
\end{minipage}
\end{table}

To obtain a set of \emph{(evaluation) configurations}, we vary the node count of the initial topology as well as the side length of the quadratic area onto which the nodes are distributed uniformly at random.
In total, we investigate six different evaluation configurations, summarized in \Cref{tab:evaluation-configurations-overview}, that result from varying the initial node count $n \in \{100, 1000\}$ and the world size $w \in \{\SI{750}{\meter}, \SI{500}{\meter}, \SI{250}{\meter}\}$ for $n = 100$ and $w \in \{\SI{2000}{\meter}, \SI{1000}{\meter}, \SI{1500}{\meter}\}$ for $n = 1000$.
Each configuration is simulated \evalParSeedCount times to obtain representative results.
In the table, the configurations are ordered from the smallest and sparsest initial topology (\cfgSmallSparse) to the largest and densest topology (\cfgLargeDense).
\begin{table}
    \begin{center}
        \caption{Variable parameters in the six evaluation configurations (\textbf{$n$}: initial node count,
            \textbf{$m$}: initial link count in real link layer, 
            \textbf{$w$}: side length of quadratic simulation area,
            \textbf{$h$}: average hop count to base station, 
            \textbf{$d_{\text{out}}$}: average out-degree,
            values of $m$, $h$, and $d_{\text{out}}$ are averaged over all \evalParSeedCount repetitions)}
        \label{tab:evaluation-configurations-overview}
        \begin{tabular}{crrrrr}
            \toprule
            \multicolumn{1}{c}{\textbf{ID}} & 
            \multicolumn{1}{c}{\textbf{$n$}} & 
            \multicolumn{1}{c}{\textbf{$w$ [m]}} & 
            \multicolumn{1}{c}{$m$} & 
            \multicolumn{1}{c}{\textbf{$h$}} & 
            \multicolumn{1}{c}{\textbf{$d_{\text{out}}$}}
            \\
            \midrule
            \cfgSmallSparse&100&750&\numprint{812}&3.6&7.7\\
            \cfgSmallMedium&100&500&\numprint{1651}&2.1&16.0\\
            \cfgSmallDense&100&250&\numprint{5161}&1.2&51.2\\
            \midrule
            \cfgLargeSparse&\numprint{1 000}&\numprint{2 000}&\numprint{12683}&7.8&12.1\\
            \cfgLargeMedium&\numprint{1 000}&\numprint{1 500}&\numprint{22092}&5.5&21.6\\
            \cfgLargeDense&\numprint{1 000}&\numprint{1 000}&\numprint{48072}&3.6&47.5\\
            \bottomrule
        \end{tabular}
    \end{center}
\end{table}
The node (out-)degree $d_{\text{out}}$ is an indicator of the density of the topology:
A high node degree indicates that a topology is dense, while a low node degree indicates a sparse topology.
The initial topology of the two sparsest topologies is physically disconnected in 3 out of \evalParSeedCount simulation runs for \cfgSmallDense and 1 out of \evalParSeedCount simulation runs for \cfgLargeSparse.
The lack of connectivity is not a problem because the topology
\begin{inparaenum}
\item may become physically connected due to node movements and
\item will eventually become physically disconnected as soon as more and more nodes run out of energy.
\end{inparaenum}
The average hop count $h$ to the base station is an indicator of the extent of the considered topology.

\paragraph{Batch \ktc}
This evaluation shall illustrate benefits of incremental \TC compared to batch \TC.
For this reason, we also implemented batch \ktc using SDM, which operates on a separate topology model that mirrors the structure of the topology model that is modified by  incremental \ktc.
Prior to each application of batch \ktc, all links in the mirrored topology are being unclassified.
In the following, we will refer to the incremental and batch variants of \ktc as \iktc and \bktc, respectively.

\subsection{\RQCorrectnessLong}
\label{sec:eval-rq-correctness}

A central contribution of our approach is that the derived \TC algorithms and \CE handlers are \emph{correct by construction}.
Recall that \emph{correctness} means that
\begin{inparaenum}
\item the \TC algorithm always terminates and produces strongly consistent output topologies for weakly consistent input topologies, and 
\item the \CE handlers preserve weak consistency.
\end{inparaenum}
Evidently, this property only holds if all steps of the methodology are performed correctly.
\Cref{tab:rq-correctness-overview} summarizes steps in the methodology where correctness of the derived \TC algorithm and \CE handlers may be corrupted, and it shows the steps that we took to mitigate or eliminate these problems.
\begin{table}
\begin{center}
\caption{Threats to correctness and applied mitigation strategies (\RQCorrectness)}
\label{tab:rq-correctness-overview}
\begin{tabular}{p{2.5cm}|p{3cm}p{7cm}}
    
\toprule

\textbf{Step} & \textbf{Potential Threat} & \textbf{Mitigation Strategy}\\

\midrule

Constraint specification (\Cref{sec:constraints})&
The constraints do not represent the intended objective.&
We chose the well-established \ktc algorithm, which has already proved to achieve energy savings in a testbed evaluation~\cite{SZS15}, and carefully translated its basic ideas into graph constraints.

\\

\midrule

Rule refinement (\Cref{sec:refinement-methodology})&
The derived application conditions fail to preserve the specified graph constraints.&
The translation of graph constraints into application conditions is theoretically correct, but has been performed by hand for this evaluation, which may not be accurate.
To increase confidence in our implementation, the graph constraints that constitute weak and strong consistency are checked during simulation after \CE handling and \TC invocation, respectively.
The large number of \CEs and \TC invocations during the simulation serve as test suite for the implementation.
To mitigate the potential errors of performing the manual translation, an automatic derivation of the application conditions would be possible.\\

\midrule

Derivation of \CE handlers (\Cref{sec:enforcing-applicability})&
The \CE handlers fail to eliminate all possible violations of their original application conditions.&
The check of weak consistency after every \CE handling serves increases the confidence in the derived handlers.\\

\midrule

Code generation (\Cref{sec:evaluation})&
The generated Java code fails to implement the \GT-based specification.&
We rely on \eMoflon, which is practically proven and harnessed by a test suite containing more than 200 unit and system tests.\\

\bottomrule

\end{tabular}
\end{center}
\end{table}

\paragraph{Metric}

Clearly, observing zero constraint violations in the simulation study is necessary, but not sufficient, to confirm the correctness of the implementation.
Therefore, we checked for weak consistency after invoking the \CE handling and for strong consistency after invoking \iktc and \bktc.
The checkers for weak and strong consistency were implemented using SDM and were unit-tested to ensure correct functionality.

\paragraph{Results}
In total, no consistency violations were detected.
Context event handling and \TC were invoked 120 times per simulation run, which amounts to \numprint{3600} consistency checks per configuration and \numprint{21 600} constraint checks during the whole evaluation.

\paragraph{Discussion}
Every single consistency check is a test of the incremental (and batch) implementation.
The total amount of \numprint{21 600} successful consistency checks during the simulation considerably increases the confidence in the correctness of our implementation.
Still, these results do not constitute a formal proof of correctness of the implementation.
Therefore, we may give the following answers to \RQCorrectness:
During our simulation study, \tcOperation always terminated with strongly consistent output topologies, and the \CE handling always preserved weak consistency.

\subsection{\RQIncrementalityLong}
\label{sec:eval-rq-incrementality}

\paragraph{Metrics}
In the literature, no consensus exists considering how to assess incrementality of an algorithm.
In this evaluation, we will assess incrementality in terms of boundedness~\cite{Ramalingam1993}:
An algorithm is \emph{incremental \wrt to boundedness} if the cost of reacting to a change depends on the size of the change and of properties that can be calculated with local knowledge only (\eg, the out-degree of a node).
In our setting, changes are sets of \CEs and the cost of reacting to a change is the required number of \LSMs during the \CE handling and the \TC invocation after a set of \CEs.
More precisely, we define the \emph{\metricScopeLong} of a \TC run $r$ as follows:
\begin{align*}
    \text{scope of } r =\,& \text{number of \LSMs required for \CE handling of } r\\
    +\,& \text{number of \LSMs required for \TC invocation of } r
\end{align*}
Our analysis of the \CE and \TC operations reveals the following results concerning the theoretical boundedness of the scope for all \CEs, as summarized in \Cref{tab:eval-rq-incr-theoretical-boundedness}:
\begin{table}
\begin{center}
\caption{Theoretical boundedness \wrt scope per type of \CE}
\label{tab:eval-rq-incr-theoretical-boundedness}
\begin{tabular}{lccc}
\toprule
\multicolumn{1}{c}{\textbf{Context Event}}&
\multicolumn{1}{c}{\textbf{CE Handl. Bounded?}}&
\multicolumn{1}{c}{\textbf{TC Invoc. Bounded?}}&
\multicolumn{1}{c}{\textbf{\TC Run Bounded?}}
\\
\midrule
Node addition & \OK & \OK &\OK\\
Node removal & \OK & \OK&\OK\\
Link addition & \OK & \notOK&\notOK\\
Link removal  & \notOK & \notOK&\notOK\\
Weight mod. & \notOK & \notOK&\notOK\\
\bottomrule
\end{tabular}
\end{center}
\end{table}
Node addition and removal entail no \LSMs and, therefore, \tcOperation is bounded for these \CEs.
Link addition is bounded during \CE handling because the new link is unclassified and may not violate weak consistency (see \Cref{sec:refinement-ce-inact,sec:refinement-ce-act}).
In contrast, the subsequent \TC invocation may be unbounded due to the \NAC{} handling in \tcOperation (see \Cref{sec:enforcing-termination}).
In total, \iktc is theoretically unbounded for link additions.
For link removal and weight modifications, the \CE handling (and therewith the \TC invocation) is already unbounded due to the link unclassifications required for restoring weak consistency (see \Cref{sec:enforcing-applicability}).

Still, it is essential that incrementality holds for representative, realistic scenarios.
For this reason, we pratically evaluate the scope of \iktc for each configuration.
To assess the influence of the average node out-degree in the topology, we measure the \emph{\metricDegreeNormalizedScopeLong} of a \TC run $r$, which is defined as follows: 
\begin{align*}
    \text{\metricDegreeNormalizedScopeLong of } r
    = 
    \frac%
    {\text{scope of } r}%
    {\text{average node out-degree in real link layer topology}}
\end{align*}
The degree-normalized scope is measured in \LSMs (per link).
To compare \iktc with \bktc, we measure the \emph{\metricIkTCVsBkTCLong} of a \TC run $r$, which is defined as follows:
\begin{align*}
\text{\metricIkTCVsBkTCLong}
= 
\frac%
{\text{number of LSMs during } r \text{ when applying \iktc} }%
{\text{number of LSMs during } r \text{ when applying \bktc} }
\end{align*}

\paragraph{Results}

The four plots in \Cref{fig:eval-rq-incr-scope-by-degree-vs-node-count} show the \metricDegreeNormalizedScopeLong \vs the number of alive nodes in the topology of the four most extreme scenarios (in terms of size and node density): \cfgSmallSparse, \cfgSmallDense, \cfgLargeSparse, and \cfgLargeDense.
Each of the four plots contains a total of \evalParMeasurementsPerRun data points that correspond to the \evalParMeasurementsPerRun \TC runs (\idest, \CE handling and \TC invocation), each of which processes the set of \CEs that has been collected during the previous \evalParTCInterval of simulation time, at \SI{20}{\minute}, \SI{30}{\minute}, \dots, \evalParSimulationDuration.
For technical reasons, there is no data point for the first \TC run at \evalParTCInterval.
Each data point represents the average \metricDegreeNormalizedScopeLong of the \evalParSeedCount results at the same particular point in time in all \evalParSeedCount simulation runs.
Please note that the y-axis is logarithmic due to the large range of the data.

\begin{figure}
\begin{center}
    \def\subcaptionboxWidthTwo{.45\textwidth}
    \def\filename{rq2_incr_scope-normalized-by-node-outdgree_vs_node-count_log}    
    \def\figALabel{\cfgSmallSparse
        \label{fig:eval-rq-incr-scope-by-degree-vs-node-count-small-sparse}}
    \def\figBLabel{\cfgSmallDense
        \label{fig:eval-rq-incr-scope-by-degree-vs-node-count-small-dense}}
    \def\figCLabel{\cfgLargeSparse
        \label{fig:eval-rq-incr-scope-by-degree-vs-node-count-large-sparse}}
    \def\figDLabel{\cfgLargeDense
        \label{fig:eval-rq-incr-scope-by-degree-vs-node-count-large-dense}}
    \def\figA{\rootOfConfiguration{0099}{0750}/\filename}
    \def\figB{\rootOfConfiguration{0099}{0250}/\filename}
    \def\figC{\rootOfConfiguration{0999}{2000}/\filename}
    \def\figD{\rootOfConfiguration{0999}{1000}/\filename}
    \subcaptionbox%
        {\figALabel}[\subcaptionboxWidthTwo]%
        {\includegraphics[width=\subcaptionboxWidthTwo]{\figA}}
    \subcaptionbox%
        {\figBLabel}[\subcaptionboxWidthTwo]%
        {\includegraphics[width=\subcaptionboxWidthTwo]{\figB}}
    
    \subcaptionbox%
        {\figCLabel}[\subcaptionboxWidthTwo]%
        {\includegraphics[width=\subcaptionboxWidthTwo]{\figC}}
    \subcaptionbox%
        {\figDLabel}[\subcaptionboxWidthTwo]%
        {\includegraphics[width=\subcaptionboxWidthTwo]{\figD}}
\end{center}
\caption{Comparison of logarithmic \metricDegreeNormalizedScopeLong and \metricAliveNodeCountLong for \RQIncrementality}
\label{fig:eval-rq-incr-scope-by-degree-vs-node-count}
\end{figure}

The four plots in \Cref{fig:eval-rq-incr-iktc-vs-bktc} show the \metricIkTCVsBkTCLong \vs the number of alive nodes.
As in \Cref{fig:eval-rq-incr-scope-by-degree-vs-node-count}, each data point represents the results of a \TC run at a particular point in time during the simulation, averaged across all \evalParSeedCount simulation runs per configuration.
Please note that also here the y-axis is logarithmic due to the large range of the data.

\begin{figure}
    \begin{center}
        \def\subcaptionboxWidthTwo{.45\textwidth}
        \def\filename{rq2_incr_i-ktc-vs-b-ktc_vs_node-count-alive_log}
        \def\figALabel{\cfgSmallSparse
            \label{fig:eval-rq-incr-iktc-vs-bktc-small-sparse}}
        \def\figBLabel{\cfgSmallDense
            \label{fig:eval-rq-incr-iktc-vs-bktc-small-dense}}
        \def\figCLabel{\cfgLargeSparse
            \label{fig:eval-rq-incr-iktc-vs-bktc-large-sparse}}
        \def\figDLabel{\cfgLargeDense
            \label{fig:eval-rq-incr-iktc-vs-bktc-large-dense}}
        \def\figA{\rootOfConfiguration{0099}{0750}/\filename}
        \def\figB{\rootOfConfiguration{0099}{0250}/\filename}
        \def\figC{\rootOfConfiguration{0999}{2000}/\filename}
        \def\figD{\rootOfConfiguration{0999}{1000}/\filename}
        \subcaptionbox%
        {\figALabel}[\subcaptionboxWidthTwo]%
        {\includegraphics[width=\subcaptionboxWidthTwo]{\figA}}
        \subcaptionbox%
        {\figBLabel}[\subcaptionboxWidthTwo]%
        {\includegraphics[width=\subcaptionboxWidthTwo]{\figB}}
        
        \subcaptionbox%
        {\figCLabel}[\subcaptionboxWidthTwo]%
        {\includegraphics[width=\subcaptionboxWidthTwo]{\figC}}
        \subcaptionbox%
        {\figDLabel}[\subcaptionboxWidthTwo]%
        {\includegraphics[width=\subcaptionboxWidthTwo]{\figD}}
    \end{center}
    \caption{Comparison of logarithmic \metricIkTCVsBkTCLong and \metricAliveNodeCountLong for \RQIncrementality}
    \label{fig:eval-rq-incr-iktc-vs-bktc}
\end{figure}

\paragraph{Discussion}

The plots in \Cref{fig:eval-rq-incr-scope-by-degree-vs-node-count} allow us to investigate in how far the number of alive nodes and the node degree correlate with the \metricDegreeNormalizedScopeLong of a \TC run.
In all plots, the \metricDegreeNormalizedScopeLong tends to decrease with increasing \metricAliveNodeCountLong.
This is a very positive result because it shows that the scope, when normalized by the node out-degree, is even positively affected by the size of the topology.

Finally, we discuss the benefit of applying \iktc in comparison to \bktc using  \Cref{fig:eval-rq-incr-iktc-vs-bktc}.
Pleasantly, for all configurations, the fraction of \LSMs required for performing \iktc in comparison to \bktc decreases significantly.
In all four configurations, \iktc outperformed \bktc in between \SI{89}{\percent} and \SI{99}{\percent} of all \TC runs per configuration, \idest, between \numprint{106} (for \cfgLargeMedium) and \numprint{118} (for \cfgLargeSparse) out of \evalParMeasurementsPerRun \TC runs.
%
%
%
To sum up, we may answer \RQIncrementality as follows.
\begin{itemize}
\item \iktc is theoretically unbounded and, therefore, not strictly incremental.
Still, from a practical point of view, we showed that the cost in terms of \LSMs is bounded and mainly determined by the average node out-degree, which is a local property.
Thus, \iktc behaves incrementally in the considered scenarios.
\item
\iktc was able to clearly outperform \bktc in our simulation study.
The benefit of invoking \iktc even increases with increasing number of alive nodes, \idest, with increasing topology size.
\end{itemize}
\subsection{\RQPerformanceLong}
\label{sec:eval-rq-performance}

\paragraph{Metrics}
To assess the performance of the integration of \eMoflon and the \Simonstrator, we measure and compare the CPU time of executing the \GT-based implementation (\idest, the \CE handling, the \TC algorithm, the consistency checks) and the remaining simulation.

\paragraph{Results}

\Cref{tab:eval-rq-performance-overview} provides an overview of the execution time measurements for each configuration, averaged over all \evalParSeedCount runs per configuration.
We list 
\begin{inparaenum}
\item 
the time for the algorithmic part (\emph{GT algo}), \idest, handling \CEs and invoking \TC, 
\item 
the time required for invoking the (optional) consistency checks (\emph{GT checks}), which were active during the whole simulation study to increase the confidence in the correctness of our implementation (see \Cref{sec:eval-rq-correctness}), and
\item 
the time for performing the remainder of the simulation (\emph{Sim.}).
\end{inparaenum}
The configurations are sorted increasinly by size and density of the initial topology.
\begin{table}[htbp]
    \centering
    \caption{Quantitative comparison of execution time for \RQPerformance. $\textbf{n+m}$: Topology size.}
    \label{tab:eval-rq-performance-overview}
\begin{tabular}{crrrrrr}
    \toprule
    \multicolumn{1}{c}{\textbf{ID}} & 
    \multicolumn{1}{c}{\textbf{n+m}} & 
    \multicolumn{1}{p{1.5cm}}{\centering \textbf{GT Algo [min]}} & 
    \multicolumn{1}{p{1.6cm}}{\centering \textbf{GT Checks [min]}} & 
    \multicolumn{1}{p{1.3cm}}{\centering \textbf{Sim. [min]}} &
    \multicolumn{1}{p{1.5cm}}{\centering \textbf{Total [min]}} & 
    \multicolumn{1}{p{1.5cm}}{\centering \textbf{GT Algo / Total [\%]}} \\
    \midrule
    \cfgSmallSparse & \numprint{912} & 0.01  & 0.00  & 0.76  & 0.77  & 0.9 \\
    \cfgSmallMedium & \numprint{1751} & 0.02  & 0.01  & 0.76  & 0.79  & 2.7 \\
    \cfgSmallDense & \numprint{5261} & 0.69  & 0.19  & 0.93  & 1.81  & 38.2 \\
    \midrule
    \cfgLargeSparse & \numprint{13683} & 0.13  & 0.08  & 96.98 & 97.18 & 0.1 \\
    \cfgLargeMedium & \numprint{23092} & 0.56  & 0.33  & 125.12 & 126.00 & 0.4 \\
    \cfgLargeDense & \numprint{49072} & 7.46  & 2.83  & 170.52 & 180.81 & 4.1 \\
    \bottomrule
\end{tabular}%
\end{table}%

\paragraph{Discussion}
The numbers in \Cref{tab:eval-rq-performance-overview} are impressive:
For three out of six configurations, the overhead of invoking the GT-based implementation is below \SI{1}{\percent} (\cfgSmallSparse, \cfgLargeSparse, \cfgSmallMedium).
In further two out of six configurations, the overhead is below \SI{5}{\percent} (\cfgSmallMedium, \cfgLargeDense).
Only in the densest configuration \cfgSmallDense with an initial average out-degree of \approximately 50, the overhead is remarkably high (\SI{38}{\percent}), which is acceptable because a configuration of such density reflects a border case.
When comparing the execution time of the algorithm and the consistency checking part of the \GT implementation, we recognize that between approx. \SI{25}{\percent} and \SI{33}{\percent} of the execution time of the \GT implementation is used for checking consistency.
For this reason, it may be reasonable to disable consistency checking as soon as the developer has gained sufficient confidence in her implementation of the \GT algorithm to gain even more performance.
Still, we experienced that checking consistency continuously during the simulation is a valuable feature because it aids in identifying even subtle bugs due to the huge number of checked topologies.
The fact that the execution time of the \GT algorithm for the largest configuration (\idest, \cfgLargeDense) makes up only \SI{4}{\percent} of the total execution time shows that our implementation allows simulations even of large topologies and that it is definitely not the bottleneck for such large scenarios.
To sum up, we may answer \RQPerformance as follows.
\begin{itemize}
\item
The fraction of CPU time that is requird for performing \TC and \CE greatly varies depending on the configuration.
For five out of six configurations, the fraction is below \SI{5}{\percent}.
\item
The execution time of \iktc is definitely reasonable for the considered topologies.
\end{itemize}

\subsection{\RQGeneralizabilityLong}
\label{sec:eval-rq-generalizability}

In the following, we qualitatively discuss the general applicability of the proposed approach.

\subsubsection{
Is meta-modeling suitable to specify all relevant properties of \WSN topologies?
}

Based on our experience, we claim that meta-modeling does \emph{not} impose any restrictions on the complexity of the specified topologies.
Meta-modeling concepts that are available in both theory and practice are, \eg, arbitrary attributes of nodes, links, and whole topologies (single-, set-, or list-valued, potentially using custom datatypes) as well as (multiple) inheritance.
More advanced algorithms that also take overlay constraints into account~\cite{KSSVMS14,SKSVSM15,SPSBM16} would require to model paths and constraints on paths, which is also easily possible using meta-modeling.

\subsubsection{
Are \GT rules suitable to specify \CEs and \TC operations?
}

We believe that \GT is suitable to model \CEs and \TC operations in general because graph-based models and \GT rules are natural means to represent the structure and modification of topologies.

\subsubsection{
Are graph constraints and the proposed constructive approach suitable to specify and ensure the preservation of consistency constraints and optimization goals of typical \TC algorithms?
}


Graph constraints as first introduced in \cite{HW95} and adopted in this paper are suitable to specify required (in case of positive graph constraints) and forbidden (in case of negative graph constraints) substructures in topologies of finite size but, thanks to \cite{DV14}, with arbitrary complex attribute constraints.
Still, our running example, \ktc, reveals that certain consistency constraints and optimization goals cannot be expressed using this type of graph constraints:
Physical, weak and strong connectivity are not expressible because of the required constraints over paths of links of arbitrary length.
Additional typical constraints are, \eg, that the output topology should be $k$-connected, a directed acyclic graph (DAG) or even a tree/forest.
In \cite{HabelRadke2010,Radke2010}, Habel and Radke present $HR^{*}$ constraints, a new type of graph constraints that allow to express path-related properties.
Fortunately, the constructive approach is applicable to $HR^{*}$ constraints in principle;
in future work, we will investigate the applicability of $HR^{*}$ constraints to specify, \eg, connectivity of topologies.
An additional feature of meta-models that was not required in this case study is type inheritance.
In \cite{Taentzer2005}, the authors describe how the constructive approach may be applied to meta-models with edge and node type inheritance.
The propose to flatten the original meta-model into an equivalent meta-model without type inheritance based on which the original constructive approach of \cite{HW95,DV14} may be applied.

From our experience, we may say that at least the optimization goals of a large class of \TC algorithms may be specified using graph constraints.
These \TC algorithms operate on local knowledge, \eg, only of neighbors that are at most two hops away, so-called local algorithms~\cite{JRS03}.
This restriction is necessary because sensor nodes tend to have limited working memory (to store the local topology) and because acquiring the neighborhood topology of a node is the more expensive in terms of bandwidth and execution the larger the required topology becomes~\cite{SPSBM16}.
Therefore, we may answer the question as follows:
Graph constraints are suitable to specify the optimization goals of a huge class of \TC algorithms, while crucial consistency constraints currently need to be proved manually.
In total, this is still a considerable advantage over the current state of the art, which only allows to verify or test that a \TC algorithm fulfills the desired consistency constraints and optimization goals a posteriori.

\subsubsection{Is a \TC developer able to apply the described methodology?}

If we want to establish a new development methodology for \TC algorithms, it is crucial that the target audience, \idest, the developers of \TC algorithms, is capable of applying it.

The first step of our methodology is to capture the relevant properties (\idest, classes with their attributes and associations) of the topologies that shall be processed by the \TC algorithm (see \Cref{sec:metamodeling}).
To successfully perform this step, a developer needs to have a working background in meta-modeling, which is realistic as several industrial success stories prove~\cite{Hermann2013}.

The second step of our methodology is the specification of optimization goals and consistency constraints in terms of graph constraints (see \Cref{sec:constraints}).
While this is certainly a non-trivial task, the adoption of a constraint specification languages (\eg, the Object Constraint Language~\cite{Warmer1998}) in practice indicates that this step should also be feasible for a developer.
In \Cref{sec:connectivity}, the constraint-based formulation of \ktc was used to prove that the output topology is connected whenever the input topology fulfills weak consistency.
A developer may need to prove this and similar properties at this step for other \TC algorithms.
However, when analyzing the \TC literature, we recognize that developers of \TC algorithms already perform this step today:
They prove properties such as planarity or connectivity of the topology, boundedness of node degree or termination based on their formal representation of the proposed \TC algorithm (\eg, in \cite{SWBM12,XH12}).
For this reason, we believe that it is reasonable for a \TC developer to succeed in proving properties using a graph constraint-based specification of the \TC algorithm.

The third step of our methodology is to describe relevant \CEs and \TC operations using \GT rules (see \Cref{sec:gratra}).
As described in~\cite{Hermann2013}, it is also a realistic assumption that developers acquire these skills.
Additionally, the \GT rules presented in \Cref{sec:gratra} will be reusable, maybe with modifications, in a large number of cases.
For instance, the \nodeRemovalRuleLong and the \linkRemovalRuleLong may probably be re-used out-of-the-box, while the \nodeAdditionRuleLong, the \linkAdditionRuleLong, and the \weightModificationRuleLong may have to be updated to reflect the particular node and link attributes of the considered class of topologies.
These considerations are backed by observations in the literature that \TC algorithms (not only in the \WSN domain) are typically composed of elementary topology modification operations (\eg, link removal)~\cite{Stein2016}.

The fourth step of our methodology is the refinement of the specified \GT rules to preserve the specified graph constraints (see \Cref{sec:refinement-methodology}).
While the concrete applications of the refinement algorithm as described in \Cref{sec:refinement-tc-inact,sec:refinement-tc-act,sec:refinement-ce-inact,sec:refinement-ce-inact} were carried out manually for this paper, the rule refinement algorithm may in principle be automated.
As discussed in \Cref{sec:eval-rq-correctness}, performing this step manually is error-prone and tedious.
Therefore, a powerful tool support is required to aid the developer in analyzing and filtering the resulting application conditions to come up with a set of rules that is readable and efficient to evaluate.

The fifth step of our methodology is the derivation of restoration operations for unrestrictable \GT rules that were refined with additional application conditions during the refinement step (see \Cref{sec:enforcing-applicability}).
In this paper, we do not provide a precise algorithm for automatically deriving restoration operations from a given application condition.
Yet, such an algorithm may exist at least for graph constraints as used in our approach.
In this case, tool support for automating should be developed as for the refinement step.

The sixth step of our methodology is the simulation-based evaluation of the specified \TC algorithm.
The existing tool support, consisting of a model-based adapter between \eMoflon and the \Simonstrator, may be re-used if the \Simonstrator is deemed appropriate by the developer to carry out the evaluation.
The only required modification is an update of the synchronization objects in the \Simonstrator and the topology model.
If a different network simulator or a different \GT tool is to be used to carry out the evaluation, more development effort is required.
Still, the existing tool support may serve as a template for implementing the desired adapter between the \GT tool and the network simulator of the developer's choice.

In summary, the meta-modeling part, the specification of \GT rules and graph constraints, and the adaptation of the integration between the \GT tool and the simulator can realistically be carried out by a developer of \TC algorithms.
The refinement algorithm and the derivation of restoration operations, which form the core parts of our methodology, still require advanced tooling to support the \TC developer.

\subsection{Discussion of Threats to Validity and Summary of Evaluation}
\label{sec:eval-threats}
We considered threats to \emph{external validity}~\cite{Cook1979}, \idest, the general applicability of our approach, as the major threat to validity.
While this paper discussed the construction of one particular \TC algorithm in detail due to space limitations, we mitigated this threat by dedicating \RQGeneralizability to it (\Cref{sec:eval-rq-generalizability}).
The performance evaluation in \Cref{sec:eval-rq-performance} mitigates this threat from a practical point of view:
Our proof-of-concept implementation of the integration of \eMoflon and the \Simonstrator is applicable to \WSN topologies of realistic size and does not impose an unreasonable overhead.

We considered threats to \emph{internal validity}, \idest, the soundness of the presented results, when discussing \RQCorrectnessLong.
We analyzed in detail the possible sources of specification and implementation problems (\eg, the manual creation of restoration operations) and applied measures to minimize or mitigate them (\eg, by continuously checking consistency during the numerous iterations in the simulation study).

In \Cref{tab:answers-to-rqs}, we concisely summarize the results of this evaluation.

\begin{table}
    \def\rowskip{\\[6pt]}
\begin{center}
\caption{Summary of answers to the research questions}
\label{tab:answers-to-rqs}
\begin{tabular}{p{2.4cm}cp{9cm}}
    
\toprule

\textbf{RQ} & \textbf{Sec.} & \textbf{Result}\\

\midrule

\textbf{\RQCorrectnessLong} & \ref{sec:eval-rq-correctness} & 
We carefully analyzed potential sources of errors during the steps of the methodology and described the applied mitigation strategies.
All of the \numprint{21 600} consistency checks during the simulation study were successful, which fosters the confidence in the correctness of our implementation.
\rowskip
\textbf{\RQIncrementalityLong} & \ref{sec:eval-rq-incrementality} & 
The execution of \CE handlers is unbounded and, therefore, the developed algorithm is theoretically not strictly incremental.
However, the simulation showed that the scope of \CEs is limited by a  constant that is independent of the number of alive nodes.
\rowskip
\textbf{\RQPerformanceLong}& \ref{sec:eval-rq-performance} & 
In the smallest and densest configuration, \cfgSmallDense, \GT consumed \SI{38}{\percent} of the execution time.
In all other five of six configurations, \GT consumed less than \SI{5}{\percent} of the execution time.
\rowskip
\textbf{\RQGeneralizabilityLong} & \ref{sec:eval-rq-generalizability} & 
Meta-modeling and \GT rules are suitable to specify arbitrary \WSN topologies and their modifications by \CEs and \TC algorithms.
Typical optimization goals of the large class of local \TC algorithms can naturally be modeled using graph constraints;
certain consistency constraints such as connectivity may not be expressed using graph constraints.
A developer of \TC algorithms should be capable of applying our methodology in practice if the existing tooling is extended appropriately.
\rowskip
\bottomrule

\end{tabular}
\end{center}
\end{table}

\section{Related Work}
\label{sec:related-work}

In this section, we survey related work from the fields of modeling and checking consistency properties and formal methods as well as model-based software engineering in the communication network engineering domain.

\paragraph{Specifying, checking and enforcing consistency properties}
Model checking \cite{RSV04} is an analysis technique used to
verify particular properties of a system a posteriori, \idest, after a specification of the desired properties and an implementation of the regarded system has been constructed.
If an abstract symbolic problem specification~\cite{Mil00} is missing, model checking tools are often limited to a finite model size.
The approach in this paper constructively integrates constraints a priori at design time into an already executable formal specification of a TC algorithm rather than checking whether specified properties are fulfilled a posteriori (either at design time or at runtime). 
This correct-by-construction approach (\eg,~\cite{BBBCJNS11,DB05,BFMSFSW05}) has the positive side-effect that we can prove that topologies of any size are consistent without any needs to find a finite abstraction of the infinite set of all possible topologies.

In \cite{HW95}, graphical consistency constraints, which express the fact that particular combinations of nodes and edges should be present in or absent from a graph, are translated into application conditions of \GT{} rules.
This technique has been generalized to the framework of HLR categories~\cite{EEPT06} and extended to cope with attributes~\cite{DV14}.
The basic idea is to translate consistency conditions, characterizing consistent graphs, into post-conditions of \GT{} rules, which may then be transformed into application conditions of the rules.
This paper applies and extends this generic methodology for the practical and 
complex application scenario of developing \TC{} algorithms.
Our extension consists of an additional, domain-specific step that translates some of the derived application conditions into pre- and post-processing rules that ensure that the corresponding application conditions are always fulfilled and may, thus, be removed from the rule.
These new domain-specific relaxation and re-enforcement techniques for constraints and application conditions are needed to bypass a critical---and from a practical point of view often unrealistic assumption---made in~\cite{HW95}:
A GT-rule with additional application conditions preserves some graph constraints if the application conditions may block any harmful execution of this rule, and if the processed graph never contains any constraint violations.
Both assumptions are violated in our scenario, where \CEs{} may add, modify, and remove topology elements without any restrictions.

Consistency properties may be expressed using a variety of formal frameworks.
In this paper, we use graphical consistency constraints~\cite{HW95}, which are suitable to specify properties based on a fixed number of graph elements, \eg, the triangle of links in \ktc{}.
However, certain consistency properties, such as connectivity, may not be expressed in this framework.
For such properties, more expressive formal frameworks such as (counting) second-order monadic logic~\cite{HW95, Rad13} or nested graph constraints~\cite{HP05,Fli15} exist.
In this paper, connectivity of topologies is the only consistency property that cannot be expressed using graphical consistency constraints;
therefore, we manually proved that this property is fulfilled.

In~\cite{HHS02}, model transformations are categorized based on the way that their application affects the consistency of a model.
In their categorization, a transformation may basically either preserve or establish consistency.
Consistency is preserved by a transformation if models that were consistent before the transformation are consistent afterwards as well.
Consistency is established by a transformation if models that might not be consistent before the transformation are consistent afterwards.
In this paper, preservation of weak consistency is achieved because we assume that the initial network topology is weakly consistent and because the refined \TC{} and \CE{} rules preserve the \TC{} graph constraints that imply weak consistency.

A prominent alternative language for specifying constraints is the Object Constraint Language (OCL)~\cite{Warmer2003} and its visual extension \textsf{VisualOCL}~\cite{Bottoni2001}.
Current research focuses on translating OCL constraints into graph constraints~\cite{Radke2015,Bergmann2014}.
In Story Decision Diagrams~\cite{Klein2007}, a complex (graph) constraint is formalized using a directed acyclic graph data structure that resembles Story Diagrams, which have been employed in this paper to define the control flow of \TC operations.
While the graph constraints considered in this paper could easily be expressed using Story Decisions Diagrams, to the best of our knowledge, no work exists that allows to transform postconditions formulated as Story Decision Diagram into preconditions of \GT rules.

\paragraph{Formal methods in communication network engineering}
The need for supporting or even driving the development of novel network communication protocols using formal methods has been recognized already in the 1980s~\cite{BS80}, and even today, formal methods are continuously gaining importance in the networking engineering community~\cite{QH15}.

Formal analysis of consistency properties of network protocols has already revealed special cases in which the implemented algorithms violate crucial (topological) constraints~\cite{Zave2008,Zave2012,LMW12,HI13}.
In~\cite{Zave2008}, the SPIN model checker~\cite{Hol04} is used to analyze the Session Initiation Protocol (SIP) for specification flaws.
In~\cite{Zave2012}, the lightweight modeling language Alloy~\cite{Jac02} is used to find specification flaws in the famous peer-to-peer protocol Chord~\cite{SMKKB01}.
In~\cite{HI13}, statistical model checking is applied to analyze AODV and DYMO, two famous routing protocols for wireless mesh networks.
In~\cite{KMH08}, model checking is applied to detect bugs and to point at their causes in the \TC{} algorithm LMST, leading to an improved implementation thereof.
In contrast to the aforementioned contributions, this paper focuses on preserving consistency properties in a constructive way rather than analyzing existing protocols a posteriori.

In~\cite{Papadopoulos2016}, a three-stage development process of \WSN algorithms---consisting of formalization, simulation evaluation, and testbed evaluation---is emphasized as an important means of verifying \WSN algorithms.
While the authors focus on the role of testbed simulations (compared to simulation), the focus of this paper is to reduce the gap between the formalization and the simulation part.

In~\cite{Nareyek2001}, Nareyek highlights that avoiding consistency checks at runtime by constructively integrating constraints into the preconditions of rules of decision agents may also increase the performance.

\paragraph{Visual languages and models in the context of telecommunications}

Visual and model-based languages techniques have shown to be suitable to describe~\cite{TGM00} and to construct~\cite{JRSST09} optimization algorithms for communication networks.
Message sequence charts \cite{DRR04}, for instance, have already been used to document protocol standards of communication systems; 
the same is true for object-oriented modeling languages like ROOM~\cite{SGW94}, which was one of the first (truly) object-oriented modeling tools successfully used by the telecommunication industry.
Some examples of distributed system visual programming languages are G-Net~\cite{CDC93,NKMD96}, Prograph~\cite{CGP89}, DVispatch~\cite{MH98}, which rely on different programming paradigms and formalisms (such as Petri-Nets~\cite{DA94} and Flow-Graphs~\cite{GW06}).
In~\cite{KSSVMS14}, an approach for the \GT{}-based rapid prototyping of variants of \TC{} algorithms is proposed.
In a case study, the \GT{} tool eMoflon~\cite{LAS14} is used to derive two batch variants---\emph{l-kTC} and \emph{g-kTC}---of the \TC{} algorithm \ktc{}~\cite{SWBM12,SZS15}, which filter the selected links of the marking step of \TC{} based on additional, application-specific constraints.
In~\cite{SKSVSM15}, the batch algorithms l-kTC and g-kTC are evaluated in detail in a simulation study using the \textsf{Simonstrator} platform~\cite{RSRS15}.
In this paper, we use the same tooling as in \cite{KSSVMS14,SKSVSM15} for the simulation study, but we focus on deriving incremental \TC{} algorithms that are correct by construction rather than introducing complex additional consistency constraints.

In the following, we list further approaches to the graph-based and model-based construction of \TC algorithms.
In~\cite{BMPPT15}, an extension to \textsf{Agilla}~\cite{FRL09}, a model-driven middleware for developing \WSN{} applications, is proposed that introduces an instruction set for reading the current remaining energy of sensor nodes.
In~\cite{RDBPP15}, a model-driven development process is proposed that allows to design \TC{} algorithms based on meta-models and state charts and then to generate platform-specific code from this representation.
The focus of their approach lies on a complete tool chain from a platform-independent model to the platform-specific, compilable code;
correctness properties of the developed algorithms have to be ensured manually.
In~\cite{WBSO07}, a model-driven tool for developing \WSN{} applications called \textsf{Matilda} is presented.
Based on a custom UML profile~\cite{OMG06} for sensor networks, the authors describe the domain structure using class diagrams, concrete network instances using instance diagrams, and the behavior of the sensor nodes using sequence diagrams.
The authors propose to extend their development process with techniques for checking correctness.
In~\cite{TSFH15}, a model-based approach for developing \WSN{} algorithms in a \emph{stepwise} manner is proposed, which separates the concerns of designing the \WSN{} application logic from the handling of network-related concerns.
The focus of their approach is on decomposing the development process of \WSN{} applications.
In~\cite{BKLLXZ04}, the modeling and simulation framework \textsf{VisualSense} is presented, which operates on top of the modeling framework for heterogeneous devices \textsf{Ptolemy II}~\cite{EJLJXLNSY03}.
The platform provides a visual modeling language for specifying the data flow while at the same time abstracting from the concrete topology of the \WSN{}.
In~\cite{SFKS08}, the model-based development environment \textsf{ScatterClipse} is presented, which has been developed to provide an integrated development workflow for sensor applications that run on the \textsf{ScatterWeb}~\cite{SLRWV05} \WSN{} platform.
\textsf{ScatterClipse} adopts the Model-Driven Software Development workflow of the OMG~\cite{OMG14} and allows the user to specify rules that validate the syntactic and semantic correctness of the developed models.
However, no means for specifying consistency constraints of the network topology exists.
In~\cite{LC02}, the virtual-machine environment \textsf{Maté} is presented, which provides a high-level application programming interface for \WSN algorithms as well as distribution and replication mechanisms that allow for an efficient deployment of \WSN algorithms into a hardware testbed.
The focus of~\cite{LC02} lies on the efficient testing of \WSN algorithms in a virtual environment.
This list of approaches for the model-driven development of \TC{} algorithms shows that most frameworks are centered around 
\begin{inparaenum}
\item providing a thorough development process or 
\item separating the concerns of the domain engineer from those of the networking engineer
\end{inparaenum}
In contrast to this paper, none of these approaches considers the provable correctness of the developed algorithms as a core concern.

In~\cite{SW10}, MechatronicUML is presented as a comprehensive framework for specifying and formally analyzing mechatronic systems.
The structure of systems is described by means of component diagrams similar to UML, the behavior of systems is described by means of state charts, and architectural reconfigurations are described by means of \GT rules using the Fujaba real-time tool suite~\cite{BGHST05}, which is a branch of the Fujaba \GT tool~\cite{NNZ00}.
Based on these formal foundations, the authors use the verification techniques model checking~\cite{Ren04} and proof by induction~\cite{BBGKS06} to verify safety properties of the modeled systems.
While the approach proposed in~\cite{SW10} presents a mature and powerful environment for developing and verifying mechatronic systems, safety properties are, again, checked a posteriori, whereas in our approach, safety properties are integrated into the development process.

A series of publications by Scheideler and others use graph theory for constructing provably correct and efficient algorithms for adapting topologies of \WSNs{} and other applications such as distributed search overlays (see, for instance, \cite{JRS03,JRSST09,DPV11,AHLSS13,JRSST14}).
While their approach ensures that consistency properties and optimization goals are fulfilled by the derived algorithm, the structured transformation into an executable implementation is not so obvious.
In this paper, in contrast, we formalize the possible modifications of the network topology as \GT rules and we formalize the consistency properties and optimization goals as graph constraints.
On the one hand, this allows to prove the correctness of the refined \GT{} rules in terms of the specified graph constraints.
On the other hand, the derived \GT{} rules can be directly used to implement and evaluate the \TC{} algorithm.

\section{Conclusion}
\label{sec:conclusion}

This section concludes the paper and highlights possible directions of future research.

\subsection{Summary}
In this paper, we describe a model-based methodology for developing \TC algorithms, which select a subset of the links of a communication network that fulfills specified consistency properties and optimization goals.
We use \GT rules to describe possible \TC operations, which are applied by a \TC algorithm to mark selected links, as well as context events, which specify modifications of the topology by the environment.
We specify consistency and optimization properties of output topologies of \TC algorithms using graph constraints, which are then used to refine the \TC and context event rules in a way that applying the rules preserves the graph constraints.
We apply the constructive approach to the \TC algorithm \ktc{}.
The discussion of a simulation-based evaluation shows that the constructed \TC algorithms may be directly evaluated in a number of realistic simulation scenarios.
We showed that 
\begin{inparaenum}
\item 
the implementation passed all \numprint{21 600} consistency checks, 
\item 
the derived algorithm operates incrementally in practice,
\item 
the performance of the developed integration between the \GT tool \eMoflon and the network simulation platform \Simonstrator is very satisfactory, and
\item
the proposed methodology is applicable to a large class of \TC algorithms.
\end{inparaenum}

\subsection{Outlook}

The results of this paper open up various possible lines of research.
A first line of research addresses the partial automation of our constructive methodology:
The refinement procedure described and applied in \Cref{sec:rule-refinement} is formally founded and the resulting rules are \enquote{correct by construction}.
Still, performing the refinement procedure manually is error-prone and should be automated as far as possible.
We showed that the number of additional application conditions may be reduced considerably if we assume that particular constraints (\eg, the structural constraints) hold prior to each execution of a \TC rule.
It remains to investigate if it is possible to automatically derive repair operations from a set of application conditions.

A second line of research addresses the simulation study.
Simulating a distributed execution of \GT based on a to-be-designed \GT protocol is desirable to be able to investigate concurrency and timing issues already inside the simulator.

The specification of the \TC algorithm in this paper is suitable to operate not only in a centralized way on the whole topology, but also in a distributed way on the local view of each node, which typically consists of all nodes that are two or three hops away from the node.
Finally, we have to overcome the limitation that the \TC algorithm is executed offline, \idest, outside the running simulation, which is currently halted during the execution of the \TC algorithm.
A truly distributed, asynchronous implementation will open up challenging research questions, e.g., about conflicting rule applications and asynchronicity.

A third line of research addresses the evaluation of the developed \TC algorithms in a hardware testbed.
This step is especially challenging because wireless nodes are usually programmed in C/C++ and typically have strictly limited resources in terms of code size and memory.
The \GT{} tool eMoflon, which we applied for the simulation study, is currently only able to generate Java code.
We are currently performing first experiments in generating C code for the \WSN{} platform Contiki~\cite{DGV04}.

\section*{Acknowledgments}
This work has been funded by the German Research Foundation (DFG) as part of projects A01 and C01 within the Collaborative Research Center (CRC) 1053 -- MAKI.
The authors would like to thank Lukas Neumann for his considerable support during the creation of this paper.

\section*{References}
\bibliographystyle{elsarticle-num-names}      
\bibliography{jvlc2015}   

\end{document}